\crefname{equation}{}{}
\newtheorem{theorem}{Theorem}
\newtheorem{lemma}[theorem]{Lemma}
\newtheorem{remark}[theorem]{Remark}
\newfont{\smallmathfont}{cmmib10 at 8pt}
\newcommand{\ab}{{\mathbf a}}
\newcommand{\eb}{{\mathbf e}}
\newcommand{\gb}{{\mathbf g}}
\newcommand{\hb}{{\mathbf h}}
\newcommand{\lb}{{\mathbf l}}
\newcommand{\ub}{{\mathbf u}}
\newcommand{\xb}{{\mathbf x}}
\newcommand{\yb}{{\mathbf y}}
\newcommand{\zb}{{\mathbf z}}
\newcommand{\Hc}{\mathcal{H}}
\newcommand{\Ac}{\mathcal{A}}
\newcommand{\Lc}{\mathcal{L}}
\newcommand{\Tc}{\mathcal{T}}
\newcommand{\Yc}{\mathcal{Y}}
\newcommand{\Vc}{\mathcal{V}}
\newcommand{\Cd}{{\mathbb C}}
\newcommand{\Pc}{{{\mathcal P}}}
\newcommand{\Zd}{\mathbb{Z}}
\newcommand{\hank}{\mathscr{H}}
\newcommand{\conv}{\mathscr{C}}
\newcommand{\zerob}{\mathbf{0}}
\newcommand{\Null}{\textsc{Nul}}
\newcommand{\Ran}{\textsc{Ran}}
\newcommand{\rank}{\textsc{rank}}
\newcommand{\id}{\mathrm{id}}
\newcommand{\beq}{\begin{equation}}
\newcommand{\eeq}{\end{equation}}
\newcommand{\beqa}{\begin{eqnarray}}
\newcommand{\eeqa}{\end{eqnarray}}
\newcommand{\Fc}{{\mathcal F}}
\newcommand{\norm}[1]{\left\lVert #1 \right\rVert}
\begin{document}

\title{Compressive Sampling using  Annihilating Filter-based Low-Rank Interpolation}
\author{ Jong Chul Ye {\em Senior Member, IEEE}, Jong Min Kim,   Kyong Hwan Jin, and Kiryung Lee }

 \maketitle

\begin{abstract}
\setstretch{1}
While the recent theory of compressed sensing provides an opportunity to overcome the Nyquist limit in recovering sparse signals, a solution approach usually  takes a form of inverse problem of the unknown signal,
 which  is crucially dependent on specific signal representation.
In this paper, we propose a drastically different two-step Fourier compressive sampling framework in continuous domain  that can be implemented as a measurement domain interpolation, after which a signal reconstruction can be done using classical analytic reconstruction methods.
The main idea is  originated from the fundamental duality between the sparsity in the primary space and the low-rankness of a structured matrix in the spectral domain, which shows that a low-rank interpolator  in the spectral domain  can enjoy all the benefit of sparse recovery     with performance guarantees.
Most notably,  
 the proposed low-rank interpolation approach
can be regarded as a  generalization of  recent  spectral compressed sensing to recover
  large class of finite rate of innovations (FRI) signals at near optimal sampling rate.
Moreover, for the case of cardinal representation,  we can show that  the proposed low-rank interpolation will benefit from inherent regularization and the optimal
incoherence parameter.
Using  the powerful dual certificates and golfing scheme, we show that the new framework still achieves the near-optimal sampling rate for general class of  FRI signal recovery,
and    the sampling rate can be further reduced   for the class of
cardinal splines.
Numerical results using various type of FRI signals confirmed that the proposed low-rank interpolation approach has significant better phase transition than the conventional CS approaches.
\end{abstract}

\begin{IEEEkeywords}
Compressed sensing,  signals of finite rate of innovations,  spectral compressed sensing,    low rank matrix completion,    dual certificates,  golfing scheme
\end{IEEEkeywords}

\noindent Correspondence to:\\
Jong Chul Ye, Ph.D\\
Professor\\
Department of Bio and Brain Engineering\\
Korea Adv. Inst. of Science and Technology (KAIST)\\
373-1 Guseong-Dong, Yuseong-Gu, Daejon 305-701, Korea\\
Tel: +82-42-350-4320\\
Email: jong.ye@kaist.ac.kr



\clearpage

\section{Introduction}

{C}{ompressed} sensing or compressive sampling (CS) theory \cite{Do06,CaRoTa06,CaTa05}
addresses the accurate recovery of unknown sparse signals from
underdetermined linear measurements. 
%
%
%
In particular, a Fourier CS problem, which  recovers unknown signals from sub-sampled Fourier measurements,  has many
important applications in imaging applications such as magnetic resonance imaging (MRI),  X-ray computed tomography (CT), optics, and so on.
Moreover,  this problem  is closely related to the classical harmonic retrieval
problem that
computes the amplitudes and  frequencies  at {off the grid} locations of 
a superposition of complex sinusoids from their {consecutive} or { bunched} Fourier samples.
Harmonic retrieval can be solved by various methods including Prony's method\cite{prony1795essai}, and matrix pencil algorithm \cite{hua1990matrix}.
These methods were proven to succeed at the minimal sample rate in the noiseless case, 
because it satisfies an algebraic condition called the full spark (or full Kruskal rank) condition \cite{kruskal1977three} that
 guarantees the unique identification of the unknown signal. 
Typically, when operating at the critical sample rate, these method are not robust to perturbations in the measurements
due to the large condition number.

Accordingly,  to facilitate robust reconstruction of off the grid spectral components,
CS algorithms from non-consecutively sub-sampled  Fourier measurements are required.
The scheme is called {\em spectral compressed sensing}, which is also known as {\em compressed sensing off the grid}, when the underlying signal is composed of Diracs.
Indeed, 
 this has been developed with a close link to the recent  { super-resolution} researches  \cite{duarte2013spectral,candes2014towards,tang2013compressed,chen2014robust}. 
For example, Candes and/or Fernandez-Granda \cite{candes2014towards,candes2013super} 
showed that if the minimum distance of the Diracs is bigger than $2/f_c$ where $f_c$ denotes the cut-off frequency of the measured spectrum,
then a simple convex optimization can solve the locations of the Diracs.
Under the same minimum distance condition, Tang \cite{tang2013compressed,tang2015near}
proposed an atomic norm minimization approach for the recovery of Diracs from random spatial  and Fourier samples, respectively.
Unlike these direct signal recovery methods,
Chen and Chi  \cite{chen2014robust} proposed a two-step approach consisting of interpolation followed by a matrix pencil algorithms.
In addition, they provided performance guarantees at near optimal sample complexity (up to a logarithmic factor). 
One of the main limitations of these spectral compressed sensing approaches is, however,  that the unknown signal  is restricted to a stream of  Diracs.
The approach by Chen and Chi \cite{chen2014robust} is indeed a special case  of the proposed approach, but
they did not realize its potential  for recovering much wider  class of signals.

Note that the stream of Diracs is a special  instance of a signal model called {\em the  signals with the finite rate of innovation (FRI)}  \cite{vetterli2002sampling,dragotti2007sampling,maravic2005sampling}.
Originally  proposed by Vetterli et al \cite{vetterli2002sampling},
the class of FRI signals  includes a stream of Diracs, a stream of differentiated Diracs, non-uniform splines, 
piecewise smooth polynomials, and so on.
Vetterli et al \cite{vetterli2002sampling,dragotti2007sampling,maravic2005sampling} proposed {\em time-domain} sampling schemes of  these FRI signals that operate at the rate of innovation 
with a provable algebraic guarantee in the noise-free scenario.
Their reconstruction scheme estimates an annihilating filter that cancels the 
Fourier series coefficients of a FRI signal at consecutive low-frequencies. 
However, due to the time domain data acquisition, the equivalent Fourier
 domain measurements 
are restricted to 
a bunched sampling pattern similar to the classical harmonic retrieval problems. 

Therefore, one of the main aims of this paper is to  generalize the  scheme by Verterli et al \cite{vetterli2002sampling,dragotti2007sampling,maravic2005sampling} to 
address 
Fourier CS problems  that recover general class of FRI signals from  irregularly subsampled   Fourier measurements.
Notably, 
we prove that the only required change is 
an additional Fourier domain interpolation step
that estimates missing Fourier measurements.
More specifically,  for general class FRI signals introduced in  \cite{vetterli2002sampling,dragotti2007sampling,maravic2005sampling}, 
we show that there always exists a low-rank Hankel structured matrix associated with the corresponding annihilating filter.
Accordingly, their missing spectral elements can be interpolated using a low-rank Hankel matrix completion algorithm.
Once a set of Fourier measurements at consecutive frequencies are interpolated, 
a FRI signal can be reconstructed using conventional methods 
including Prony's method and matrix pencil algorithms  as done in  \cite{vetterli2002sampling,dragotti2007sampling,maravic2005sampling}.
Most notably, we show that the proposed Fourier CS of FRI signals 
operates at a near optimal rate (up to a logarithmic factor) with provable performance guarantee. 
Additionally, thanks to the inherent redundancies introduced by CS sampling scheme, 
the subsequent step of retrieving a FRI signal becomes much more stable.

While a similar  low-rank Hankel matrix completion approach was used by Chen and Chi \cite{chen2014robust},   there are several important differences.
First, the low-rankness of the Hankel matrix in  \cite{chen2014robust} was shown based on the standard Vandermonde decomposition, which is true only  when the
underlying FRI signal is a stream of Diracs.  
Accordingly, in case  of  differentiated Diracs,  
 the theoretical tools in \cite{chen2014robust} cannot be used.
Second, when the underlying signal 
 can be converted to a stream of Diracs or differentiated Diracs by applying 
  a linear transform 
that acts as a diagonal operator (i.e., element-wise multiplication) in the Fourier domain, 
   we can still construct
a low rank Hankel matrix from the {\em weighted} Fourier measurements, whose weights are determined by the spectrum of the linear operator. 
For example, a total variation (TV)-sparse signal is a stream of Diracs after the differentiation, and piecewise smooth polynomials becomes a stream of differentiated Diracs by applying a differential operator.
Finally,
the advantage of the proposed approach becomes more evident when we  model the unknown signal using cardinal L-splines \cite{unser2010introduction}.
In cardinal L-splines,  the discontinuities  occur only on an integer grid, 
 which is a reasonable model to acquire signals of high but finite resolution. 
Then,  we can  show that the discretization using cardinal splines makes the reconstruction significantly more stable in the existence of noise to measurements,
and the logarithmic factor as well as the incoherence parameter for the performance guarantees can be further improved.

\begin{figure}[!htb]
    \centering
 	\begin{minipage}[b]{0.5\linewidth}
	\centerline{\includegraphics[width=11cm]{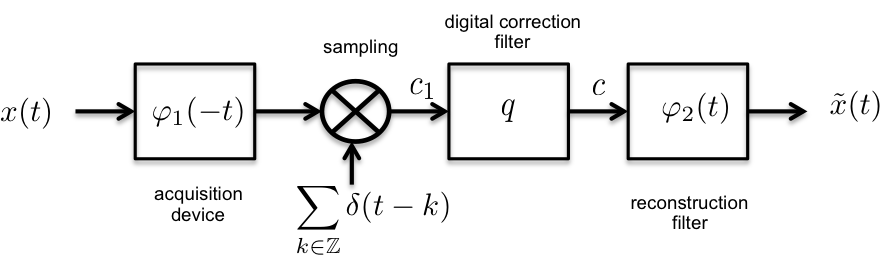}}
	\centerline{\mbox{(a)}}
		   \vspace{0.3cm}
	\centerline{\includegraphics[width=11cm]{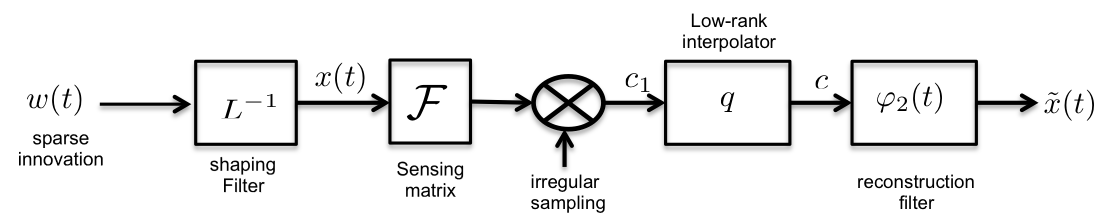}}
		\centerline{\mbox{(b)}}
	\end{minipage}
	\caption{Comparison with various sampling schemes. (a) Generalized sampling  \cite{unser2000sampling,unser1994general}: here,  a continuous input signal is filtered through a acquisition device, after which
	uniform sampling is performed. The goal of sampling is to impose consistency condition such that  if the reconstructed signal is used as an input
	to the acquisition device, it can generate the same discrete sequence
	$\{c_1\}$. This can be taken care of by the digital correction filter $q$.
		(b) Proposed sampling scheme:  Here, CS step is replaced by a discrete low-rank interpolator, and the final reconstruction
	is obtained using the reconstruction filter from fully sampled data.
		}
	\label{fig:sampling}
\end{figure}

It is important to note  that the  proposed low-rank interpolation approach is  different from classical compressed sensing approaches which  regard a sampling problem  as an inverse problem and whose focus is to directly recover the unknown  signal.
Rather,  the proposed approach is more closely related to the classical sampling theory, where
signal sampling step is  decoupled from  a signal recovery algorithm.
For example, in the sampling theory for signals in the shift-invariant spaces \cite{unser2000sampling,unser1994general}, the nature of the signal sampling  can
be fully taken care of  as a {\em digital correction filter}, after which signal recovery is performed by convolution with a reconstruction filter (see Fig.~\ref{fig:sampling}(a)).
Similarly, by introducing a {\em low rank interpolator},  the proposed  scheme in Fig.~\ref{fig:sampling}(b)  fully decouples the signal recovery from sampling step as a separate layer that can be optimized independently.
This is because the same low-rank interpolator will successfully complete missing measurements,
regardless of whether the unknown signal is either a stream of Diracs or a stream of differentiated Diracs.
In the subsquent step, analytic reconstruction methods such as
Prony's method and matrix pencil algorithms can identify the signal model  as done in  \cite{vetterli2002sampling,dragotti2007sampling,maravic2005sampling}. 

The  proposed two-layer approach   composed of Fourier domain interpolation
and the analytic reconstruction is very useful in real-world applications, because 
the low-rank interpolator can be added as a digital correction filter to existing systems, 
where the second step is already implemented. 
Moreover, in many biomedical imaging problems such as magnetic resonance imaging (MRI) or X-ray computed tomography (CT),
an accurate interpolation to fully sampled Fourier data gives an important advantage of utilising fully established  mathematical theory of analytic reconstruction.
In addition,  classical preprocessing techniques for artifact removal have been developed by assuming fully sampled measurements, so these steps can be
readily combined with the proposed low-rank interpolation approaches.
The superior advantages of the proposed  scheme  have been demonstrated in various biomedical imaging  and image processing applications such as 
compressed sensing MRI \cite{jin2015general,Lee2015MRPM},  MR artifact correction\cite{Lee2015fMRI,Jin2016MRM},  image inpainting \cite{jin2015annihilating},  super-resolution microscopy \cite{min2015fast}, image denoising \cite{jin2015sparse+}, and so on, 
which clearly confirm the practicality of the new theory.

%

Nonetheless, it is remarkable that the proposed two-layer approach using a low-rank interpolation achieves 
near optimal sample rate 
while universally applying to different signal models of the same order (e.g., stream of Diracs and stream of differentiated Diracs). 
Moreover, it may look mysterious that no explicit form of the minimum separation distance as required in
Fernandez-Granda \cite{candes2014towards,candes2013super} and Tang \cite{tang2013compressed,tang2015near} is not shown in the performance guarantee.
However, the proposed method is not free of limitations. Specifically, we will show that the incoherence parameter in our performance guarantees is 
dependent upon the type of unknown signals as well as the minimum separation between the successive spikes.
The similarity and differences of our results from
the existing theory  \cite{candes2014towards,candes2013super,tang2013compressed,tang2015near}
and the origin of the differences will be also discussed.

This paper consists of followings. Section~\ref{sec:main} first  discusses the main results
that relate an annihilating filter  and a low-rank Hankel structured matrix, and provides the performance guarantees 
of low-rank structured matrix completion, 
 which will be used throughout the paper.
 Section~\ref{sec:fri} then discusses the proposed low-rank interpolation theory for recovery of FRI signals,
 which is followed by the low rank interpolation for the case of cardinal L-splines in Section~\ref{sec:cardinal}.
 Section~\ref{sec:algorithm} explains algorithmic implementation.
Numerical results are then provided in Section~\ref{sec:result}, which is followed by conclusion in Section~\ref{sec:conclusion}, respectively.

\section{Main Results}
\label{sec:main}

\subsection{Notations }

A Hankel structured matrix 
generated from an $n$-dimensional vector
 $\xb =[x[0],\cdots, x[n-1]]^T \in \Cd^n$ has the following structure:
    \begin{eqnarray}
\hank( \xb) &=& \left[
        \begin{array}{cccc}
       x[0] &  x[1] & \cdots   &  x[d-1]   \\
      x[1]  &  x[2] & \cdots &    x[d] \\
         \vdots    & \vdots     &  \ddots    & \vdots    \\
       x[n-d] &  [n-d+1] & \cdots &  x[n-1]\\
        \end{array}
    \right] \in \Cd^{(n-d+1)\times d} \ .
    \end{eqnarray}
where $ d$ is called a matrix pencil parameter. 
We denote the space of this type of Hankel structure matrices as $\Hc(n,d)$.

An $n\times d$ {\em wrap-around}  Hankel  matrix 
generated from an $n$-dimensional vector
 $\ub=[u[0],\cdots, u[n-1]]^T \in \Cd^n$ is defined as:
  \begin{eqnarray} \label{eq:U}
\hank_c(  \ub) &=& \left[
        \begin{array}{cccc}
        u[0]  &   u[1] & \cdots   &   u[d-1]   \\
       u[1]  &   u[2] & \cdots &     u[d] \\
         \vdots    & \vdots     &  \ddots    & \vdots    \\
        u[n-d]  &   u[n-d+1] & \cdots &   u[n-1]\\ \hline 
        u[n-d+1]  &   u[n-d+2] & \cdots &   u[0] \\
           \vdots    & \vdots     &  \ddots    & \vdots    \\
              u[n-1]  &   u[0] & \cdots &   u[d-2] \\
        \end{array}
    \right] \in \Cd^{n\times d}  \  .
    \end{eqnarray}
 Note that $n\times d$ wrap-around Hankel matrix can be considered as a Hankel matrix of $(d-1)$-element augumented vector
from  $\ub \in \Cd^n$  with the periodic boundary expansion:
 $$\tilde \ub =\left[\ub^T~~ \underbrace{ u[0] ~u[1] ~ \cdots ~u[d-2]}_{(d-1)} \right]^T \in  \Cd^{n+d-1} .  $$
 We denote the space of this type of  wrap-around Hankel structure matrices as $\Hc_c(n,d)$.

\subsection{Annihilating Filter-based Low-Rank Hankel Matrix}

The Fourier CS problem of our interest is to recover the unknown
signal $x(t)$ from the Fourier measurement:
\begin{eqnarray}
\hat x(f)  = {\cal F}\{x(t)\} = \int x(t) e^{-i2\pi f t} dt \  .  
\end{eqnarray}
Without loss of generality, we assume that the support of $x(t)$ is $[0,1]$.
Then, the sampled Fourier data at the Nyquist rate is defined by
$$\hat x[k] = \left. \hat x(f)\right|_{f = k}.$$
We also define a length $(r+1)$-annihilating filter $\hat h[k]$  for $\hat x[k]$ that satisfies 
\begin{eqnarray}\label{eq:zero0}
(\hat h\ast \hat x)[k] =  \sum_{l=0}^r \hat h[l]\hat x[k-l]  = 0, \quad \forall k .
\end{eqnarray}
The existence of the finite length annihilating filter has been extensively studied for FRI signals
\cite{vetterli2002sampling,dragotti2007sampling,maravic2005sampling}.
This will be discussed in more detail later.

Suppose that the filter $\hat h[k]$ is the minimum length annihilating filter. 
Then,  for any $k_1\geq 1$ tap filter $\hat a[k]$, it is easy to see that the following filter with $d= r+k_1$ taps is also an annihilating filter for $\hat x[k]$:
\begin{eqnarray}\label{eq:zero1}
 \hat h_a[k] = (\hat a \ast \hat h ) [k] \quad \Longrightarrow  \quad \sum_{l=0}^r \hat h_a[l]\hat x[k-l]  = 0,~~\forall ~k,  
\end{eqnarray}
because $\hat h_a \ast \hat x = \hat a \ast \hat h \ast \hat x = 0$.
The matrix representation of \eqref{eq:zero1} is given by
$$\conv(\hat \xb) \bar{\hat \hb}_a = \zerob$$
where 
 $\bar{\hat\hb}_a$ denotes 
a vector that reverses  the order of the elements  in
\beq\label{eq:hba}
\hat\hb_a=\left[\hat h_a[0],\cdots, \hat h_a[d-1]\right]^T,
\eeq
and
\begin{eqnarray}\label{eq:hankorg}
\conv(\hat \xb) =\left[
        \begin{array}{cccc}
        \vdots & \vdots & \ddots & \vdots \\
        \hat x[-1]  & \hat x[0] & \cdots   & \hat x[d-2]   \\ \hline
      \hat x[0]  & \hat x[1] & \cdots   & \hat x[d-1]   \\
     \hat x[1]  & \hat x[2] & \cdots &   \hat x[d] \\
         \vdots    & \vdots     &  \ddots    & \vdots    \\
      \hat x[n-d]  & \hat x[n-d+1] & \cdots & \hat x[n-1]\\ \hline
            \hat x[n-d+1]  & \hat x[n-d+2] & \cdots & \hat x[n]\\
              \vdots & \vdots & \ddots & \vdots \\
        \end{array}
    \right]
 \end{eqnarray}
Accordingly,  by choosing $n$ such that $n-d+1> r$ and
defining an $n$-dimensional  vector composed of 
 sampled Fourier data at the Nyquist rate as:
\begin{equation}\label{eq:xb}
\hat \xb = \begin{bmatrix} \hat x[0] & \cdots & \hat x[n-1] \end{bmatrix}^T \in \Cd^{n},
\end{equation}
%
 we can construct the following matrix equation:
\begin{eqnarray}
 \hank(\hat \xb) \bar{\hat \hb}_a = \mathbf{0},
\end{eqnarray}
where the Hankel structure matrix $\hank(\hat \xb)  \in \Hc(n,d) $ is constructed as
    \begin{eqnarray}\label{eq:X2}
\hank(\hat \xb) =\left[
        \begin{array}{cccc}
      \hat x[0]  & \hat x[1] & \cdots   & \hat x[d-1]   \\
     \hat x[1]  & \hat x[2] & \cdots &   \hat x[d] \\
         \vdots    & \vdots     &  \ddots    & \vdots    \\
      \hat x[n-d]  & \hat x[n-d+1] & \cdots & \hat x[n-1]\\
        \end{array}
    \right] 
    \end{eqnarray}
Then, we can show the following key result:
\begin{theorem}\label{thm:hrank}
Let $r+1$ denote the minimum size of  annihilating filters that annihilates sampled Fourier data $\hat x[k]$. 
Assume that $\min\{n-d+1,d\}> r$.
Then,
for a given Hankel structured matrix $\hank(\hat \xb) \in \Hc(n,d)$ constructed in \eqref{eq:X2}, we have
\begin{eqnarray}\label{eq:rankr}
\rank \hank(\hat \xb)=  r, 
\end{eqnarray} 
where $\rank(\cdot)$ denotes a matrix rank.
\end{theorem}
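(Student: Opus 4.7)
The plan is to prove the two inequalities $\rank \hank(\hat\xb) \le r$ and $\rank \hank(\hat\xb) \ge r$ separately, both leveraging the minimality of the length-$(r+1)$ annihilator $\hat h$.

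The upper bound is the routine direction and is essentially already spelled out in the text leading up to \eqref{eq:X2}. Any convolution $\hat h_a = \hat a \ast \hat h$ with a length-$k_1$ filter $\hat a$ is again an annihilating filter, of length $r+k_1$. Taking $k_1 = d-r$ (positive since $d > r$) and letting $\hat a$ range over the $d-r$ unit impulses supported on $\{0,1,\ldots,d-r-1\}$ produces $d-r$ linearly independent length-$d$ annihilators; the corresponding reversed coefficient vectors all lie in $\Null\,\hank(\hat\xb)$ by the derivation \eqref{eq:hankorg}. Hence the nullity is at least $d-r$, giving $\rank \hank(\hat\xb) \le r$.

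For the lower bound, I would show that the first $r$ columns $v_0,\ldots,v_{r-1}$ of $\hank(\hat\xb)$, where $v_j = [\hat x[j], \hat x[j+1], \ldots, \hat x[j+n-d]]^T$, are linearly independent. Suppose not, so that there exist scalars $c_0,\ldots,c_{r-1}$, not all zero, with $\sum_{j=0}^{r-1} c_j v_j = \zerob$. Setting $\hat g[l] := c_l$ and defining the auxiliary sequence $\hat y[i] := \sum_{l=0}^{r-1} \hat g[l]\,\hat x[i+l]$, the dependence says $\hat y[i] = 0$ for $i = 0, 1, \ldots, n-d$. The crucial step is that, since $\hat x$ obeys the length-$(r+1)$ recurrence $(\hat h \ast \hat x)[k] = 0$ for all $k \in \Zd$, a direct interchange of sums yields $(\hat h \ast \hat y)[m] = \sum_l \hat g[l]\,(\hat h \ast \hat x)[m+l] = 0$ for all $m \in \Zd$. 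Minimality of $\hat h$ forces $\hat h[0] \neq 0$ and $\hat h[r] \neq 0$ (otherwise a trivial shift of $\hat h$ would itself be a shorter annihilator), so the recurrence can be solved both forward and backward, and $\hat y$ is determined globally from any $r$ consecutive values. The hypothesis $n-d+1 > r$ furnishes strictly more than $r$ consecutive zeros of $\hat y$, forcing $\hat y \equiv 0$. But then $\hat g$ is a nonzero annihilator of $\hat x$ of length at most $r$, contradicting the minimality of $\hat h$.

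The main obstacle, and the real content of the theorem, is the lower bound: one must convert a purely local annihilation relation, valid only on the $n-d+1$ rows of the Hankel matrix, into a global annihilation on all of $\Zd$, so that the minimality hypothesis on $\hat h$ can actually be invoked. The two halves of the size hypothesis $\min\{n-d+1,d\} > r$ play distinct roles in this argument: $d > r$ supplies the $d-r$ shifted copies of $\hat h$ needed for the upper bound, while $n-d+1 > r$ is precisely what makes the recurrence-propagation step in the lower bound go through.
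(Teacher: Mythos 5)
Your proof is correct, and while your upper bound is essentially the paper's argument in different clothing (your $d-r$ shifted copies of $\hat h$ are exactly a basis for the range of the full-rank convolution matrix $\conv(\hat \hb)$ used in Appendix~\ref{ap:proof_hrank}), your lower bound takes a genuinely different route. The paper proceeds through the theory of linear difference equations with constant coefficients: a rank deficiency would give $\hat x[k]$ two exponential-polynomial representations, of orders $r$ and $q<r$, coming from the characteristic polynomials of the two recurrences; subtracting them and invoking the fundamental-set and Casoratian machinery (Lemmas~\ref{DEL1}--\ref{DEL3-2}) shows the resulting $n\times N$ confluent-Vandermonde-type matrix $\Phi$ has full column rank (this is where $n>2r-1$, a consequence of $\min\{n-d+1,d\}>r$, is used), forcing the two characteristic polynomials to coincide, a contradiction. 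You instead stay entirely at the level of the annihilation relation: a nontrivial dependence among the first $r$ columns yields $\hat y[i]=\sum_{l}\hat g[l]\hat x[i+l]$ vanishing on $n-d+1>r$ consecutive indices; the observation that minimality forces $\hat h[0]\neq 0$ and $\hat h[r]\neq 0$ makes the order-$r$ recurrence solvable both forward and backward, so $\hat y$, which inherits the recurrence globally by your interchange-of-sums step, is determined by $r$ consecutive values and must vanish identically, exhibiting a nonzero annihilator of length $r$ and contradicting minimality of $r+1$. Your argument is more elementary and self-contained --- no Casoratian, no explicit solution form, no appeal to \cite{elaydi2005introduction} --- and it cleanly isolates where each half of the hypothesis $\min\{n-d+1,d\}>r$ enters. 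What the paper's heavier route buys is the explicit exponential-polynomial form \eqref{eq:signal} of $\hat x[k]$ as a by-product of the proof; the paper explicitly relies on this representation later (Section~\ref{sec:incoherence_min}, via the confluent Vandermonde decomposition \eqref{eq:genVan}) to bound the incoherence parameter $\mu$, whereas your proof establishes the rank identity \eqref{eq:rankr} without delivering that structural characterization.
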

\begin{proof}
See Appendix~\ref{ap:proof_hrank}.
\end{proof}

\subsection{Performance Guarantees for Structured Matrix Completion}

Let $\Omega$ be a multi-set consisting of random indices
from $\{0,\ldots,n-1\}$ such that $|\Omega|=m < n$.
While the standard CS approaches directly
estimate $x(t)$ from  $\hat x[k], k\in \Omega$,  here we propose a two-step approach by exploiting
Theorem~\ref{thm:hrank}. More specifically, we first interpolate $\hat x[k]$ for all $k\in \{0,\ldots, n-1\}$ from the sparse Fourier samples,
and the second step then
applies the existing spectral estimation methods
to estimate $x(t)$ as done in \cite{vetterli2002sampling,dragotti2007sampling,maravic2005sampling}.
Thanks to the low-rankness of the associated Hankel matrix, the first
step can be implemented using the following low-rank matrix completion: 

\begin{equation}
\label{eq:nucmin0}
\begin{array}{lll}
\underset{\gb \in \mathbb{C}^n}{\text{minimize}} & \rank ~\hank(\hat \gb) \\
\text{subject to} & P_\Omega (\hat\gb) = P_\Omega (\hat\xb), 
\end{array}
\end{equation}
where $P_\Omega$ is the projection operator on the sampling location $\Omega$.
Therefore, the remaining question is to verify whether  the low-rank matrix completion approach \eqref{eq:nucmin0} does not compromise 
any optimality compared to the standard Fourier CS, which is the main topic in this section.

The low-rank  matrix completion problem in \eqref{eq:nucmin0} is non-convex, which is difficult to analyze. Therefore, to provide a performance guarantee, we resort to its convex relaxation using the nuclear norm. { Chen and Chi \cite{chen2014robust} provided the first performance guarantee for robust spectral compressed sensing via structured matrix completion by nuclear norm minimization and extended the result to general low-rank Hankel/Toeplitz matrix completion \cite[Theorem~4]{chen2014robust}. However, parts of their proof (e.g., \cite[Appendix~H]{chen2014robust}) critically depend on the special structure given in the standard Vandermonde decomposition. Furthermore, they also used the standard incoherence condition, which is neither assumed nor applied by their incoherence condition \cite[eq. (27)]{chen2014robust}. Therefore, unlike their claim, the main results in \cite{chen2014robust}, in its current forms, apply only to the spectral compressed sensing.} Here, we elaborate on their results so that the performance guarantees apply to the general structured low-rank matrix completion problems which will be described in subsequent sections.
%
%
%
%
%

Recall that the notion of the incoherence plays a crucial role in matrix completion and structured matrix completion.
We recall the definitions using our notations.
Suppose that $M \in \mathbb{C}^{n_1 \times n_2}$ is a rank-$r$ matrix whose SVD is $U \Sigma V^*$ with $U\in \Cd^{n_1\times r}, \Sigma  \in \Cd^{r\times r}$ and
$V\in \Cd^{n_2\times r}$, respectively.
$M$ is said to satisfy the \emph{standard incoherence} condition with parameter $\mu$ if
\begin{equation}
\label{eq:incoherence0}
\begin{aligned}
\max_{1 \leq i \leq n_1} \norm{U^* \eb_i}_2 {} & \leq \sqrt{\frac{\mu r}{n_1}}, \\
\max_{1 \leq j \leq n_2} \norm{V^* \eb_j}_2 {} & \leq \sqrt{\frac{\mu r}{n_2}},
\end{aligned}
\end{equation}
where $\eb_i$ denotes the appropriate size standard coordinate vector with 1 on the $i$-th elements and zeros elsewhere.

To deal with two types of Hankel matrices simultaneously,
we define a {\em linear lifting} operator $\Lc: \mathbb{C}^n \to \mathbb{C}^{n_1 \times n_2}$ that
 lifts a vector $\xb \in \mathbb{C}^n$ to a structured matrix $\Lc (\xb) \in \mathbb{C}^{n_1 \times n_2}$ in a higher dimensional space.
For example, 
the dimension of $\Lc (\xb) \in \mathbb{C}^{n_1 \times n_2}$ is given as
$n_1=n-d+1$ and $n_2=d$ for a lifting to a Hankel matrix in $\Hc(n,d)$, whereas
$n_1=n$ and $n_2=d$ for a lifting to a wrap-around Hankel matrix $\Hc_c(n,d)$.
A linear lifting operator  is regarded as a synthesis operator with respect to basis $\{A_k\}_{k=1}^n,  A_k \in \Cd^{n_1\times n_2}$:
\[
\Lc (\xb) = \sum_{k=1}^n A_k \langle \eb_k, \xb \rangle,
\]
where the specific form of the basis for the case of  $\{A_k\}$ for $\Hc(n,d)$ and $\Hc_c(n,d)$  will be  explained in Appendix~\ref{subsec:basis}.

Then, the completion of a low rank structured matrix $\Lc (\xb)$ from the observation of its partial entries can be done by minimizing the nuclear norm under the measurement fidelity constraint as follows:
\begin{equation}
\label{eq:nucmin}
\begin{array}{lll}
\underset{\gb \in \mathbb{C}^n}{\text{minimize}} & \norm{\Lc (\gb)}_* \\
\text{subject to} & P_\Omega (\gb) = P_\Omega (\xb).
\end{array}
\end{equation}
 where $\|\cdot\|_*$ denotes the matrix nuclear norm.
Then, we have the following main result.

\begin{theorem}
\label{thm:uniqueness}
Let $\Omega = \{j_1,\ldots,j_m\}$ be a multi-set consisting of random indices
where $j_k$'s are i.i.d. following the uniform distribution on $\{0,\ldots,n-1\}$.
Suppose $\Lc$ correspond to one of the structured matrices in 
 $\Hc(n,d)$ and $\Hc_c(n,d)$.
Suppose, furthermore, that  $\Lc (\xb)$ is of rank-$r$ and satisfies the standard incoherence condition in \eqref{eq:incoherence0} with parameter $\mu$.
Then there exists an absolute constant $c_1$ such that
$\xb$ is the unique minimizer to \eqref{eq:nucmin} with probability $1 - 1/n^2$, provided
\begin{equation}
\label{eq:samp_comp}
m \geq c_1  \mu c_s  r \log^\alpha n,
\end{equation}
where $\alpha = 2$ if each images of $\Lc$ has the wrap-around property; $\alpha = 4$, otherwise,
and $c_s := \max\{n/n_1, n/n_2\}$.
\end{theorem}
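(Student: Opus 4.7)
The plan is to adapt the dual certificate approach of Chen--Chi \cite{chen2014robust} to our more general lifting operator $\Lc$, replacing those steps whose proof in \cite{chen2014robust} relied specifically on the Vandermonde decomposition with arguments that use only the assumed standard incoherence of $\Lc(\xb)$ and the lifting structure. The overall strategy is to show that $\xb$ is the unique minimizer by producing an inexact dual certificate of the KKT system of \eqref{eq:nucmin}. Throughout, I will denote by $T$ the tangent space of the rank-$r$ matrix manifold at $\Lc(\xb)=U\Sigma V^*$, and by $P_T$ the orthogonal projector onto $T$. I will also use the basis representation $\Lc(\eb_k)=A_k$ and analyze $\Lc$ via the Gram structure $\Lc^*\Lc$; in both $\Hc(n,d)$ and $\Hc_c(n,d)$ this reduces to a diagonal operator on $\Cd^n$ whose diagonal weights $w_k$ count the multiplicity of the $k$-th entry of $\xb$ inside $\Lc(\xb)$. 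These multiplicities are uniform ($w_k=d$) in the wrap-around case and nonuniform ($w_k=\min\{k+1,d,n-k\}$) in the plain Hankel case — this asymmetry is what drives the extra factors of $\log n$.

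First I would derive the standard sufficient conditions for $\xb$ to be the unique nuclear-norm minimizer: the existence of $\Yb$ in the range of $\Lc P_\Omega^*$ such that (i) $P_T \Yb = U V^* + \Wb_0$ with $\|\Wb_0\|_F$ small, (ii) $\|P_{T^\perp} \Yb\| < 1/2$, and (iii) an injectivity-type condition $\|P_T \Lc P_\Omega^* P_\Omega \Lc^* P_T - \frac{m}{n} P_T D P_T\| \leq \tfrac{1}{2}\cdot \frac{m}{n} \lambda_{\min}$, where $D$ is the diagonal weight operator from $\Lc^* \Lc$. To obtain (iii), apply the matrix Bernstein inequality to the sum of rank-one random operators $P_T \Lc \eb_{j_k}\eb_{j_k}^* \Lc^* P_T$; the uniform bound on the summands follows from the standard incoherence \eqref{eq:incoherence0} together with $\|A_k\|\le \sqrt{c_s}$, and the variance bound reduces to $c_s \mu r / n$ after some bookkeeping. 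This yields condition (iii) once $m \gtrsim \mu c_s r \log n$.

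The core step is the golfing scheme of Gross for condition (i)--(ii). I split $\Omega$ into $\ell=O(\log n)$ batches $\Omega_1,\dots,\Omega_\ell$, each of size roughly $m/\ell$, and construct iteratively
\[
\Yb_j = \Yb_{j-1} + \frac{n}{|\Omega_j|}\, \Lc P_{\Omega_j}^* P_{\Omega_j} \Lc^* D^{-1}\bigl(U V^* - P_T \Yb_{j-1}\bigr),
\]
with $\Yb_0 = 0$, setting the certificate $\Yb = \Yb_\ell$. Using (iii) on each batch gives a geometric contraction $\|U V^* - P_T \Yb_j\|_F \le 2^{-j}\|UV^*\|_F$, so after $\ell=\Theta(\log n)$ iterates the residual is negligible, proving (i). For (ii), I control $\|P_{T^\perp} \Yb\|$ by summing the bounds on each increment; each increment is bounded via a second application of matrix Bernstein to $\Lc P_{\Omega_j}^* P_{\Omega_j} \Lc^* Z_{j-1}$ where $Z_{j-1}$ is the residual at step $j-1$, which in turn requires a bound on $\|\Lc^* Z_{j-1}\|_\infty$ of the coordinate-wise type $\|P_T \Lc \eb_k\|\le \sqrt{\mu r c_s / n}$.

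The main obstacle, and where the proof departs from \cite{chen2014robust}, is the control of $\|\Lc^* Z\|_\infty$ where $Z\in T$: because $\Lc^*$ acts as an unequally-weighted anti-diagonal sum, the naive bound inherits an extra factor $\sqrt{\log n}$ from the nonuniform multiplicities $w_k$ in the plain Hankel case, which is the source of the $\alpha=4$ exponent. Handling this requires an $\epsilon$-net argument over $T$ combined with a union bound over the $n$ coordinates, and a careful computation of the subexponential norm of each scalar $\langle \eb_k, \Lc^* P_{\Omega_j}^* P_{\Omega_j} \Lc^* Z\rangle$ using the Bernstein--type concentration; in the wrap-around case $w_k\equiv d$ so one net suffices and one obtains the improved $\alpha=2$. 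Finally, I would bound the overall failure probability across the $O(\log n)$ batches by $1/n^2$ by choosing the constant $c_1$ sufficiently large, completing the proof.
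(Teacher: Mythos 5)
Your overall architecture --- an inexact dual certificate verified through a golfing scheme over $O(\log n)$ batches, with matrix Bernstein supplying the local isometry on the tangent space $T$ --- is exactly the route the paper takes (which itself tracks Chen and Chi \cite{chen2014robust}), and your diagnosis that the Vandermonde-dependent steps must be replaced by arguments using only the assumed standard incoherence is the right one. The paper implements that replacement via Lemma~\ref{lem:two_inco}: because every row and column of each basis matrix $A_k$ has at most one nonzero entry, i.e., property \eqref{eq:rowcolsp}, the standard incoherence \eqref{eq:incoherence0} implies the basis incoherence \eqref{eq:incoherence2} with the same parameter $\mu$, which is precisely what is needed to salvage Chen--Chi's Lemma~7 (restated as Lemma~\ref{lem:ref7cc}). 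Your coordinate-wise bound $\|\Pc_T \Lc \eb_k\| \le \sqrt{\mu r c_s/n}$ is the same idea in disguise, so up to this point the proposal is sound.

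The genuine gap is in your account of the $\alpha = 2$ versus $\alpha = 4$ dichotomy. You attribute the extra logarithms to a probabilistic loss --- a $\sqrt{\log n}$ factor from an $\epsilon$-net over $T$ combined with a union bound over coordinates when controlling $\|\Lc^* Z\|_\infty$ --- but this neither matches the arithmetic (the gap in sample complexity is $\log^2 n$, which your $\sqrt{\log n}$-per-coordinate accounting does not produce) nor the actual mechanism. In the paper, as in \cite{chen2014robust}, the dichotomy is \emph{deterministic}: it enters through the weighted norms $\norm{\cdot}_{\Ac,\infty}$ and $\norm{\cdot}_{\Ac,2}$ of \eqref{eq:defAcinf}--\eqref{eq:defAc2}, where each term carries the weight $\norm{A_k}^2 \le \norm{A_k}_0^{-1}$. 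For plain Hankel matrices the repetition counts $\norm{A_k}_0$ range from $1$ to $d$, and the resulting bounds on $\norm{UV^*}_{\Ac,2}^2$ and $\norm{\Pc_T(\norm{A_k}_0^{1/2}A_k)}_{\Ac,2}^2$ pick up a $\log^2 n$ factor (Lemma~\ref{lem:ref7cc}) that propagates into the golfing parameter $\mu_5$ and hence into $m$; with the wrap-around property, \eqref{eq:samesparseAk} gives $\norm{A_k}_0 = \min\{n_1,n_2\}$ for all $k$, and the elementary rearrangement of Lemma~\ref{lem:Ac2bnd} --- bounding $\norm{M}_{\Ac,2}^2$ by the largest row or column energy of $M$, with no nets and no concentration --- removes the $\log^2 n$ outright (Lemma~\ref{lem:ref7}). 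Without this deterministic norm computation your sketch cannot certify the claimed reduction for $\Hc_c(n,d)$. A second, more minor, omission: you leave the certificate accuracy in your condition (i) as ``$\norm{\Wb_0}_{\mathrm{F}}$ small,'' but the uniqueness lemma only tolerates deviations of order $1/n$ (the paper's Lemma~\ref{lemma:uniqueness} uses $1/(7n)$), because in the complementary case the operator-norm bound $\norm{(n/m)\Ac_\Omega + \id - \Ac} \le n$ only yields $\norm{\Pc_T \Lc(\hb)}_{\mathrm{F}} \lesssim n \norm{\Pc_{T^\perp}\Lc(\hb)}_{\mathrm{F}}$; your geometric contraction does achieve $n^{-\Theta(1)}$ accuracy, but the threshold must be made explicit for the two-case argument to close.
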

\begin{proof}
See Appendix~\ref{subsec:pf:thm:uniqueness}.
\end{proof}


Note that  Theorem~\ref{thm:uniqueness}  provides a  generalized version of performance guarantee compared to  the previous work \cite{chen2014robust}.
Specifically, Theorem~\ref{thm:uniqueness} holds for other structured matrices, 
 if the associated basis matrix
$\{A_k\}$ satisfies the specific condition described in detail in Eq.~\eqref{eq:rowcolsp}.
In addition, for the case of signals with wrap-around Hankel matrix (which will be explained later),
the log exponentional factor $\alpha$  in \eqref{eq:samp_comp} becomes 2, which reduces the sampling rate.
This sampling rate reduction is novel and was not observed in the previous work \cite{chen2014robust}.

Next, we consider the recovery of $\xb$ from its partial entries with noise.
Let $\yb$ denote a corrupted version of $\xb$.
The unknown structured matrix $\Lc (\xb)$ can be estimated via
\begin{equation}
\label{eq:nucmin_noisy}
\begin{array}{lll}
\underset{\gb \in \mathbb{C}^n}{\text{minimize}} & \norm{\Lc (\gb)}_* \\
\text{subject to} & \norm{P_\Omega (\gb - {\yb})}_2 \leq \delta.
\end{array}
\end{equation}
Then, we have the following stability guarantee:
\begin{theorem}
\label{thm:stability}
Suppose the noisy data ${\yb} \in \Cd^n$ satisfies $\norm{P_\Omega (\yb - {\xb})}_2 \leq \delta$ and $\xb \in \Cd^n$ is the  noiseless data.
Under the hypotheses of Theorem~\ref{thm:uniqueness}, there exists an absolute constant $c_1,c_2$ such that
with probability $1 - 1/n^2$, the solution $\gb$ to \eqref{eq:nucmin_noisy} satisfies
\[
\norm{\Lc ({\xb}) - \Lc (\gb)}_{\mathrm{F}} \leq c_2 n^2 \delta,
\]
provided that \eqref{eq:samp_comp} is satisfied with $c_1$.
\end{theorem}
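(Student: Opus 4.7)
The plan is to adapt the standard noisy stability analysis for nuclear-norm minimization to the present structured setting, using as black boxes the inexact dual certificate and the restricted near-isometry on the tangent space that are already assembled in the proof of Theorem~\ref{thm:uniqueness}. Put $\hb := \gb - \xb$ and $Z := \Lc(\hb)$. Since $\xb$ is feasible, optimality of $\gb$ gives $\|\Lc(\gb)\|_* \leq \|\Lc(\xb)\|_*$, and the triangle inequality together with the noise bound yields
\[
\|P_\Omega \hb\|_2 \;\leq\; \|P_\Omega(\gb - \yb)\|_2 + \|P_\Omega(\yb - \xb)\|_2 \;\leq\; 2\delta.
\]
The task is to turn these two inequalities into a bound on $\|Z\|_{\mathrm F}$.

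Let $T$ be the tangent space to the rank-$r$ manifold at $\Lc(\xb) = U\Sigma V^*$ and decompose $Z = P_T Z + P_{T^\perp} Z$. The golfing-scheme construction from the proof of Theorem~\ref{thm:uniqueness} supplies (i) an inexact dual certificate $Y$ lying in the image of the sampling-then-lift operator with $P_T Y \approx UV^*$ and $\|P_{T^\perp}Y\|_{\mathrm{op}}\leq 1/2$, together with (ii) the restricted near-isometry
\[
\bigl\|P_T \Lc P_\Omega \Lc^* P_T - \tfrac{m}{n_1 n_2} P_T\bigr\|_{\mathrm{op}} \;\leq\; \tfrac{1}{2}.
\]
Inserting $Y$ into the subgradient inequality for $\|\cdot\|_*$ at $\Lc(\xb)$ and combining with $\|\Lc(\gb)\|_* \leq \|\Lc(\xb)\|_*$ produces an estimate of the form
\[
\tfrac{1}{2}\|P_{T^\perp} Z\|_* \;\leq\; \|P_T Z\|_{\mathrm F} + |\langle Y, Z\rangle|,
\]
where the inner product is controlled by $\|P_\Omega\hb\|_2\leq 2\delta$ via Cauchy--Schwarz and the size of the preimage of $Y$ under the sampling-then-lift map. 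In parallel, the near-isometry on $T$ applied to $P_\Omega \hb$ and $P_{T^\perp} Z$ bounds $\|P_T Z\|_{\mathrm F}$ in terms of $\delta$ and $\|P_{T^\perp} Z\|_*$; solving the two coupled scalar inequalities and invoking $\|Z\|_{\mathrm F}^2 \leq \|P_T Z\|_{\mathrm F}^2 + \|P_{T^\perp} Z\|_*^2$ then yields the desired estimate $\|Z\|_{\mathrm F} \leq c_2\,n^2\,\delta$.

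The main obstacle is the bookkeeping of operator norms linking the ambient vector $\hb$ to its lifted image $\Lc(\hb)$: each entry of $\hb$ appears with non-uniform multiplicity in $\Lc(\hb)$ (uniformly equal to $d$ in the wrap-around case but varying for the standard Hankel), so $\Lc$ is far from an isometry and the passage from $\|P_\Omega\hb\|_2$ to $\|Z\|_{\mathrm F}$ accumulates factors of $\|\Lc\|_{\mathrm{op}} \lesssim \sqrt{n}$ and of the ratio $n_1 n_2/m$, rather than the cleaner $\sqrt{n_1 n_2/m}$ familiar from unstructured matrix completion. This accumulation is precisely what inflates the standard stability constant into the $O(n^2)$ bound in the conclusion; apart from this arithmetic, no ingredient beyond those already produced for Theorem~\ref{thm:uniqueness} is required.
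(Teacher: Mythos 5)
Your skeleton matches the paper's route: the paper proves Theorem~\ref{thm:stability} in one line, by rerunning the noisy extension of Chen and Chi \cite{chen2014robust} with their \cite[Lemma~1]{chen2014robust} replaced by Lemma~\ref{lemma:uniqueness} --- i.e., exactly your pipeline of feasibility plus optimality, the golfing certificate, and the restricted isometry on $T$, with $\norm{P_\Omega(\gb-\xb)}_2 \leq 2\delta$. The genuine gap is in your accounting for the factor $n^2$. You attribute it to $\norm{\Lc}_{\mathrm{op}} \lesssim \sqrt{n}$ accumulating with the ratio $n_1 n_2/m$; but at the sample complexity \eqref{eq:samp_comp} one has $m \asymp \mu c_s r \log^\alpha n$, so $n_1 n_2/m$ can be of order $n^2/(r\log^\alpha n)$, and $\sqrt{n}\cdot n_1 n_2/m$ is \emph{not} $O(n^2)$ with an absolute constant $c_2$: your two coupled inequalities, normalized as you state them, do not close to the claimed bound. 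Relatedly, your near-isometry is misnormalized --- sampling here is over the $n$ anti-diagonal indices, not over matrix entries, so the correct statement is \eqref{eq:local_isometry_rank_deficient}, namely $\norm{\tfrac{n}{m}\Pc_T \Ac_\Omega \Pc_T - \Pc_T \Ac \Pc_T} \leq \tfrac{1}{2}$, rather than a comparison with $\tfrac{m}{n_1 n_2}\Pc_T$ as in unstructured matrix completion.

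The missing ingredient is quantitative rather than structural: the entire improvement from Chen--Chi's $n^3\delta$ amplification to the theorem's $n^2\delta$ lives in the re-optimized constants of Lemma~\ref{lemma:uniqueness}. The certificate is only required to satisfy $\norm{\Pc_T(W - \mathrm{sgn}[\Lc(\xb)])}_{\mathrm{F}} \leq \tfrac{1}{7n}$ as in \eqref{eq:dualcert_sgn} (the golfing scheme in fact delivers $\tfrac{1}{2n^2}$, so this relaxation is free), and this permits the case split to be taken at $\norm{\Pc_T \Lc(\hb)}_{\mathrm{F}} \lessgtr 3n\,\norm{\Pc_{T^\perp}\Lc(\hb)}_{\mathrm{F}}$, since $3n \cdot \tfrac{1}{7n} < \tfrac{1}{2}$ still closes the subgradient inequality. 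The noise amplification then comes out as the case-split threshold $3n$ combined with the $m$-\emph{independent} operator-norm bound $\norm{\tfrac{n}{m}\Ac_\Omega + \id - \Ac} \leq n$, which is where the two factors of $n$ actually originate. Your proposal uses the certificate only qualitatively (``$\Pc_T Y \approx UV^*$''); if you instead plug in Chen--Chi's original deviation of order $n^{-2}$ with the correspondingly looser case split, the identical pipeline returns $O(n^3)\delta$ --- that is, you would reprove the old bound rather than this theorem. To repair the proposal, replace your generic bookkeeping with the explicit constants of Lemma~\ref{lemma:uniqueness} and propagate them through the noisy versions of the Case~1/Case~2 estimates.
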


\begin{proof}[Proof of Theorem~\ref{thm:stability}]
Theorem~\ref{thm:uniqueness} extends to the noisy case similarly to the previous work \cite{chen2014robust}.
We only need to replace \cite[Lemma~1]{chen2014robust} by our Lemma~\ref{lemma:uniqueness}.
\end{proof}

Note that  Theorem~\ref{thm:stability}  provides an improved performance guarantee
with significantly smaller noise amplification factor, compared to $n^3$ dependent noisy amplification factor in  the previous work \cite{chen2014robust}.

\section{Guaranteed Reconstruction of FRI Signals}
\label{sec:fri}

The explicit derivation of the minimum length finite length annihilating filter was one of the most important contributions of
the sampling theory of FRI signals  \cite{vetterli2002sampling,dragotti2007sampling,maravic2005sampling}.
Therefore, by combing the results in the previous section,
we can provide performance guarantees for the recovery of FRI signals from partial Fourier measurements.

\subsection{Spectral Compressed Sensing: Recovery of Stream of Diracs}

Consider the periodic stream of Diracs
  described by the superposition of  $r$ impulses
\begin{equation}\label{eq:signal3}
x(t) = \sum_{j=0}^{r-1} a_j \delta \left( t- t_j \right) \, \quad t_j \in [0, 1].
\end{equation}
Then, the discrete Fourier data are given by
\begin{eqnarray}\label{eq:fs}
\hat x[k] = \sum_{j=0}^{r-1} a_j e^{-i2\pi k t_j } \ .
\end{eqnarray}
As mentioned before, the spectral compressed sensing by Chen and Chi \cite{chen2014robust} or Tang \cite{{tang2013compressed}} correspond to this case, in which
 they are interested in recovering \eqref{eq:signal3} from a subsampled spectral measurements.

One of the important contributions of this section is to show that the spectral compressed sensing can be equivalently explained using  annihilating filter-based low-rank Hankel matrix.
Specifically,    for the stream of Diracs,  the minimum length annihilating filter $\hat h[k]$ has the following z-transform representation  \cite{vetterli2002sampling}: 
\begin{eqnarray}\label{eq:afilter}
\hat h(z)  &=& \sum_{l=0}^r \hat h[l] z^{-l} = \prod_{j=0}^{r-1} (1- e^{-i2\pi t_j} z^{-1}) \ ,
\end{eqnarray}
 because
\begin{eqnarray}
(\hat h\ast \hat x)[k] &=& \sum_{l=0}^k \hat h[l]\hat x[k-l] \nonumber \\
&=& \sum_{l=0}^r \sum_{j=0}^{r-1} a_j \hat h[l]u_j^{k-l}  \nonumber \\
&=& \sum_{j=0}^{r-1}a_j\underbrace{\left( \sum_{l=0}^r \hat h[p]u_j^{-l} \right)}_{\hat h(u_j)}u_j^k = 0 \label{eq:fri}
\end{eqnarray}
where $u_j = e^{-i2\pi t_j }$ \cite{vetterli2002sampling,dragotti2007sampling,maravic2005sampling}.
Accordingly, the filter length is $r+1$, which is low ranked if $\min\{n-d+1,d\} >r$.

Therefore,  by utilizing Theorem~\ref{thm:hrank} and Theorem~\ref{thm:uniqueness},  we can provide the performance guarantee
of the following nuclear norm minimization to estimate the Fourier samples:
\begin{eqnarray}\label{eq:EMaC}
\quad & \min_{\gb\in \Cd^{n} } & \|\hank(\gb)\|_* \\
&\mbox{subject to } & P_\Omega(\gb) = P_\Omega(\hat \xb) \nonumber
\end{eqnarray}
where $\hank(\gb) \in \Hc(n,d)$.
\begin{theorem}
\label{thm:unique_dirac} 
For a given stream of Diracs in Eq.~\eqref{eq:signal3}, $\hat\xb$ denotes the noiseless discrete  Fourier data in \eqref{eq:xb}.
Suppose, furthermore,  $d$ is given by $\min\{n-d+1,d \}> r$.
Let $\Omega = \{j_1,\ldots,j_m\}$ is a multi-set consisting of random indices
where $j_k$'s are i.i.d. following the uniform distribution on $\{0,\ldots,n-1\}$.
Then,  there exists an absolute constant $c_1$ such that
$\hat\xb$ is the unique minimizer to \eqref{eq:EMaC} with probability $1 - 1/n^2$, provided
\begin{equation}
\label{eq:samp_comp_dirac}
m \geq c_1  \mu c_s  r \log^4n,
\end{equation}
where  $c_s := \max\{n/(n-d+1), n/d\}$.
\end{theorem}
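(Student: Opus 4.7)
The plan is to obtain Theorem~\ref{thm:unique_dirac} essentially as a corollary by combining the rank characterization in Theorem~\ref{thm:hrank} with the general structured matrix completion guarantee of Theorem~\ref{thm:uniqueness}, specialized to the Hankel lifting $\Lc = \hank$ with $\Lc(\cdot) \in \Hc(n,d)$. The objective is to check, one by one, that every hypothesis of Theorem~\ref{thm:uniqueness} is met by the Fourier samples of a stream of $r$ Diracs.

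First I would establish the exact rank of the Hankel matrix. Using the explicit product form of the minimum length annihilating filter in \eqref{eq:afilter}, the computation \eqref{eq:fri} shows that $\hat h$ of length $r+1$ annihilates $\hat x[k]$, and by the distinctness of the nodes $u_j = e^{-i 2\pi t_j}$ together with the non-vanishing amplitudes $a_j$, no strictly shorter filter can annihilate all $\hat x[k]$. Consequently, the minimum annihilating filter length for $\hat x$ equals $r+1$, and the hypothesis $\min\{n-d+1, d\} > r$ is exactly the hypothesis required in Theorem~\ref{thm:hrank}. Therefore Theorem~\ref{thm:hrank} yields $\rank \hank(\hat \xb) = r$.

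Next I would match the Hankel lifting to the template of Theorem~\ref{thm:uniqueness}. The lifting $\Lc = \hank$ is linear from $\Cd^n$ to $\Cd^{n_1 \times n_2}$ with $n_1 = n - d + 1$ and $n_2 = d$, and its basis matrices $\{A_k\}$ are the indicator matrices of the anti-diagonals, which (as will be verified in the appendix in the course of proving Theorem~\ref{thm:uniqueness}) satisfy the structural condition required there. Since $\hank$ does \emph{not} carry the wrap-around property enjoyed by $\Hc_c(n,d)$, the exponent is $\alpha = 4$. The hypothesis $\Lc(\hat\xb)$ is $\mu$-incoherent is assumed through the parameter $\mu$ in the sample-count bound. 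With all hypotheses checked, invoking Theorem~\ref{thm:uniqueness} yields that $\hat\xb$ is the unique minimizer of \eqref{eq:EMaC} with probability at least $1 - 1/n^2$ provided
\[
m \;\ge\; c_1 \mu c_s r \log^4 n, \qquad c_s = \max\{n/(n-d+1), n/d\},
\]
which is exactly \eqref{eq:samp_comp_dirac}.

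I expect the only nontrivial step to be bookkeeping: making sure that the Hankel basis $\{A_k\}$ really is covered by the general condition \eqref{eq:rowcolsp} underlying Theorem~\ref{thm:uniqueness}, and that the value of $c_s$ produced by the theorem matches the advertised $\max\{n/(n-d+1), n/d\}$ when $n_1 = n-d+1$ and $n_2 = d$. There is no new probabilistic or operator-theoretic work required beyond what is already present in Theorems~\ref{thm:hrank} and~\ref{thm:uniqueness}; in particular, no minimum-separation hypothesis between the $t_j$'s enters, since such geometry has been absorbed into the incoherence parameter $\mu$ of $\hank(\hat\xb)$.
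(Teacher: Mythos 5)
Your proposal is correct and matches the paper's own proof, which likewise obtains the theorem as an immediate corollary of Theorem~\ref{thm:hrank} (via the minimum annihilating filter of length $r+1$ from \eqref{eq:afilter}) and Theorem~\ref{thm:uniqueness} specialized to $\Hc(n,d)$ with $\alpha=4$. Your extra bookkeeping --- checking minimality of the filter via distinct nodes and nonzero amplitudes, and verifying the basis condition \eqref{eq:rowcolsp} --- is exactly the material the paper delegates to Theorem~\ref{thm:hrank} and Appendix~\ref{subsec:basis}, so nothing is missing or different in substance.
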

\begin{proof}
This is a simple consequence of Theorem~\ref{thm:hrank} and Theorem~\ref{thm:uniqueness}, because 
the minimum annihilating filter size from \eqref{eq:afilter} is $r+1$.
\end{proof}

This result appears identical to that of Chen and Chi  \cite{chen2014robust}.
However, they explicitly utilized  the standard Vandermonde decomposition.
On the contrary, the annihilating filter-based construction of low-rank Hankel matrix is more general that can cover all  FRI signal
models as will be shown later.

For the noisy measurement,  we interpolate the missing Fourier data using the following low-rank matrix completion:
\begin{eqnarray}
 &\quad &\min  \|\hank(\gb)\|_* \quad  \label{eq:nucmin_noisy_dirac}\\
 &\mbox{subject to}& \|P_{\Omega}(\gb)- P_\Omega(\hat\yb) \|\leq \delta \notag 
\end{eqnarray}
where $\hat\yb$ is the noisy Fourier data. Then,
Theorem~\ref{thm:stability} informs us that we can improve upon the results by Chen and Chi  \cite{chen2014robust} (from $n^3$ to $n^2$):
\begin{theorem}
\label{thm:stability_dirac}
Suppose that the noisy Fourier data $\hat{\yb}$ satisfies $\norm{P_\Omega (\hat\yb - \hat{\xb})}_2 \leq \delta$, where $\hat\xb$ denotes the noiseless discrete  Fourier data in \eqref{eq:xb}.
Under the hypotheses of Theorem~\ref{thm:unique_dirac}, there exists an absolute constant $c_1,c_2$ such that
with probability $1 - 1/n^2$, the solution $\gb$ to \eqref{eq:nucmin_noisy_dirac} satisfies
\[
\norm{\hank (\hat{\xb}) - \hank(\gb)}_{\mathrm{F}} \leq c_2 n^2 \delta,
\]
provided that \eqref{eq:samp_comp_dirac} is satisfied with $c_1$.
\end{theorem}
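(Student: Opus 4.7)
The plan is to derive Theorem~\ref{thm:stability_dirac} as a direct specialization of the general noisy recovery guarantee in Theorem~\ref{thm:stability}, with the lifting operator $\Lc$ taken to be the Hankel map $\hank : \Cd^n \to \Cd^{(n-d+1) \times d}$ onto $\Hc(n,d)$. Under this identification, program \eqref{eq:nucmin_noisy_dirac} is literally \eqref{eq:nucmin_noisy}, and the noise hypothesis $\|P_\Omega(\hat\yb - \hat\xb)\|_2 \le \delta$ matches the one appearing in Theorem~\ref{thm:stability}.

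First I would verify that all of the structural hypotheses of Theorem~\ref{thm:stability} hold for the ground-truth vector $\hat\xb$ in \eqref{eq:xb}. The rank condition is immediate from Theorem~\ref{thm:hrank}: the minimum annihilating filter for a stream of $r$ Diracs has length $r+1$ by \eqref{eq:afilter}, and the pencil parameter satisfies $\min\{n-d+1,d\} > r$ by hypothesis, so $\rank \hank(\hat\xb) = r$. The standard incoherence assumption with parameter $\mu$ is inherited from the hypotheses of Theorem~\ref{thm:unique_dirac}. Finally, the sample-complexity requirement \eqref{eq:samp_comp_dirac} matches \eqref{eq:samp_comp} with $\alpha = 4$, since $\hank$ lifts into $\Hc(n,d)$, which lacks the wrap-around property, and with $c_s = \max\{n/(n-d+1),\, n/d\}$.

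With all hypotheses checked, invoking Theorem~\ref{thm:stability} immediately yields, with probability at least $1 - 1/n^2$, the conclusion
\[
\|\hank(\hat\xb) - \hank(\gb)\|_{\mathrm{F}} \le c_2 n^2 \delta,
\]
which is precisely the claim; the absolute constants $c_1$ and $c_2$ are inherited unchanged. In other words, the corollary follows from a one-line reduction once the Hankel lifting has been identified with the abstract lifting $\Lc$ of Section~\ref{sec:main}.

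The substantive work is therefore hidden inside Theorem~\ref{thm:stability} and not in this corollary. The part that I expect to be most delicate in the underlying proof of Theorem~\ref{thm:stability}, and which the excerpt flags by pointing to a replacement of \cite[Lemma~1]{chen2014robust} by the paper's own Lemma~\ref{lemma:uniqueness}, is verifying that the sharpened noise-amplification factor $n^2$ (rather than the $n^3$ obtained by Chen and Chi) indeed propagates through the dual-certificate and golfing-scheme argument. For the corollary itself, however, no further work beyond the specialization above is required: Theorem~\ref{thm:stability_dirac} emerges as a clean consequence of the general stability result combined with the rank identification of Theorem~\ref{thm:hrank}.
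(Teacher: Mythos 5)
Your proposal is correct and matches the paper's own (implicitly one-line) argument: the paper presents Theorem~\ref{thm:stability_dirac} as the direct specialization of Theorem~\ref{thm:stability} with $\Lc = \hank$ lifting into $\Hc(n,d)$, where the rank-$r$ hypothesis comes from Theorem~\ref{thm:hrank} via the length-$(r+1)$ annihilating filter \eqref{eq:afilter} and the sample complexity \eqref{eq:samp_comp_dirac} is \eqref{eq:samp_comp} with $\alpha=4$ (no wrap-around). Your additional remark that the substantive content — the $n^2$ versus $n^3$ noise amplification — resides in Lemma~\ref{lemma:uniqueness} inside the proof of Theorem~\ref{thm:stability} is exactly where the paper locates it as well.
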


\subsection{Stream of Differentiated Diracs}

Another important class of FRI signal is  a stream of differentiated Diracs:
\begin{equation}\label{eq:ddiracs}
x(t) = \sum_{j=0}^{r-1}  \sum_{l=0}^{l_j-1}a_{j,l}\delta^{(l)} (t-t_j) \ ,
\end{equation}
where
$\delta^{(l)}$ denotes the $l$-th derivative of Diracs in the distributions sense.
Thus, its
 Fourier transform is given by
\begin{eqnarray}\label{eq:ddiracf}
\hat x(f) =   \sum_{j=0}^{r-1}   \sum_{l=0}^{l_j-1}a_{j,l}   (i2\pi f)^{l} e^{-i2\pi f t_j} \,
\end{eqnarray}
whose discrete samples are given by
\begin{eqnarray}\label{eq:spec}
\hat x[k]:=  \hat x(k)  =     \sum_{j=0}^{r-1}   \sum_{l=0}^{l_j-1}a_{j,l}  \left({i2\pi k}\right)^{l} e^{-i2\pi k t_j }  .
\end{eqnarray} 
Then, there exists an associated minimum length annihilating filter  whose z-transform is given by:
\begin{eqnarray}\label{eq:dann0}
\hat h(z) = \prod_{j=0}^{r-1}(1-u_j z^{-1})^{l_j}
\end{eqnarray}
where $u_j = e^{-i2\pi t_j}$  \cite{vetterli2002sampling}.
Therefore, we can provide the following performance guarantees:
\begin{theorem}
\label{thm:unique_ddirac}
For a given stream of differentiated Diracs in Eq.~\eqref{eq:ddiracs},  $\hat\xb$ denotes the noiseless discrete  Fourier data in \eqref{eq:xb}.
Suppose, furthermore,  $d$ is given by $\min\{n-d+1,d\} > \sum_{j=0}^{r-1}l_j $.
Let   $\Omega = \{j_1,\ldots,j_m\}$ be a multi-set consisting of random indices
where $j_k$'s are i.i.d. following the uniform distribution on $\{0,\ldots,n-1\}$.
Then,  there exists an absolute constant $c_1$ such that
$\hat\xb$ is the unique minimizer to \eqref{eq:EMaC} with probability $1 - 1/n^2$, provided
\begin{equation}
\label{eq:samp_comp_ddirac}
m \geq c_1  \mu c_s  \left( \sum_{j=0}^{r-1}l_j \right) \log^4n,
\end{equation}
where  $c_s := \max\{n/(n-d+1), n/d\}$.
\end{theorem}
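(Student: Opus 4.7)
The proof plan is to reduce the statement to Theorems~\ref{thm:hrank} and~\ref{thm:uniqueness}, exactly in the spirit of the one-line proof of Theorem~\ref{thm:unique_dirac}. The only new ingredient is to identify the minimum annihilating filter length for the sampled Fourier data in \eqref{eq:spec}, since this determines the rank of $\hank(\hat\xb)$ to which Theorem~\ref{thm:uniqueness} is ultimately applied.

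The first step is to verify that $\hat h(z) = \prod_{j=0}^{r-1}(1-u_j z^{-1})^{l_j}$ in \eqref{eq:dann0} is the minimum annihilating polynomial for $\hat x[k]$ in \eqref{eq:spec}. By the classical theory of linear difference equations with constant coefficients (cf.~\cite{vetterli2002sampling}), the sequences annihilated by $\hat h$ are precisely the linear span of $\{k \mapsto k^{\ell} u_j^k : 0 \le \ell \le l_j - 1,\ 0 \le j \le r-1\}$. Each $\hat x[k]$ in \eqref{eq:spec} lies in this span (the factors $(i2\pi k)^{\ell}$ reduce to powers of $k$ up to constants absorbed into $a_{j,\ell}$), so $\hat h$ annihilates it. For minimality, observe that the leading coefficient $a_{j,l_j-1}$ must be nonzero (otherwise the $j$-th component would actually have order strictly less than $l_j$ at $t_j$, contradicting the model). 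Consequently, any annihilating polynomial must have $u_j$ as a zero of multiplicity at least $l_j$, forcing its $z$-transform to contain $\prod_{j}(1-u_j z^{-1})^{l_j}$ as a factor. Hence the minimum filter length is exactly $\sum_{j=0}^{r-1} l_j + 1$.

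With this in hand, Theorem~\ref{thm:hrank} yields $\rank\,\hank(\hat\xb) = \sum_{j=0}^{r-1} l_j$ under the assumed pencil condition $\min\{n-d+1, d\} > \sum_j l_j$. Since matrices in $\Hc(n,d)$ do not possess the wrap-around property, Theorem~\ref{thm:uniqueness} applies with $\alpha = 4$, $c_s = \max\{n/(n-d+1),\,n/d\}$, and the rank $\sum_{j=0}^{r-1} l_j$ in place of $r$; this gives exactly the sample complexity \eqref{eq:samp_comp_ddirac} and the success probability $1 - 1/n^2$. The main conceptual step is the minimality argument, which in effect is a confluent-Vandermonde non-degeneracy statement driven by the fact that the leading coefficients $a_{j,l_j-1}$ at each active spike are nonzero by the definition of the signal's order; this is classical and poses no real difficulty, so the entire proof is short once the rank of the lifted matrix has been pinned down.
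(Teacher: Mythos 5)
Your proposal is correct and follows essentially the same route as the paper, which proves the theorem in one line by combining Theorem~\ref{thm:hrank} and Theorem~\ref{thm:uniqueness} once the minimum annihilating filter length $\bigl(\sum_{j=0}^{r-1} l_j\bigr)+1$ is read off from \eqref{eq:dann0}. Your added minimality argument (nonzero leading coefficients $a_{j,l_j-1}$ force each $u_j$ to be a root of multiplicity at least $l_j$) simply makes explicit what the paper delegates to \cite{vetterli2002sampling} and to the fundamental-solution machinery in the proof of Theorem~\ref{thm:hrank}, and it is sound.
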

\begin{proof}
This is a simple consequence of Theorem~\ref{thm:hrank} and Theorem~\ref{thm:uniqueness}, because 
the minimum annihilating filter size from \eqref{eq:dann0} is $\left( \sum_{j=0}^{r-1}l_j \right)+1$.
%
%
%
%
\end{proof}

\begin{theorem}
\label{thm:stability_ddirac}
Suppose  the noisy Fourier data $\hat{\yb}$ satisfies $\norm{P_\Omega (\hat\yb - \hat{\xb})}_2 \leq \delta$, where $\hat\xb$ denotes the noiseless discrete  Fourier data in \eqref{eq:xb}.
Under the hypotheses of Theorem~\ref{thm:unique_ddirac}, there exists an absolute constant $c_1,c_2$ such that
with probability $1 - 1/n^2$, the solution $\gb$ to \eqref{eq:nucmin_noisy_dirac} satisfies
\[
\norm{\hank (\hat{\xb}) - \hank(\gb)}_{\mathrm{F}} \leq c_2 n^2 \delta,
\]
provided that \eqref{eq:samp_comp_ddirac} is satisfied with $c_1$.
\end{theorem}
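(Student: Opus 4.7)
The plan is to obtain this stability bound by direct specialization of the general noisy structured matrix completion result, Theorem~\ref{thm:stability}, following exactly the template used to derive Theorem~\ref{thm:stability_dirac} from Theorem~\ref{thm:unique_dirac}. The only model-specific input required is the size of the minimum-length annihilating filter, which is supplied in closed form by \eqref{eq:dann0}: the $z$-transform $\hat h(z) = \prod_{j=0}^{r-1}(1 - u_j z^{-1})^{l_j}$ has degree $\sum_{j=0}^{r-1} l_j$, so the minimum annihilating filter has length $\left(\sum_{j=0}^{r-1} l_j\right) + 1$.

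First I would invoke Theorem~\ref{thm:hrank} with this filter length: under the pencil-size hypothesis $\min\{n-d+1,d\} > \sum_{j=0}^{r-1} l_j$ inherited from Theorem~\ref{thm:unique_ddirac}, the Hankel matrix $\hank(\hat\xb) \in \Hc(n,d)$ has rank exactly $r' := \sum_{j=0}^{r-1} l_j$. Next I would instantiate the abstract lifting $\Lc$ of Theorem~\ref{thm:stability} as the standard (non-wrap-around) Hankel lifting $\hank$, so $n_1 = n-d+1$, $n_2 = d$, and $c_s = \max\{n/(n-d+1),\, n/d\}$. Because this lifting does not enjoy the wrap-around property, the sample-complexity exponent in \eqref{eq:samp_comp} is $\alpha = 4$, which matches verbatim the bound \eqref{eq:samp_comp_ddirac} hypothesized in Theorem~\ref{thm:unique_ddirac}. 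The standard incoherence condition on $\hank(\hat\xb)$ with parameter $\mu$ is carried over directly from the statement of Theorem~\ref{thm:unique_ddirac}.

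With these identifications, Theorem~\ref{thm:stability} applies verbatim to yield that, with probability at least $1 - 1/n^2$, the minimizer $\gb$ of \eqref{eq:nucmin_noisy_dirac} satisfies
\[
\norm{\hank(\hat{\xb}) - \hank(\gb)}_{\mathrm{F}} \leq c_2 n^2 \delta,
\]
for an absolute constant $c_2$, provided the sample complexity \eqref{eq:samp_comp_ddirac} holds for a sufficiently large absolute constant $c_1$. This is precisely the claim.

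I do not anticipate a substantive obstacle: the argument is a one-to-one specialization and the differentiated-Diracs case differs from the Diracs case \emph{only} in the value of the rank (namely $\sum_j l_j$ in place of $r$). The point that previously forced ad hoc work in \cite{chen2014robust}, namely the reliance on a standard Vandermonde decomposition which is unavailable when the annihilating filter has repeated roots $u_j$ of multiplicity $l_j > 1$, is bypassed because Theorem~\ref{thm:hrank} delivers the rank statement purely from the length of the annihilating filter and is insensitive to root multiplicities. The only mild care needed is to confirm that the pencil parameter condition and the i.i.d. uniform sampling model in Theorem~\ref{thm:stability} match exactly those stated in Theorem~\ref{thm:unique_ddirac}, which they do by construction.
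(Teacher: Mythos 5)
Your proposal is correct and follows exactly the route the paper intends: the paper itself derives the noiseless version (Theorem~\ref{thm:unique_ddirac}) as a direct consequence of Theorem~\ref{thm:hrank} with the filter length $\bigl(\sum_{j=0}^{r-1} l_j\bigr)+1$ from \eqref{eq:dann0}, and obtains the stability claim by the same specialization of the general noisy guarantee (Theorem~\ref{thm:stability}, which in turn rests on replacing the lemma of Chen and Chi by Lemma~\ref{lemma:uniqueness}), with the standard Hankel lifting giving $\alpha=4$ and $c_s=\max\{n/(n-d+1),n/d\}$. Your observation that Theorem~\ref{thm:hrank} sidesteps the Vandermonde-decomposition obstruction for repeated roots is precisely the paper's stated motivation, so no gap remains.
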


\subsection{Non-uniform Splines}

Note that signals may not be sparse in the image domain, but can be sparsified in a transform domain. Our goal is to find a generalized framework,  whose sampling rate can be reduced down to the transform domain sparsity level.
Specifically, the signal $x$ of our interest is a non-uniform spline that can be represented by : 
\begin{equation}\label{eq:ssp}
\mathrm{L} x = w
\end{equation}
where $\mathrm{L}$ denotes a constant coefficient linear  differential equation  that is often called the continuous domain whitening operator in \cite{unser2014unified,unser2014unified2}: 
\begin{equation}\label{eq:L0}
\mathrm L := b_K\mathrm \partial^K+b_{K-1}\mathrm \partial^{K-1}+\ldots+b_1\mathrm \partial+b_0
\end{equation}
and 
$w$ is a continuous sparse innovation:
\begin{eqnarray}\label{eq:w}
w(t) = \sum_{j=0}^{r-1} a_j \delta \left( t-t_j \right) \, \quad . 
\end{eqnarray}
For example, if the underlying signal is piecewise constant, we can set $\mathrm L$ as the first differentiation. In this case, $x$ corresponds to the total variation signal model.
Then, by taking the Fourier transform of \eqref{eq:ssp}, we have
\begin{eqnarray}\label{eq:y}
\hat z(f): = \hat l (f) \hat x(f)  = \sum_{j=0}^{r-1} a_j e^{-i2\pi f t_j}
\end{eqnarray}
where 
\begin{eqnarray}
\hat l(f) =  b_K (i2\pi f)^K +b_{K-1} (i2\pi f)^{K-1}+\ldots+b_1(i2\pi f)+b_0
\end{eqnarray}
Accordingly, the same  filter  $\hat h[n]$  whose z-transform is given by
\eqref{eq:afilter}
can annihilate the discrete samples of the weighted spectrum $\hat z(f) = \hat l(f)\hat x(f)$,
and the Hankel matrix  $\hank(\hat\zb)\in \Hc(n,d)$  from the weighted spectrum $\hat z(f)$ satisfies the following rank condition:
\begin{eqnarray*}
\rank \hank(\hat \zb) = r.  
\end{eqnarray*} 
Thanks to the low-rankness, the missing  Fourier data can be interpolated using the following matrix completion problem:
\begin{eqnarray}\label{eq:EMaC2}
(P_{w})
 &\min_{\gb\in \Cd^{n} } & \| \hank (\gb)\|_* \\
&\mbox{subject to } & P_\Omega(\gb) = P_\Omega(\hat \lb \odot \hat \xb) \nonumber  \  ,
\end{eqnarray}
or, for noisy  Fourier measurements $\hat\yb$, 
\begin{eqnarray}\label{eq:EMaC2_noisy}
(P_{w}')
 &\min_{\gb\in \Cd^{n} } & \| \hank (\gb)\|_* \\
&\mbox{subject to } & \|P_\Omega(\gb) - P_\Omega(\hat \lb \odot \hat \yb) \|\leq \delta  \nonumber  \  ,
\end{eqnarray}
where $\odot$ denotes the Hadamard product, and $\hat\lb$ and $\hat \xb$ denotes the vectors composed of full samples of $\hat l[k]$ and $\hat x[k]$, respectively.
After solving $(P_w)$, 
the missing  spectral data $\hat x[k]$ can be obtained by dividing by the weight, i.e. $\hat x[k] =  g[k]/\hat l[k]$ assuming that $\hat l[k] \neq 0$, where
$g[k]$ is the estimated Fourier data using $(P_w)$.
As for the  sample  $\hat x[k]$ at the spectral null of the filter $\hat l[k]$,  the corresponding elements should be included as  measurements. 

Now, we can provide the following performance guarantee:
\begin{theorem}
\label{thm:unique_spline}
For a given non-uniform splines in Eq.~\eqref{eq:ssp}, $\hat\xb$ denotes the noiseless discrete  Fourier data in \eqref{eq:xb}.
Suppose, furthermore,  $d$ is given by $\min\{n-d+1,d\} > r $ and  $\Omega = \{j_1,\ldots,j_m\}$ be a multi-set consisting of random indices
where $j_k$'s are i.i.d. following the uniform distribution on $\{0,\ldots,n-1\}$.
Then,  there exists an absolute constant $c_1$ such that
$\hat\xb$ is the unique minimizer to \eqref{eq:EMaC2} with probability $1 - 1/n^2$, provided
\begin{equation}
\label{eq:samp_comp_spline}
m \geq c_1  \mu c_s r \log^4n,
\end{equation}
where  $c_s := \max\{n/(n-d+1), n/d\}$.
\end{theorem}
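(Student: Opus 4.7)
The plan is to reduce the recovery of the non-uniform spline $x$ to the stream-of-Diracs case handled by Theorem~\ref{thm:unique_dirac}, via the identity $\hat z(f) = \hat l(f)\hat x(f)$ and the key observation in \eqref{eq:y} that the weighted spectrum equals $\sum_{j=0}^{r-1} a_j e^{-i2\pi f t_j}$. Since this is exactly the Fourier transform of a stream of $r$ Diracs, the same derivation as in \eqref{eq:fri} shows that the length-$(r+1)$ filter whose $z$-transform is given by \eqref{eq:afilter} is a minimum-length annihilator for the samples $\hat z[k]$. Applying Theorem~\ref{thm:hrank} to $\hat\zb := \hat\lb \odot \hat\xb$ therefore yields $\rank \hank(\hat\zb) = r$ whenever $\min\{n-d+1,d\}>r$.

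Next I would translate the observations from $\hat\xb$ to $\hat\zb$. Because $\hat l[k]$ is an explicit, deterministic function of $k$ and, by the convention stated immediately after $(P_w)$, every spectral-null index of $\hat l$ is retained in $\Omega$ by design, observing $P_\Omega(\hat\xb)$ is equivalent to observing $P_\Omega(\hat\zb)$ up to a known multiplicative factor on each coordinate of $\Omega$. The convex program $(P_w)$ is then precisely the nuclear-norm problem \eqref{eq:nucmin} for the rank-$r$ structured matrix $\hank(\hat\zb) \in \Hc(n,d)$ from $m$ i.i.d.\ uniform entrywise samples of the underlying length-$n$ vector.

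At this point the conclusion follows from a direct invocation of Theorem~\ref{thm:uniqueness} with the lifting operator $\Lc = \hank$ mapping $\Cd^n$ into $\Cd^{(n-d+1)\times d}$. Because the Hankel lifting does not enjoy the wrap-around property, the log-exponent is $\alpha = 4$, and $c_s = \max\{n/(n-d+1),n/d\}$, which yields the stated bound $m \geq c_1 \mu c_s r \log^4 n$. With probability $1-1/n^2$ the minimizer coincides with $\hat\zb$, after which $\hat\xb$ is recovered coordinate-wise via $\hat x[k] = g[k]/\hat l[k]$ outside the null set of $\hat l$, and from the protected measurements at the nulls.

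The only delicate point — and the place that I expect to require the most care in a rigorous write-up — is the correct interpretation of the incoherence parameter $\mu$: it must be taken as the standard incoherence in \eqref{eq:incoherence0} of the rank-$r$ matrix $\hank(\hat\zb)$, not of $\hank(\hat\xb)$, which need not be low rank at all. A secondary bookkeeping point is handling the indices in $\Omega$ at which $\hat l[k]=0$, but these are absorbed into the measurement set by construction and do not affect the probabilistic argument, since Theorem~\ref{thm:uniqueness} tolerates such a deterministic augmentation of $\Omega$. Beyond these two items I foresee no conceptual obstruction, so the argument should be a short chain of references to Theorems~\ref{thm:hrank} and~\ref{thm:uniqueness}.
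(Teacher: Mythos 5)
Your proposal is correct and follows essentially the same route as the paper, which proves this theorem (like Theorems~\ref{thm:unique_dirac} and~\ref{thm:unique_ppoly}) as a direct consequence of Theorem~\ref{thm:hrank} applied to the weighted spectrum $\hat\zb = \hat\lb \odot \hat\xb$ (whose minimum annihilating filter from \eqref{eq:afilter} has length $r+1$, giving $\rank\,\hank(\hat\zb)=r$) combined with Theorem~\ref{thm:uniqueness} with $\alpha=4$ for the non-wrap-around Hankel lifting. Your two clarifications --- that $\mu$ is the standard incoherence of the rank-$r$ matrix $\hank(\hat\zb)$ rather than of $\hank(\hat\xb)$, and that spectral-null indices of $\hat\lb$ are retained as measurements --- match the paper's conventions stated after $(P_w)$ and correctly resolve the slight imprecision in the theorem's phrasing that ``$\hat\xb$'' is the unique minimizer of \eqref{eq:EMaC2}.
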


\begin{theorem}
\label{thm:stability_spline}
Suppose  that  the noisy Fourier data $\hat{\yb}$ satisfies $\norm{P_\Omega (\hat \lb \odot \hat\yb - \hat \lb \odot \hat{\xb})}_2 \leq \delta$, where $\hat\xb$ denotes the noiseless discrete  Fourier data in \eqref{eq:xb}.
Under the hypotheses of Theorem~\ref{thm:unique_spline}, there exists an absolute constant $c_1,c_2$ such that
with probability $1 - 1/n^2$, the solution $\gb$ to \eqref{eq:EMaC2_noisy} satisfies
\[
\norm{\hank (\hat \lb \odot \hat{\xb}) - \hank(\gb)}_{\mathrm{F}} \leq c_2 n^2 \delta,
\]
provided that \eqref{eq:samp_comp_spline} is satisfied with $c_1$.
\end{theorem}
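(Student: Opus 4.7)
The plan is to reduce Theorem~\ref{thm:stability_spline} to Theorem~\ref{thm:stability} by passing to the weighted Fourier vector $\hat \zb := \hat \lb \odot \hat \xb$. As already observed in the derivation leading to $(P_w)$, the spectrum $\hat z(f) = \hat l(f)\hat x(f)$ agrees with the Fourier transform of the innovation $w = \mathrm{L} x$, which is a stream of $r$ Diracs; hence the same $(r+1)$-tap annihilating filter of \eqref{eq:afilter} annihilates $\{\hat z[k]\}$. Theorem~\ref{thm:hrank} therefore gives $\rank \hank(\hat\zb) = r$ under the dimension hypothesis $\min\{n-d+1,d\} > r$, and the matrix $\hank(\hat\zb)$ is the natural object to which the noisy structured matrix completion theory must be applied.

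Next, I would rewrite $(P_w')$ in the form of \eqref{eq:nucmin_noisy}. Setting $\Lc = \hank$ and making the substitutions $\xb \mapsto \hat\zb$ and $\yb \mapsto \hat\lb \odot \hat\yb$, the program $(P_w')$ becomes
\[
\min_{\gb}\ \|\Lc(\gb)\|_{*}\quad \text{s.t.}\quad \|P_\Omega(\gb) - P_\Omega(\hat\lb \odot \hat\yb)\|_2 \le \delta,
\]
and the hypothesis $\|P_\Omega(\hat\lb \odot \hat\yb - \hat\lb \odot \hat\xb)\|_2 \le \delta$ is precisely the noise condition required by Theorem~\ref{thm:stability} applied to $\hat\zb$. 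The incoherence parameter $\mu$ appearing in \eqref{eq:samp_comp_spline} is the standard incoherence parameter of $\hank(\hat\zb)$ carried over from the hypothesis of Theorem~\ref{thm:unique_spline}, and \eqref{eq:samp_comp_spline} coincides with \eqref{eq:samp_comp} for the Hankel lift, with $c_s = \max\{n/(n-d+1), n/d\}$ and exponent $\alpha = 4$. Invoking Theorem~\ref{thm:stability} then yields
\[
\|\hank(\hat\zb) - \hank(\gb)\|_{\mathrm{F}} \le c_2 n^2 \delta
\]
with probability at least $1 - 1/n^2$, which is exactly the claimed bound once one recalls $\hat\zb = \hat\lb \odot \hat\xb$.

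The main obstacle, as I see it, is little more than careful bookkeeping: one must verify (i) that the random index multi-set $\Omega$ is independent of the weight $\hat\lb$, so the i.i.d. uniform sampling model of Theorem~\ref{thm:stability} is preserved after weighting; (ii) that the standard incoherence assumption is in fact a statement about $\hank(\hat\zb)$ rather than $\hank(\hat\xb)$, a subtle point since $\hat\lb$ multiplicatively modulates the spectrum before the Hankel lift; and (iii) that any feasibility subtlety coming from the subsequent division $\hat x[k] = g[k]/\hat l[k]$, used downstream to recover the spectrum of $x$, is irrelevant here because the claimed error bound is stated on the weighted Hankel lift itself. No analytic machinery beyond that already used in the proofs of Theorems~\ref{thm:stability_dirac} and \ref{thm:stability_ddirac} is required; the theorem is obtained by a single application of Theorem~\ref{thm:stability} to the weighted spectrum $\hat\zb$.
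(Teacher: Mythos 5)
Your proposal is correct and takes essentially the same route as the paper: the paper treats this theorem as an immediate corollary of Theorem~\ref{thm:stability} applied to the weighted spectrum $\hat\lb \odot \hat\xb$, whose Hankel lift $\hank(\hat\lb \odot \hat\xb)$ has rank $r$ by the annihilating-filter argument (Theorem~\ref{thm:hrank} with the filter of \eqref{eq:afilter}), exactly as you argue. Your bookkeeping points --- that the incoherence parameter in \eqref{eq:samp_comp_spline} refers to the weighted Hankel matrix and that the downstream division $\hat x[k]=g[k]/\hat l[k]$ plays no role in the stated Frobenius bound --- are consistent with the paper's treatment.
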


\subsection{Piecewise Polynomials}

A
signal is a periodic piecewise polynomial with $r$ pieces
each of maximum degree $q$ if and only if its $(q+1)$ derivative
is a stream of differentiated Diracs given by 
\begin{eqnarray}\label{eq:ppoly}
x^{(q+1)}(t) = \sum_{j=0}^{r-1}  \sum_{l=0}^{q}a_{j,l}\delta^{(l)} (t-t_j) \ .
\end{eqnarray}
In this case,  the corresponding Fourier transform relationship is given by
\begin{eqnarray}\label{eq:poly}
\hat z(f) := (i2\pi f)^{(q+1)} \hat x(f)  = \sum_{j=0}^{r-1} \sum_{l=0}^q a_{j,l} (i2\pi f)^l e^{-i2\pi  t_j}.
\end{eqnarray}
Since the righthand side of \eqref{eq:poly} is a special case of  \eqref{eq:ddiracf},  the associated minimum length annihilating filter has the following z-transform representation:
\begin{eqnarray}\label{eq:dann}
\hat h(z) = \prod_{j=0}^{r-1}(1-u_j z^{-1})^{q} \  .
\end{eqnarray}
whose  filter length is given by $(q+1)r+1$.
Therefore, we can provide the following performance guarantee:
\begin{theorem}
\label{thm:unique_ppoly}
For a given piecewise smooth polynomial  in Eq.~\eqref{eq:ppoly}, let $\hat \zb$ denotes the discrete spectral samples of $\hat z(f)=\hat l(f)\hat x(f)$
with $\hat l(f)=(i2\pi f)^{(q+1)}$.
Suppose, furthermore,   $d$ is given by $\min\{n-d+1,d\} > (q+1)r $ and  $\Omega = \{j_1,\ldots,j_m\}$ be a multi-set consisting of random indices
where $j_k$'s are i.i.d. following the uniform distribution on $\{0,\ldots,n-1\}$.
Then,  there exists an absolute constant $c_1$ such that
$\hat\xb$ is the unique minimizer to \eqref{eq:EMaC2} with probability $1 - 1/n^2$, provided
\begin{equation}
\label{eq:samp_comp_ppoly}
m \geq c_1  \mu c_s (q+1)r \log^4n,
\end{equation}
where  $c_s := \max\{n/(n-d+1), n/d\}$.
\end{theorem}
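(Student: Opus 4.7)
The plan is to deduce this theorem as an immediate specialization of Theorem~\ref{thm:hrank} and Theorem~\ref{thm:uniqueness}, applied not to $\hat\xb$ directly but to the weighted spectrum $\hat\zb := \hat\lb \odot \hat\xb$ corresponding to the polynomial symbol $\hat l(f) = (i2\pi f)^{q+1}$. The starting observation is that the $(q+1)$-st derivative of $x$ is a stream of differentiated Diracs of the form in \eqref{eq:ppoly}, so by \eqref{eq:poly} the samples $\hat z[k]$ are annihilated by the polynomial filter whose $z$-transform is given in \eqref{eq:dann}, of length $(q+1)r + 1$; this is the minimum length, since each root $u_j = e^{-i2\pi t_j}$ must appear with multiplicity at least $q+1$ in the annihilator in order to cancel all derivative orders up to $q$.

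With that established, I would invoke Theorem~\ref{thm:hrank} on $\hat\zb$: under the hypothesis $\min\{n-d+1, d\} > (q+1)r$, the Hankel lifting $\hank(\hat\zb) \in \Hc(n,d)$ has rank exactly $(q+1)r$. I would then apply Theorem~\ref{thm:uniqueness} with $\Lc = \hank$, taking $\hat\zb$ in place of $\xb$ and $r' = (q+1)r$ in place of the rank. Since the ordinary Hankel class $\Hc(n,d)$ does not possess the wrap-around property, the log-exponent in the sample complexity is $\alpha = 4$, and the bound \eqref{eq:samp_comp} becomes precisely \eqref{eq:samp_comp_ppoly} with $c_s = \max\{n/(n-d+1), n/d\}$. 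The constraint in \eqref{eq:EMaC2} is $P_\Omega(\gb) = P_\Omega(\hat\zb)$, so the random sampling hypothesis transfers verbatim; the unique minimizer is $\hat\zb$, from which $\hat\xb$ is then recovered by the pointwise division $\hat x[k] = g[k]/\hat l[k]$, with the null index of the polynomial symbol $\hat l$ included in $\Omega$ as noted beneath \eqref{eq:EMaC2_noisy}.

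The principal subtlety, and the only piece not carried automatically by the above reduction, is the interpretation of the incoherence parameter $\mu$. In Theorem~\ref{thm:uniqueness}, $\mu$ is the standard incoherence of the specific low-rank matrix being completed, here $\hank(\hat\zb)$, and the polynomial weighting $(i2\pi k)^{q+1}$ amplifies high-frequency Fourier samples aggressively, which can in principle enlarge $\mu$ relative to the unweighted stream-of-differentiated-Diracs case. Because the theorem statement absorbs this dependence into the symbol $\mu$, the formal proof is a direct substitution requiring no additional machinery beyond Theorems~\ref{thm:hrank} and~\ref{thm:uniqueness}; a careful accounting of how $\mu$ scales with the minimum node separation $\min_{i\neq j}|t_i - t_j|$ and with the polynomial degree $q$ would be the natural next refinement, parallel to the cardinal-spline analysis promised in Section~\ref{sec:cardinal}.
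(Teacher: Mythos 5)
Your proposal is correct and follows essentially the same route as the paper, whose entire proof reads: ``This is a simple consequence of Theorem~\ref{thm:hrank} and Theorem~\ref{thm:uniqueness}, because the minimum annihilating filter size from \eqref{eq:dann} is $(q+1)r+1$'' --- you simply make explicit the substitution of the weighted spectrum $\hat\zb = \hat\lb \odot \hat\xb$ into Theorems~\ref{thm:hrank} and~\ref{thm:uniqueness} with $\alpha=4$, the reinterpretation of $\mu$ as the incoherence of $\hank(\hat\zb)$, and the final pointwise division, all of which the paper leaves implicit. Your multiplicity count is also the right one: each root $u_j$ must appear with multiplicity $q+1$ to cancel derivative orders $0$ through $q$, consistent with the paper's stated filter length $(q+1)r+1$ (the displayed exponent $q$ in \eqref{eq:dann} is a typo for $q+1$, cf.\ \eqref{eq:dann0} with $l_j = q+1$).
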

\begin{proof}
This is a simple consequence of Theorem~\ref{thm:hrank} and Theorem~\ref{thm:uniqueness}, because 
the minimum annihilating filter size from \eqref{eq:dann} is $ (q+1)r+1$.
\end{proof}

\begin{theorem}
\label{thm:stability_ppoly}
Suppose  that noisy Fourier data $\hat{\yb}$ satisfies $\norm{P_\Omega (\hat \lb \odot \hat\yb - \hat \lb \odot \hat{\xb})}_2 \leq \delta$, where $\hat\xb$ denotes the noiseless discrete  Fourier data in \eqref{eq:xb}.
Under the hypotheses of Theorem~\ref{thm:unique_ppoly}, there exists an absolute constant $c_1,c_2$ such that
with probability $1 - 1/n^2$, the solution $\gb$ to \eqref{eq:EMaC2_noisy} satisfies
\[
\norm{\hank (\hat \lb \odot \hat{\xb}) - \hank(\gb)}_{\mathrm{F}} \leq c_2 n^2 \delta,
\]
provided that \eqref{eq:samp_comp_ppoly} is satisfied with $c_1$.
\end{theorem}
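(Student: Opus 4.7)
The plan is to reduce this to a direct application of Theorem~\ref{thm:stability}, following exactly the same template used for Theorem~\ref{thm:stability_dirac} and Theorem~\ref{thm:stability_ddirac}. The key observation, already exploited in Theorem~\ref{thm:unique_ppoly}, is that the weighted Fourier sequence $\hat\zb = \hat\lb \odot \hat\xb$ with $\hat l(f) = (i2\pi f)^{q+1}$ coincides with the Fourier samples of a stream of differentiated Diracs as in \eqref{eq:poly}, so by the annihilating filter \eqref{eq:dann} together with Theorem~\ref{thm:hrank}, $\hank(\hat\zb) \in \Hc(n,d)$ has rank exactly $(q+1)r$ whenever $\min\{n-d+1,d\} > (q+1)r$.

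First, I would relabel the problem in the notation of the general stability theorem. Set $\tilde\xb := \hat\lb \odot \hat\xb$ and $\tilde\yb := \hat\lb \odot \hat\yb$, and let $\Lc := \hank$ be the lifting to $\Hc(n,d)$. By the preceding paragraph $\Lc(\tilde\xb)$ is of rank $(q+1)r$, and it satisfies the standard incoherence condition with parameter $\mu$ by the hypotheses of Theorem~\ref{thm:unique_ppoly}. The noise hypothesis translates directly into $\|P_\Omega(\tilde\yb - \tilde\xb)\|_2 \leq \delta$, which is the feasibility tolerance appearing in \eqref{eq:EMaC2_noisy} after the same relabeling.

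Next, I would invoke Theorem~\ref{thm:stability} with this identification. The sample-complexity condition \eqref{eq:samp_comp_ppoly} matches the requirement \eqref{eq:samp_comp} of Theorem~\ref{thm:uniqueness} with $r$ replaced by $(q+1)r$, and the lifting operator $\hank$ is one of the admissible structured liftings covered by Theorem~\ref{thm:stability}. Hence, on an event of probability at least $1 - 1/n^2$, the minimizer $\gb$ of \eqref{eq:EMaC2_noisy} satisfies
\[
\|\hank(\hat\lb \odot \hat\xb) - \hank(\gb)\|_{\mathrm{F}} = \|\Lc(\tilde\xb) - \Lc(\gb)\|_{\mathrm{F}} \leq c_2 n^2 \delta,
\]
which is exactly the claimed bound.

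The only nontrivial point is verifying that nothing is lost in the reduction: the weight vector $\hat\lb$ is fixed and deterministic, so the random sampling set $\Omega$ is unaffected, and the standard-incoherence hypothesis is imposed directly on $\hank(\hat\lb \odot \hat\xb)$ (not on $\hank(\hat\xb)$) in the statement of Theorem~\ref{thm:unique_ppoly}. I expect no genuine obstacle here, since all the heavy lifting (dual certificate, golfing scheme, noise amplification bookkeeping) is already encapsulated in Theorem~\ref{thm:stability}; the present theorem is essentially a transcription of that result to the polynomial-weighted setting, mirroring how Theorem~\ref{thm:stability_ddirac} specializes Theorem~\ref{thm:stability} to the unweighted differentiated-Diracs case.
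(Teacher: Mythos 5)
Your proposal is correct and follows essentially the same route as the paper: the paper states Theorem~\ref{thm:stability_ppoly} without a separate proof precisely because it is the immediate specialization of Theorem~\ref{thm:stability} to the lifting $\Lc = \hank$ applied to the weighted data $\hat\lb \odot \hat\xb$, whose Hankel matrix has rank $(q+1)r$ by the annihilating filter argument behind Theorem~\ref{thm:unique_ppoly}. Your relabeling $\tilde\xb = \hat\lb \odot \hat\xb$, $\tilde\yb = \hat\lb \odot \hat\yb$ and your check that the incoherence hypothesis is imposed on the weighted matrix are exactly the intended reduction, mirroring how Theorem~\ref{thm:stability_dirac} and Theorem~\ref{thm:stability_ddirac} specialize the same general result.
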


\subsection{Incoherence and the Minimum Separation}
\label{sec:incoherence_min}

Note that  the proposed low-rank interpolation achieves 
near optimal sample rate 
while universally applying to different FRI signal models of the same order (e.g., stream of Diracs and stream of differentiated Diracs). 
Moreover,  even though the concept of the minimum separation between the successive spikes was essential for the performance guarantee of super-resolution 
in  Candes and Fernandez-Granda \cite{candes2014towards,candes2013super}, Tang \cite{tang2013compressed,tang2015near}, etc,  
similar expression is not observable in Theorem~\ref{thm:unique_dirac}-Theorem~\ref{thm:stability_ppoly}.
This looks mysterious. 
Therefore, the main goal of this section is  
to show that these information are still required but hidden in the incoherence parameter $\mu$.

Note that the proof in Theorem~\ref{thm:hrank} implies that  the explicit form  $\hat x[k]$ given by 
\begin{equation}\label{eq:xk}
\hat x[k]:=\sum\limits_{j=0}^{p-1}\sum\limits_{l=0}^{l_j-1}a_{j,l}k^l{\lambda_j}^k,~\quad\quad\quad {\rm where}\quad~r = \sum_{j=0}^{p-1} l_j,
\end{equation}
 is a necessary and sufficient condition to have the
low-rank Hankel matrix.
Here, $\lambda_j = e^{-i2\pi t_j}$  for FRI signals.
Furthermore, for a Hankel matrix constructed using the signal model in \eqref{eq:xk},  there exist an exponentional decomposition of Hankel matrix using {\em confluent} Vandermonde matrix \cite{boley1998vandermonde}.
Specifically, define a confluent Vandermonde matrix $\mathcal{V}_{n-d+1}  \in  \Cd^{(n-d+1)\times r}$ (resp. $\mathcal{V}_d  \in  \Cd^{d\times r}$):
\begin{eqnarray}
\mathcal{V}_{n-d+1}  = \begin{bmatrix}  C_{n-d+1}^{l_0}(\lambda_0) & C_{n-d+1}^{l_1}(\lambda_1) & \cdots & C_{n-d+1}^{l_{p-1}}(\lambda_{m_{p-1}})  \end{bmatrix} ,
\end{eqnarray}
where  the $(m,l)$ element of the sub-matrix $C_{n-d+1}^{l_i}(\lambda) \in \Cd^{(n-d+1)\times l_i}$ is given by
\begin{equation}\label{eq:CC}
\left[C_{n-d+1}^{l}(\lambda)\right]_{m,l}= \begin{cases} 0, &m<l\\   
\frac{(m-1)!}{(m-l)!} \lambda^{i-j} , & \mbox{otherwise} \end{cases} \ . 
\end{equation}
 Then,  the associated Hankel matrix $\hank(\hat\xb) \in \Hc(n,d)$ with $\min\{n-d+1,d\}>r$ has the following {\em generalized Vandermonde decomposition} \cite{badeau2006high,batenkov2013accuracy,boley1998vandermonde,batenkov2013decimated}:
\begin{eqnarray}\label{eq:genVan}
\hank(\hat\xb) =  \mathcal{V}_{n-d+1} ~\mathcal{B}~ \mathcal{V}_d^T  ,\quad 
\end{eqnarray}
where $\mathcal{V}_{n-d+1} \in \Cd^{(n-d+1)\times r}$ and $\mathcal{V}_d\in \Cd^{d\times r}$ are the confluent Vandermonde matrices and
$\mathcal{B}$ is a $r\times r$ block diagonal matrix given by
$$\mathcal{B} =  \begin{bmatrix} H_{0} & 0 & \cdots & 0 \\ 0 & H_{1} & \ddots \\ \vdots & \ddots & \ddots &  0  \\ 0 &  \cdots & 0 & H_{{p-1}} \end{bmatrix},$$
where $H_{i}$ is the $m_i \times m_i$ upper anti-triangular Hankel matrix \cite{badeau2006high}.
Because $\mathcal{B}$ in \eqref{eq:genVan} is a full rank block diagonal matrix,
for a given SVD of $\hank(\hat\xb)=U\Sigma V^H$ with $U \in \Cd^{(n-d+1)\times r}, V \in \Cd^{d\times r}$ and $\Sigma \in \Cd^{r\times r}$, we have
$$\Ran~ U = \Ran~ \mathcal{V}_{n-d+1},\quad \Ran~ V = \Ran~ \mathcal{V}_{d}. $$
Accordingly, we can derive the following upper bound of the standard coherence:
\begin{lemma}\label{lem:sigma}
For a Hankel matrix $\hank(\hat\xb)\in \Hc(n,d)$ with the decomposition  in \eqref{eq:genVan},  the standard coherence $\mu$  in \eqref{eq:incoherence0} satisfies:
\begin{eqnarray}\label{eq:kappa}
\mu &\leq & \max\left\{\frac{\zeta_{n-d+1}}{\sigma_{\min}\left(\Vc_{n-d+1}^*\Vc_{n-d+1}\right)},~\frac{\zeta_{d}}{\sigma_{\min}\left(\Vc_{d}^*\Vc_{d}\right)}\right\}
\end{eqnarray}
where $\sigma_{\min}(\cdot)$  denotes the least singular value and the constant $\zeta_N, N\in \mathbb{N}$ is defined by
\begin{eqnarray}\label{eq:eta}
\zeta_{N} &=& N   \left[\frac{(N-1)!}{(N-l_{\max})!}\right]^2 ~ \ , 
\end{eqnarray}
and $l_{\max} := \max\limits_{0\leq j\leq p-1} l_j$.
\end{lemma}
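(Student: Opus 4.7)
The plan is to exploit the generalized Vandermonde decomposition \eqref{eq:genVan} to express the left and right singular subspaces of $\hank(\hat\xb)$ as ranges of $\Vc_{n-d+1}$ and $\Vc_d$, and then bound the row norms of the projectors onto these subspaces in terms of the row norms of the (confluent) Vandermonde factors and their least singular values. Because $\Vc_{n-d+1}$ and $\Vc_d$ have full column rank $r$ (this is exactly what makes $\mathcal{B}$ effective in realizing the rank), and because $U,V$ have orthonormal columns with the same column spaces, one has the identity
\[
UU^* = \Vc_{n-d+1}\,(\Vc_{n-d+1}^*\Vc_{n-d+1})^{-1}\,\Vc_{n-d+1}^*, \qquad VV^* = \Vc_d\,(\Vc_d^*\Vc_d)^{-1}\,\Vc_d^*.
\]
This is the main structural step and reduces the coherence question to a Rayleigh-quotient estimate involving only the Vandermonde factors.

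Next, I would estimate each row of $U$ (the row of $V$ is symmetric). For any standard basis vector $\eb_i$,
\[
\|U^*\eb_i\|_2^2 \;=\; \eb_i^* UU^*\eb_i \;\leq\; \frac{\|\Vc_{n-d+1}^*\eb_i\|_2^2}{\sigma_{\min}(\Vc_{n-d+1}^*\Vc_{n-d+1})},
\]
by bounding the middle factor $(\Vc_{n-d+1}^*\Vc_{n-d+1})^{-1} \preceq \sigma_{\min}(\Vc_{n-d+1}^*\Vc_{n-d+1})^{-1} I_r$. The numerator is the squared norm of the $i$-th row of $\Vc_{n-d+1}$, which has exactly $r$ entries coming from the confluent blocks $C_{n-d+1}^{l_j}(\lambda_j)$.

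The third step is an entrywise bound on $\Vc_{n-d+1}$. Using $|\lambda_j|=1$ (recall $\lambda_j=e^{-i2\pi t_j}$) and the prescribed formula \eqref{eq:CC}, every entry in row $m$ has magnitude at most $\frac{(m-1)!}{(m-l)!}$ for some $0\leq l\leq l_{\max}-1$, which is in turn bounded by $\tfrac{(N-1)!}{(N-l_{\max})!}$ with $N=n-d+1$ for every row. Hence $\|\Vc_{n-d+1}^*\eb_i\|_2^2 \leq r\bigl[\tfrac{(N-1)!}{(N-l_{\max})!}\bigr]^2$, so that
\[
\|U^*\eb_i\|_2^2 \;\leq\; \frac{r}{N}\cdot \frac{N\bigl[\tfrac{(N-1)!}{(N-l_{\max})!}\bigr]^2}{\sigma_{\min}(\Vc_{n-d+1}^*\Vc_{n-d+1})} \;=\; \frac{r}{N}\cdot\frac{\zeta_N}{\sigma_{\min}(\Vc_{n-d+1}^*\Vc_{n-d+1})}.
\]
An identical argument on $V$ with $N=d$ gives the matching bound, and comparing with the definition \eqref{eq:incoherence0} of the standard coherence yields \eqref{eq:kappa} after taking the maximum.

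I expect the subtle point to be the justification of the range identities $\Ran U = \Ran \Vc_{n-d+1}$ and $\Ran V = \Ran \Vc_d$: one must invoke that $\mathcal{B}$ is block diagonal with anti-triangular Hankel blocks $H_i$, each of which is nonsingular (its antidiagonal entries are the leading coefficients of the polynomial factor, which cannot vanish); therefore $\mathcal{B}$ is invertible of size $r$, and together with the full column rank of the two confluent Vandermonde factors (guaranteed by distinctness of the $\lambda_j$'s and by $\min\{n-d+1,d\}>r$) we obtain both $\rank\hank(\hat\xb)=r$ and the desired range identifications. Once this is in place, the rest is the Rayleigh-quotient argument followed by the elementary factorial bound on the entries of the confluent Vandermonde matrix under $|\lambda_j|=1$.
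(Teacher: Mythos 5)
Your proposal is correct and follows essentially the same route as the paper's proof: the range identities $UU^* = \Vc_{n-d+1}(\Vc_{n-d+1}^*\Vc_{n-d+1})^{-1}\Vc_{n-d+1}^*$ and $VV^* = \Vc_d(\Vc_d^*\Vc_d)^{-1}\Vc_d^*$ (justified by the full rank of $\mathcal{B}$), the Rayleigh-quotient bound via $\sigma_{\min}$, and the factorial entrywise bound on the confluent Vandermonde rows using $|\lambda_j|=1$, yielding \eqref{eq:kappa} exactly as in the paper. Your added remark on the nonsingularity of the anti-triangular Hankel blocks merely makes explicit what the paper asserts about $\mathcal{B}$, so no substantive difference remains.
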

\begin{proof}
Since $U$ (resp. $\Vc_{n-d+1}$) and $V$ (resp. $\Vc_d$) determine the same
column (resp. row) space, we can write
\begin{eqnarray*}
UU^* &=& \Vc_{n-d+1}\left(\Vc_{n-d+1}^*\Vc_{n-d+1}\right)^{-1}\Vc_{n-d+1}^*\\
VV^* &=& \Vc_{d}\left(\Vc_{d}^*\Vc_{d}\right)^{-1}\Vc_{d}^* 
\end{eqnarray*}
Thus, we have
\begin{eqnarray*}
\max_{1\leq i \leq d}\|V^*\eb_i\|_2^2 &= & \max_{1\leq i \leq d} \eb_i^* \Vc_{d}\left(\Vc_{d}^*\Vc_{d}\right)^{-1}\Vc_{d}^*\ \eb_i \\
&\leq &  \frac{1}{\sigma_{\min}\left(\Vc_{d}^*\Vc_{d}\right)} \max_{1\leq i \leq d}\|\Vc_{d}^*\eb_i\|^2 
\end{eqnarray*}
Moreover, we have
\begin{eqnarray*}
 \max_{1\leq i \leq d} \|\Vc_{d}^*\eb_i\|^2  &=&  \|\Vc_{d}^*\eb_{d}\|^2  \\
 &=& \sum_{j=0}^{p-1} \sum_{l=1}^{l_j} \left[ \frac{(d-1)!}{(d-l)!}\right]^2 \\
 &\leq & r  \left[ \frac{(d-1)!}{(d-l_{\max})!}\right]^2
\end{eqnarray*}
where we use \eqref{eq:CC} and  $r=\sum_{j=0}^{p-1} l_j$.
Similarly, 
\begin{eqnarray*}
\max_{1\leq i \leq {n-d+1}}\|U^*\eb_i\|_2^2   &\leq & \frac{r}{\sigma_{\min}\left(\Vc_{n-d+1}^*\Vc_{n-d+1}\right)}   \left[ \frac{(n-d)!}{(n-d+1-l_{\max})!}\right]^2
\end{eqnarray*}
Therefore, using  the definition of $\mu$ in \eqref{eq:incoherence0},  we can obtain \eqref{eq:kappa}.  This concludes the proof.
\end{proof}

Note that this is  an extension of the approach in \cite[Appendix C, Section III.A]{chen2014robust} which tried to bound the mutual coherence for the standard Vandermonde decomposition,
(i.e. $l_0=\cdots= l_{p-1}=1$)  by a small number.
Specifically, for the cases of random frequency locations 
or  small perturbation  off the grid, they showed that the incoherence parameters become small  \cite{chen2014robust}. However, the dependency of $\mu$
on the minimum separation was not explicit in their discussion.

Recently,   Moitra  \cite{moitra2015super} discovered a very intuitive relationship between  the least/largest singular values 
of Vandermonde matrix  and  the minimum separation distance 
$\Delta = \min_{i\neq j} |t_i-t_j|.$  
Specifically, by making novel connections between extremal functions and the spectral properties of Vandermonde
matrices $\Vc_N$, Moitra  showed that  if $N> 1/\Delta +1$, then  the least singular value is bounded as
$$\sigma_{\min}(\Vc_N^*\Vc_N)  \geq {N-1/\Delta-1}.$$
If applied in our problem  involving  $\Vc_{n-d+1}$ and $\Vc_d$, 
 the resulting upper bound of the coherence parameter  for standard Vandermonde matrix is given by
\begin{equation}\label{eq:muup}
\mu \leq \frac{n/2}{n/2-1/\Delta-1} ,
\end{equation}
which approaches to one  with sufficiently large $n$.
 Eq.~\eqref{eq:muup}  is obtained  because  the matrix pencil size $d=n/2$ 
gives the optimal trade-off between $\sigma_{\min}\left(\Vc_{n-d+1}^*\Vc_{n-d+1}\right)$ and $\sigma_{\min}\left(\Vc_{d}^*\Vc_{d}\right)$, and
$\zeta_{n/2}=n/2$ owing to $l_{\max}=1$.
 Note that this  coincides with the minimum
separation in Fernandez-Granda \cite{candes2014towards,candes2013super} and Tang \cite{tang2013compressed,tang2015near}.
However, compared to these approaches \cite{candes2014towards,candes2013super,tang2013compressed,tang2015near} that
require the minimum separation as a {\em hard} constraint,  
our approach requires it as a {\em soft} oversampling factor in terms of the incoherence parameter $\mu$.  


Then, where is the difference originated ?  We argue that this comes from  different uses of interpolation functions. 
Specifically, in  Candes, Fernandez-Granda \cite{candes2014towards,candes2013super} and Tang \cite{tang2013compressed,tang2015near},
dual polynomial function that interpolates the sign at the singularity locations should be found to construct a dual certificate. 
On other hand,  in the proposed annihilating filter based approach, the interpolating function is a smooth function that has zero-crossings at the singularity locations.
To see this,  let $\hat h[k]$ is an annihilating filter that annihilates $\hat x[k]$.  Then there exists an {\em annihilating function} $h(t)$ such that
\begin{eqnarray*}
\hat x[k] \ast \hat h[k] = 0, \forall k &\Longleftrightarrow &  x(t)h(t)= 0, \quad\forall t,
\end{eqnarray*}
so $h(t)=0$ whenever $x(t) \neq 0$. The construction of the annihilating function $h(t)$ is extremely easy and can be readily obtained by the multiplications of  sinusoids
(for example,  to null out $r$-periodic stream of Diracs within $[0,1]$, we set $f(t) = \prod_{j=0}^{r-1}(e^{i2\pi t}-e^{i2\pi t_j})$).
Moreover, this approach can be easily extended to have multiple roots, which is required for differentiated Diracs.
We believe that the ``soft constraint'' originated from annihilating function is one of the  key ingredients that
enables recovery of general FRI signals which was not possible by the existing super-resolution methods  \cite{candes2014towards,candes2013super,tang2013compressed,tang2015near}.

%


The derivation of the least singular value for the confluence Vandermonde matrix  have been also an important topic of researches \cite{batenkov16,batenkov2013decimated,batenkov2013accuracy,gautschi1974norm,gautschi1962inverses,gautschi1963inverses,gautschi1978inverses}.
In general, it will
also depend on the minimum separation distance \cite{batenkov16}.
However, the explicit tight bound of the least singular value is not available in general, so we leave this for future work.

%

\subsection{Recovery of Continuous Domain FRI Signals After Interpolation}
\label{sec:matrix_pencil}

Regardless of the unknown signal type (the stream of Diracs or a stream of differentiated Diracs),
note that 
an identical low-rank interpolator can be used.
Once the spectrum $\hat x[k]$ is fully interpolated, 
in the subsequent step,  
Prony's method and matrix pencil algorithm can identify the signal model  from 
the roots of the estimated annihilator filter as done in  \cite{vetterli2002sampling,dragotti2007sampling,maravic2005sampling}. 
Accordingly, our robustness guarantees on the low-rank matrix entries can be translated in terms of the actual signal that is recovered (for example,  on the support or amplitudes of the
spike in the case of recovery of spike superpositions).
In fact, this has been also an active area of researches \cite{batenkov16,batenkov2013decimated,batenkov2013accuracy,moitra2015super,DBLP:conf/isit/AubelB16,badeau2006high}, and we again exploit these findings
for our second step of signal recovery.
For example, see Moitra \cite{moitra2015super}  for more details on the error bound for the case of modified matrix pencil approach for recovery of Diracs.
In addition,  Batenkov
has recently generalized this for the recovery of general signals in \eqref{eq:xk} \cite{batenkov16}.
The common findings are that
the estimation error for the location parameter $\{t_j\}_{j=0}^{p-1}$ and the magnitude $a_{j,l}$ 
are bounded by the condition number of the confluent Vandermonde matrix  as well as the minimum separation distance $\Delta$.
Moreover,  matrix pencil approaches  such as  Estimation of Signal Parameters via Rotational Invariance
Techniques (ESPRIT) method \cite{roy1989esprit} is shown to stably recovery the locations \cite{DBLP:conf/isit/AubelB16,batenkov16}.

%
%
%
%
%
%
%
%
%
Here, we briefly review the matrix pencil approach for the signal recovery  \cite{badeau2006high,sarkar1995using}.
Specifically, for a given confluent Vandermonde matrix $\Vc_{n-d+1}$, let $\Vc_{n-d+1}^{\downarrow}$ be the matrix  extracted from $\Vc_{n-d+1}$ by deleting the last row.
Similarly,   let $\Vc_{n-d+1}^\uparrow$ be the matrix extracted from $\Vc_{n-d+1}$ by deleting the first row.
Then, $\Vc_{n-d+1}^{\downarrow}$  and $\Vc^\uparrow$ span the same signal subspace  and
$$\Vc_{n-d+1}^\uparrow =  \Vc_{n-d+1}^{\downarrow}J$$
where $J$ is the $r\times r$ block diagonal matrix
$$J = \begin{bmatrix} J_{m_0}(\lambda_0) &  0 & \cdots & 0 \\ 0 & J_{m_1}(\lambda_1) &  \ddots &  \vdots \\ \vdots & \ddots & \ddots & \vdots & \\
0 & \cdots &  0 & J_{m_{p-1}}(\lambda_{p-1})\end{bmatrix},$$
where $J_{m_i}(\lambda_i)$ denotes the $m_i \times m_i$ Jordan block \cite{boley1998vandermonde,badeau2006high}:
$$
J_{m_i}= \begin{bmatrix} \lambda_i & 1 &  0 & \cdots & 0 \\  0 & \lambda_i &  1 & \ddots &  \vdots \\ 
0 & 0 &\lambda_i & \ddots & 0 \\ \vdots &  \ddots  & \ddots & \ddots & 1 \\
0 & \cdots &  0 & 0 &\lambda_i \end{bmatrix} \ .
$$
In practice, the confluence Vandermonde matrix $\Vc_{n-d+1}$ is unknown, but a  $(n-d+1)\times r$ matrix
$W$ that spans the signal subspace can be estimated using singular value decomposition (SVD).
Then, we can easily see that
$$W^\uparrow = W^\downarrow \Phi$$
where $r\times r$ spectral matrix $\Phi$ is given by
$$\Phi = GJ G^{-1}$$
for some matrix $G$. 
Finally, the matrix pencil algorithm computes the eigenvalues of $\Phi$ matrix  from which the estimated poles and
their multiplicities are estimated.

\section{Guaranteed Reconstruction of Cardinal L-Splines}
\label{sec:cardinal}

\subsection{Cardinal L-Spline Model}
A cardinal spline is a special case of a  non-uniform spline where the knots are located on the integer grid  \cite{unser2010introduction,unser2014unified,unser2014unified2}.
More specifically, a function $x(t)$ is called  a {\em cardinal $\mathrm{L}$-spline} if and only if 
\beq\label{eq:Lspline}
\mathrm{L} x(t) = w(t), 
\eeq
where the operator $\mathrm{L}$ is  continuous domain whitening operator and
and the continuous domain {\em innovation}  signal  $w(t)$ is given by
\beq \label{eq:innovation}
 \quad  w(t):=\sum_{p\in \Zd} a[p] \delta(t-p)  \ ,
 \eeq
 whose singularities are located on integer grid.
 
 Even though the recovery of cardinal L-splines can be considered as special instance of that of non-uniform splines,
  the cardinal setting allows high but finite resolution, so it is closely related to standard compressed sensing framework in discrete framework.
Therefore,  this section provides more detailed discussion of recovery of cardinal L-splines from partial Fourier measurements.
The analysis in this section is significantly influenced by the theory of sparse stochastic processes  \cite{unser2010introduction}, so we follow the original authors's notation.

\subsection{Construction of  Low-Rank Wrap-around Hankel Matrix}

The main advantage of using cardinal setup is that 
 we can recover signals by exploiting the sparseness of {\em discrete innovation} rather than exploiting off the grid singularity.
So, we are now interested in deriving the discrete counterpart of the whitening operator $\mathrm{L}$, which is denoted by $\mathrm{L}_d$:
\begin{equation}\label{eq:ld}
\mathrm{L}_d \delta(t) = \sum_{p \in \Zd} l_d[p] \delta(t-p).
\end{equation}
Now,  by applying the discrete version of whitening operator $\mathrm{L}_d$ to $x(t)$,  we have
\begin{eqnarray}\label{eq:uc}
u_c(t) &:=& \mathrm{L}_d x(t) =  \mathrm{L}_d \mathrm{L}^{-1} w(t) 
= \left(\beta_L \ast w\right)(t) \\
&=& \sum_{p\in \Zd} a[p] \beta_L(t-p) \notag \ .
\end{eqnarray}
where $\beta_L(t) $ denotes  a generalized B-spline associated with the operator $\mathrm{L}$  \cite{unser2010introduction},
which is defined by
\beq
\beta_L(t) =  \mathrm{L}_d \mathrm{L}^{-1}\delta(t)  = {\cal F}^{-1} \left\{ \frac{\sum_{p\in \Zd} l_d[j] e^{-i\omega p}}{\hat l(\omega)}\right\}(t) \  ,
\eeq
where we now use  $\omega=2\pi f$ for Fourier transform to follow the notation in \cite{unser2010introduction}.
As shown in Fig.~\ref{fig:innovation}, $u_c(t)$  is indeed a {\em smoothed} version of continuous domain innovation  $w(t)$ in \eqref{eq:innovation}, because all the sparsity information of the innovation $w(t)$ is encoded in its coefficients
$\{a[p]\}$, and  aside from the interpolant $\beta_L(t)$,  $u_c(t)$ in \eqref{eq:uc} still retains the same coefficients. 
Moreover,  the sparseness of  sampled discrete innovation on the integer grid can be  identified from the discrete samples of  $u_c(t)$:
\begin{eqnarray}
u(t) &:=& u_c(t)\sum_{p\in \Zd} \delta(t-p) \notag\\
&=&  \sum_{p\in \Zd} u_d[p] \delta (t-p)  \label{eq:un}\\
&=&\sum_{p\in \Zd}(a\ast b_L)[p] \delta(t-p)  
\end{eqnarray}
where  
\beq\label{eq:bL}
b_L[p] := \left.\beta_L(t)\right|_{t=p} .
\eeq
To make the discrete sample $u_d[p]$ sparse,  
 the discrete filter $b_L[p]$ should be designed to have the minimum non-zero support.
Due to the relationship \eqref{eq:bL}, this can be achieved if
 $\beta_L(t)$ is maximally localized.
The  associated DFT spectrum of the discrete innovation is given by
\beq\label{eq:uk}
\hat u_d[k]=  \left.  \hat u(\omega) \right|_{\omega = \frac{2\pi k}{n}}=   \sum_{j=0}^{r-1}  u_j e^{-i\frac{2\pi k i_j}{n}}
\eeq
where $\{u_j\}$ denotes the non-zero coefficient of $u[p]$ and $i_j$ refers the corresponding index.

\begin{figure}[!bt]
    \centering
	\centerline{\includegraphics[width=10cm]{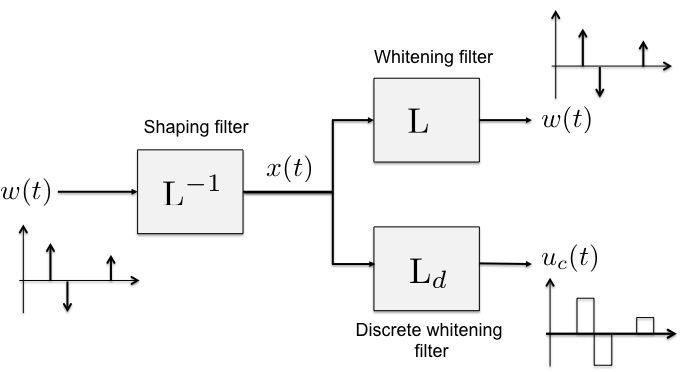}}
	\caption{
	The discrete innovation and continuous domain innovations generated by $\mathrm{L}_d$ and $\mathrm{L}$, respectively.}
	\label{fig:innovation}
\end{figure}

To exploit the sparseness of discrete innovation using the low-rank Hankel matrix, we should relate the discrete innovation to the discrete samples
of the unknown cardinal L-spline $x(t)$. 
This can be done using  an equivalent 
 B-spline representation of $x(t)$  \cite{unser2010introduction}:
\begin{eqnarray}\label{eq:bspline}
x(t) 
&=&  \sum_{p\in \Zd} c[p] \beta_L(t-p)  \  ,
\end{eqnarray}
where  $c[p]$ satisfies
$$a[p] = (c \ast l_d)[p]$$
for $a[p]$ and $l_d[p]$  in \eqref{eq:Lspline} and \eqref{eq:ld}, respectively.
Here, the equivalent B-spline representation in \eqref{eq:bspline} can be shown by: 
\beq
\mathrm{L}x(t) = \sum_{p\in \Zd} c[p] L\beta_L(t-p) = \sum_{l\in \Zd} \underbrace{(c \ast l_d)[p]}_{a[p]} \delta(t-p) \ ,
\eeq
because $\mathrm{L}\mathrm{L}_d \mathrm{L}^{-1} \delta(t) = \mathrm{L}_d \delta(t)$.
So,  we have
\begin{eqnarray}
u(t) 
&=&\sum_{p\in \Zd}(a\ast b_L)[p] \delta(t-p)  \\
&=&\sum_{p\in \Zd}({l_d\ast c} \ast b_L)[p] \delta(t-p)    \\
&=&\sum_{p\in \Zd}(l_d \ast x_d)[p] \delta(t-p)    
\end{eqnarray}
where
\beq\label{eq:xL}
 x_d[p] := \left. x(t)\right|_{t=p} = \sum_{l\in \Zd} c[l] \beta(p-l) = (c\ast b_L)[p] \ .
\eeq
Therefore, $u_d[p]=(l_d \ast x_d)[p]$ and 
the corresponding DFT spectrum is given by
\beq\label{eq:uk2}
\hat u_d[k] =  \hat l_d[k] \hat x_d[k], \quad k=0,\cdots, n-1   . 
\eeq
Because the DFT data $\hat x_d[k]$ can be computed and $\hat l_d[k]$ is known,
we can construct a Hankel matrix $\hank(\hat \ub_d) = \hank(\hat \lb_d\odot \hat \xb_d) \in \Hc(n,d)$.
Thanks to \eqref{eq:uk},   the  associated minimum size annihilating filter $\hat h[k]$  that cancels $\hat x_d[k]$  can be obtained from the following z-transform expression
\begin{eqnarray}
\hat h(z) = \prod_{j=0}^{r-1}(1-e^{-i\frac{2\pi k l_j}{n}} z^{-1})
\end{eqnarray}
whose length is $r+1$.
Therefore, we have
\begin{eqnarray}
\rank \hank(\hat \ub_d) = \rank \hank(\hat \lb_d\odot \hat \xb_d)   = r . 
\end{eqnarray} 
Moreover, due to the periodicity of DFT spectrum, 
we can use the following wrap-around Hankel matrix:
  \begin{eqnarray} \label{eq:Umy}
\hank_c(\hat \ub_d) &=& \left[
        \begin{array}{cccc}
      \hat u_d[0]  & \hat u_d[1] & \cdots   & \hat u_d[d-1]   \\
     \hat u_d[1]  & \hat u_d[2] & \cdots &   \hat u_d[d] \\
         \vdots    & \vdots     &  \ddots    & \vdots    \\
      \hat u_d[n-d]  & \hat u_d[n-d+1] & \cdots & \hat u_d[n-1]\\ \hline 
      \hat u_d[n-d+1]  & \hat u_d[n-d+2] & \cdots & \hat u_d[0] \\
           \vdots    & \vdots     &  \ddots    & \vdots    \\
            \hat u_d[n-1]  & \hat u_d[0] & \cdots & \hat u_d[d-2] \\
        \end{array}
    \right] \in \Cd^{n\times d} \  
    \end{eqnarray}
where the bottom block is an augmented block. 
Since the bottom block can be also annihilated using the same annihilating filter, we can see the rank of the wrap-around Hankel expansion
is the same as the original Hankel structured matrix:
$$\rank\hank_c(\hat \ub_d) =\rank \hank(\hat \ub_d)  = r .$$
Then,  the missing DFT coefficients can be interpolated using the following low-rank matrix completion:
\begin{eqnarray}\label{eq:EMaC3}
 &\min_{\gb\in \Cd^{n} } & \| \hank_c (\gb)\|_* \\
&\mbox{subject to } & P_\Omega(\gb) = P_\Omega(\hat \lb_d \odot \hat \xb_d) \nonumber  \  ,
\end{eqnarray}
or 
\begin{eqnarray}\label{eq:EMaC3_noisy}
 &\min_{\gb\in \Cd^{n} } & \| \hank_c (\gb)\|_* \\
&\mbox{subject to } & \|P_\Omega(\gb) - P_\Omega(\hat \lb_d \odot \hat \yb_d) \|\leq \delta  \nonumber  \  ,
\end{eqnarray}
for noisy DFT data $\hat \yb_d$.
Then, we have the following performance guarantee:
\begin{theorem}
\label{thm:unique_cardinal}
For a given cardinal L-spline $x(t)$ in Eq.~\eqref{eq:Lspline}, let 
 $\hat l_d[k]$ denotes the DFT of discrete whitening operator and
$\hat x_d[k]$ is the DFT of the discrete sample $x_d[p]$ in \eqref{eq:xL}.
Suppose, furthermore,   $d$ is given by $\min\{n-d+1,d\} > r $ and  $\Omega = \{j_1,\ldots,j_m\}$ be a multi-set consisting of random indices
where $j_k$'s are i.i.d. following the uniform distribution on $\{0,\ldots,n-1\}$.
Then,  there exists an absolute constant $c_1$ such that
$\hat\xb$ is the unique minimizer to \eqref{eq:EMaC3} with probability $1 - 1/n^2$, provided
\begin{equation}
\label{eq:samp_comp_cardinal}
m \geq c_1c_s \mu r  \log^2n .
\end{equation}
where $c_s=n/d$ and $\mu$ is the incoherence parameter.
\end{theorem}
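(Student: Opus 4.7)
The proof should follow directly by specializing the general performance guarantee for structured matrix completion, Theorem~\ref{thm:uniqueness}, to the wrap-around Hankel lifting $\Lc = \hank_c$ acting on the weighted Fourier vector $\hat{\ub}_d = \hat{\lb}_d \odot \hat{\xb}_d$. The plan is to verify three things: (i) the lifted matrix $\hank_c(\hat{\ub}_d)$ is exactly rank $r$; (ii) the lifting operator $\hank_c$ falls into the wrap-around class used in Theorem~\ref{thm:uniqueness}, so that the log exponent is $\alpha = 2$; and (iii) the parameters $n_1, n_2$ of the lifting combine to give the stated $c_s = n/d$.

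For (i), the annihilating filter derivation preceding the theorem already established that $\hat{u}_d[k] = \hat{l}_d[k]\hat{x}_d[k]$ admits the length-$(r+1)$ minimum annihilator whose $z$-transform is $\prod_{j=0}^{r-1}(1 - e^{-i2\pi l_j/n} z^{-1})$, so by the same argument as in Theorem~\ref{thm:hrank} the ordinary Hankel matrix $\hank(\hat{\ub}_d)$ has rank $r$. The jump to the wrap-around matrix $\hank_c(\hat{\ub}_d) \in \Hc_c(n,d)$ is then exactly the $n$-periodic extension of $\hat{\ub}_d$ (the DFT makes the sequence periodic with period $n$); the additional $d-1$ rows below the break in \eqref{eq:Umy} are annihilated by the same filter because the convolution relation $\hat{h} \ast \hat{u}_d = 0$ is preserved modulo $n$. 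This gives $\rank\hank_c(\hat{\ub}_d) = r$ as required.

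For (ii) and (iii), the lifting $\hank_c$ maps $\Cd^n$ into $\Cd^{n \times d}$, so in the notation of Theorem~\ref{thm:uniqueness} we have $n_1 = n$ and $n_2 = d$, giving $c_s = \max\{n/n_1, n/n_2\} = \max\{1, n/d\} = n/d$. Since each basis element $A_k$ of $\hank_c$ enjoys the periodic boundary condition built into \eqref{eq:Umy}, the lifting has the wrap-around property and Theorem~\ref{thm:uniqueness} applies with $\alpha = 2$, yielding the sample complexity
\[
m \geq c_1 \mu c_s r \log^2 n.
\]
Combined with the standard incoherence of $\hank_c(\hat{\ub}_d)$ with parameter $\mu$, the conclusion that $\hat{\lb}_d \odot \hat{\xb}_d$ (and hence $\hat{\xb}_d$ wherever $\hat{l}_d[k] \neq 0$) is the unique minimizer of \eqref{eq:EMaC3} with probability at least $1 - 1/n^2$ follows immediately.

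The principal technical point—and the only place where something genuinely needs to be checked rather than quoted—is the claim that the wrap-around Hankel structure fits the abstract basis hypothesis \eqref{eq:rowcolsp} used inside Theorem~\ref{thm:uniqueness}, which is what sharpens the log factor from $\log^4 n$ to $\log^2 n$. This is deferred to the appendix describing $\{A_k\}$ for $\Hc_c(n,d)$, but once that structural verification is in place, the present theorem is simply a corollary of the general result, exactly in parallel with how Theorem~\ref{thm:unique_dirac} and Theorem~\ref{thm:unique_spline} were derived.
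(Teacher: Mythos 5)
Your proposal is correct and follows exactly the paper's route: the paper's own proof simply invokes Theorem~\ref{thm:uniqueness} with the wrap-around property to get $\alpha=2$ and $c_s=\max\{n/n,\,n/d\}=n/d$, with the rank-$r$ claim for $\hank_c(\hat\ub_d)$ established in the text preceding the theorem via the annihilating filter and DFT periodicity, just as you argue. Your added remark that the sharpened log factor ultimately rests on the basis property \eqref{eq:samesparseAk} verified in the appendix is consistent with the paper's treatment (via Lemma~\ref{lem:ref7}), so nothing is missing.
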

\begin{proof}
The associated Hankel matrix has wrap-around property, so the log power factor is reduced to 2, and $c_s=\max\{n/n, n/d\}=n/d$. Q.E.D.
\end{proof}

\begin{theorem}
\label{thm:stability_cardinal}
Suppose  that noisy DFT data $\hat{\yb}_d$ satisfies $\norm{P_\Omega (\hat \lb_d \odot \hat\yb_d - \hat \lb_d \odot \hat{\xb}_d)}_2 \leq \delta$, where $\hat\xb_d$ is noiseless DFT
data $\hat x_d[k]$ of $x_d[p]$ in \eqref{eq:xL}. 
Under the hypotheses of Theorem~\ref{thm:unique_cardinal}, there exists an absolute constant $c_1,c_2$ such that
with probability $1 - 1/n^2$, the solution $\gb$ to \eqref{eq:EMaC3_noisy} satisfies
\[
\norm{\hank (\hat \lb_d \odot \hat{\xb}_d) - \hank(\gb)}_{\mathrm{F}} \leq c_2 n^2 \delta,
\]
provided that \eqref{eq:samp_comp_cardinal} is satisfied with $c_1$.
\end{theorem}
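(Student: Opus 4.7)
My plan is to follow the same stability template as Theorem~\ref{thm:stability} (and its specialization Theorem~\ref{thm:stability_ppoly}), now instantiated with the wrap-around Hankel lifting $\hank_c$ and the dual certificate produced in the proof of Theorem~\ref{thm:unique_cardinal}. Since Theorem~\ref{thm:unique_cardinal} is itself an application of Theorem~\ref{thm:uniqueness} to the lifting $\Lc = \hank_c : \Cd^n \to \Cd^{n\times d}$ associated with the cardinal L-spline setup, all the ingredients required by the noisy analysis — incoherence of the row/column spaces of $\hank_c(\hat\lb_d\odot\hat\xb_d)$, the sampling complexity, and the inexact dual certificate — are already in place.

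Concretely, I would set $\hb := \gb - \hat\lb_d \odot \hat\xb_d \in \Cd^n$ and first use the triangle inequality together with feasibility of both $\gb$ and $\hat\lb_d\odot\hat\xb_d$ in \eqref{eq:EMaC3_noisy} to obtain $\|P_\Omega(\hb)\|_2 \leq 2\delta$. Let $T$ denote the tangent space at $\hank_c(\hat\lb_d\odot\hat\xb_d)$ and let $T^\perp$ be its orthogonal complement in $\Cd^{n\times d}$. Because $\hat\lb_d\odot\hat\xb_d$ is feasible and $\gb$ is optimal, $\|\hank_c(\gb)\|_* \leq \|\hank_c(\hat\lb_d\odot\hat\xb_d)\|_*$, so pairing this with the inexact dual certificate supplied by the proof of Theorem~\ref{thm:uniqueness} and pinching by $P_T$ and $P_{T^\perp}$ yields the two standard bounds
\[
\|P_{T^\perp}\hank_c(\hb)\|_* \;\leq\; C_1 \sqrt{n}\,\|P_\Omega \hb\|_2, \qquad
\|P_T \hank_c(\hb)\|_F \;\leq\; C_2 n^{3/2}\,\|P_\Omega \hb\|_2,
\]
where the $n$-dependent factors come from the operator norm bounds on $P_\Omega$, $P_T$, and $\hank_c$ quantified in Lemma~\ref{lemma:uniqueness} (which is precisely the replacement for \cite[Lemma~1]{chen2014robust}). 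Combining the two displays via $\|\hank_c(\hb)\|_F \leq \|P_T\hank_c(\hb)\|_F + \|P_{T^\perp}\hank_c(\hb)\|_*$ and using $\|P_\Omega\hb\|_2 \leq 2\delta$ gives the claimed $\|\hank_c(\hat\lb_d\odot\hat\xb_d)-\hank_c(\gb)\|_F \leq c_2 n^2 \delta$.

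The one place where I expect to have to be careful is in extracting the correct $n$-power in the noise amplification constant. The improvement from the $n^3$ bound in \cite{chen2014robust} down to $n^2$ rests on tighter bounds for $\|\hank_c\|_{\ell_2\to F}$ and for the composition $P_T P_\Omega^\perp$ in the wrap-around Hankel geometry, both of which are encapsulated in Lemma~\ref{lemma:uniqueness}; a direct transfer of the bounds used in \cite{chen2014robust} would reproduce their $n^3$ factor, so the substitution must be done at the level of the two displayed inequalities above rather than as a black-box citation. Once this is verified, the remainder of the argument is a mechanical repetition of the proof of Theorem~\ref{thm:stability}, with the rank-$r$ structure guaranteed by the discrete annihilating-filter argument of Section~\ref{sec:cardinal} in place of the general Theorem~\ref{thm:hrank}.
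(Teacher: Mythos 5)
Your high-level plan does coincide with the paper's route: the paper offers no standalone proof of Theorem~\ref{thm:stability_cardinal} at all, but obtains it as a direct instance of Theorem~\ref{thm:stability} with $\Lc = \hank_c$ and $\xb$ replaced by $\hat\lb_d \odot \hat\xb_d$, and the proof of Theorem~\ref{thm:stability} consists of a one-line instruction: rerun the noisy analysis of \cite{chen2014robust} with their Lemma~1 replaced by Lemma~\ref{lemma:uniqueness}. Your opening moves (feasibility of both $\gb$ and $\hat\lb_d\odot\hat\xb_d$ in \eqref{eq:EMaC3_noisy}, hence $\norm{P_\Omega(\hb)}_2 \leq 2\delta$, followed by a $\Pc_T/\Pc_{T^\perp}$ analysis with the inexact dual certificate) are exactly the right skeleton.

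However, the execution has a genuine gap and a misattribution. First, your two displayed inequalities cannot be established as stated: they bound $\norm{\Pc_{T^\perp}\hank_c(\hb)}_*$ and $\norm{\Pc_T \hank_c(\hb)}_{\mathrm{F}}$ by multiples of $\norm{P_\Omega \hb}_2$ alone, but the unobserved component $P_{\Omega^c}\hb$ is entirely invisible to the data-fidelity constraint and is controlled only through the combination of optimality with the dual certificate. This is precisely why the noisy argument you are told to reuse splits $\hb = P_\Omega\hb + P_{\Omega^c}\hb$, bounds the lifted observed part directly in terms of $\delta$, and runs the Case-1/Case-2 machinery from the proof of Lemma~\ref{lemma:uniqueness} on the unobserved part (for which $\Ac_\Omega \Lc(P_{\Omega^c}\hb) = 0$ holds, as in \eqref{eq:pf_lemma_uniqueness:vanish}); equivalently, one carries the small-but-nonzero term $\Ac_\Omega\Lc(\hb)$ through that case analysis. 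Note also that your two displays, were they true, would combine to $O(n^{3/2}\delta)$ rather than the $n^2\delta$ you conclude, which signals that the bookkeeping is internally inconsistent. Second, you locate the $n^3 \to n^2$ improvement in "tighter bounds in the wrap-around Hankel geometry," but in the paper that improvement has nothing to do with the wrap-around structure: it comes from the relaxed certificate-deviation condition \eqref{eq:dualcert_sgn} in Lemma~\ref{lemma:uniqueness} (a $1/(7n)$ allowance in place of the earlier $O(n^{-2})$ bound, paired with the case threshold $3n$), and it therefore holds equally for the ordinary Hankel liftings, cf.\ Theorems~\ref{thm:stability_dirac} and \ref{thm:stability_ppoly}. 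What the wrap-around property buys is different and enters only through the hypotheses: the constant sparsity of the basis matrices \eqref{eq:samesparseAk} yields, via Lemma~\ref{lem:ref7}, the reduction of the log exponent from $\log^4 n$ to $\log^2 n$ in the sample complexity \eqref{eq:samp_comp_cardinal}; it plays no role in the noise amplification factor.
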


\subsection{Incoherence Condition}

Another advantage of using a cardinal set-up is that the coherence condition can be optimal
 even in the finite sampling regime.
Specifically, due to the the wrap-around property, when $d=n$,
 the singular vectors $U$  (resp. $V$) of $\Hc_c (\xb)$ are composed of $r$ columns of a normalized DFT matrix.
Thus, the standard incoherence condition is 
\begin{eqnarray}\label{eq:mucardinal}
\mu= \max\left\{  \frac{n}{r} \max_{1\leq i \leq n}\|U^*\eb_i\|_2^2,   \frac{n}{r} \max_{1\leq i \leq n}\|V^*\eb_i\|_2^2\right\}  &= & 1 \  ,
\end{eqnarray}
which is optimal. 
Note that compared to the off the grid cases in Section~\ref{sec:incoherence_min}, the optimal mutual coherence can be obtained even with finite $n$.

It is also interesting to see that the corresponding  separation is equal to the Nyquist sampling distance $\Delta = 1/n$, which appears smaller than the minimum separation condition
$2/n$ in Section~\ref{sec:incoherence_min}.
Recall that in off the grid signal reconstruction,  there always exists a limitation in choosing the matrix pencil size $d$ due to  trade-off between the condition number of $\Vc_{n-d+1}$ and $\Vc_d$.
However,  for the cardinal set-up,  thanks to the periodic boundary condition, the limitation does not exist anymore, and the net effect is doubling the effective aperture
size from $n$ to $2n$. This results in the reduction of the minimum separation in the cardinal setup.



\subsection{Regularization Effect in  Cardinal Setup}

Note that in the proposed low-rank interpolation approach for the recovery of general FRI signals,
the weighting factor $\hat l(\omega)$ used in $(P_w)$ or $(P_w')$ is basically a high pass filter that can boost up the noise contribution.
This may limit the performance of the overall low-rank matrix completion algorithm.
In fact, another important advantage of the cardinal setup is to provide a natural regularization.
More specifically, in constructing the weighting matrix  for  the low-rank matrix completion problem, instead of using the spectrum of the continuous domain
whitening operator $\mathrm{L}$,  we should use  $\hat l_d(\omega)$  of the discrete counterpart $\mathrm{L}_d$.
As will be shown in the following examples, this helps to limit the noise amplification in the associated low-rank matrix completion problem.

\begin{figure}[!bt]
    \centering
	\centerline{\includegraphics[width=8cm]{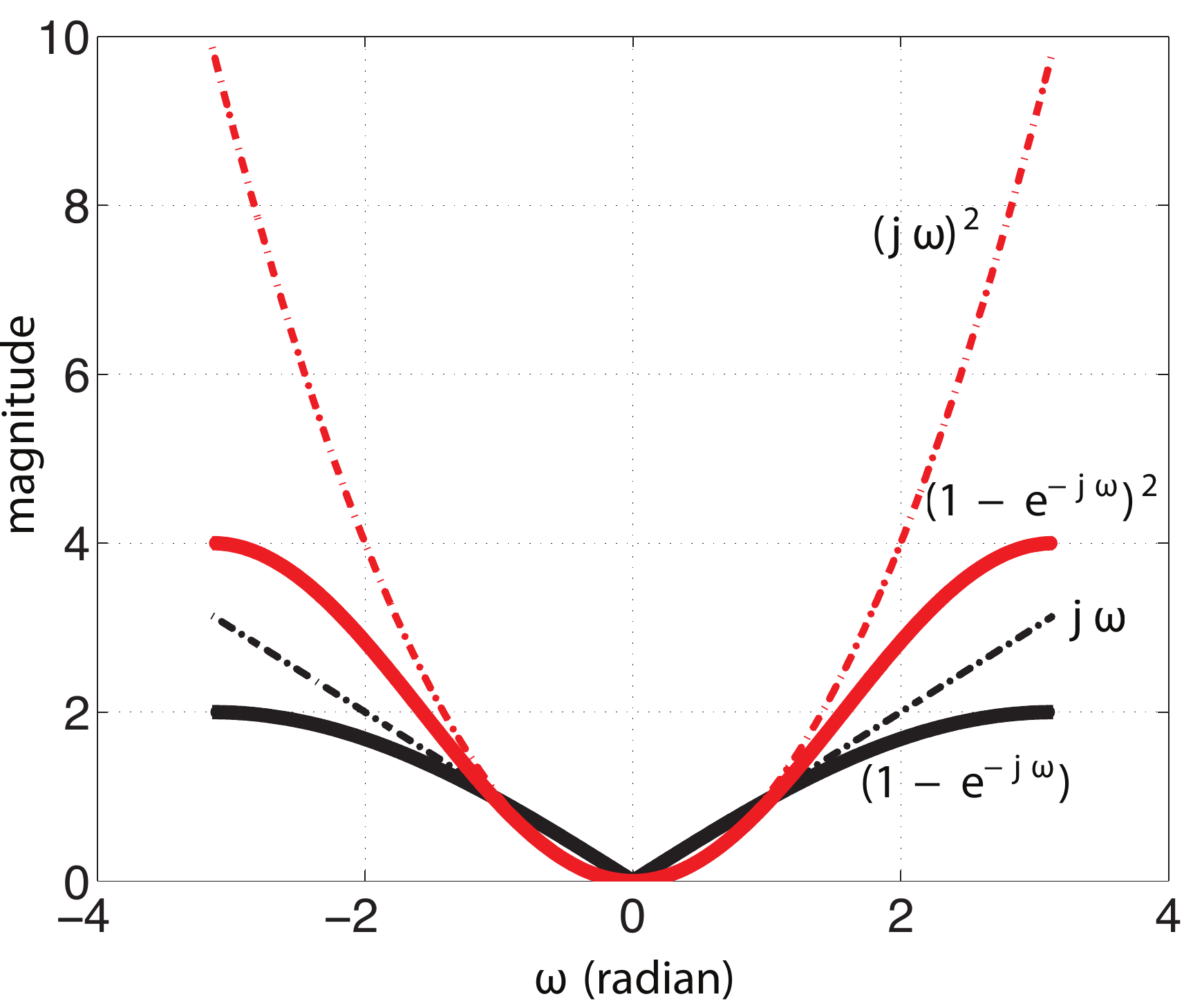}}
	\caption{
	Comparison of first- and second-order weights from whitening operator $\mathrm{L}$ and discrete counter-part $\mathrm{L}_d$.}
	\label{fig:ld}
\end{figure}

\subsubsection{Signals with Total Variation}

A signal with  total variation can be considered as a special case of \eqref{eq:Lspline} with 
$\mathrm{L} = \frac{d}{dt}$.  Then, the discrete whitening operator  $\mathrm{L}_d$ is the finite difference operator $\mathrm{D}_d$ given by
$$\mathrm{D}_d x(t) =  x(t)-x(t-1).$$
In this case, the associated L-spline is given by
\beq
\beta_L(t) = \beta_+^0(t) =  \Fc^{-1} \left\{ \frac{1-e^{-i\omega}}{i\omega} \right\}(t) = \begin{cases} 1, & \mbox{for $0\leq t<1$}\\ 0, &\mbox{otherwise} \end{cases}
\eeq
Note that this is maximally localized spline because
$b_L[p]=  \left. \beta_L(t)\right|_{t=p}= \delta[p]$ is a single-tap filter.
Therefore,  the sparsity level of the discrete innovation is equal to the number of underlying Diracs.
Moreover, the weighting function for the low-rank matrix completion problem  is given by
$$\hat l_d(\omega)  =  1-e^{-i\omega}.$$
Figure~\ref{fig:ld} compared the weighting functions that corresponds to the original whitening operator  $\hat l(\omega)=i\omega$ and the discrete counterpart $\hat l_d(\omega) = 1-e^{-i\omega}.$
We can clearly see that  high frequency boosting is reduced by the discrete whitening operator, which makes the low-rank matrix completion much more robust.

\subsubsection{Signals with Higher order Total Variation}

Consider a signal $x(t)$ that is represented by \eqref{eq:Lspline} with 
$\mathrm{L} = \frac{d^{m+1}}{dt^{m+1}}$.
Then, the corresponding discrete counterpart $\mathrm{L}_d$ should be constructed by 
$$\mathrm{L}_d \delta(t) = \underbrace{\mathrm{D}_d \mathrm{D}_d \cdots \mathrm{D}_d}_{m}\delta(t) \  . $$
In this case, the associated L-spline is given by \cite{unser2010introduction}
\begin{eqnarray}
\beta_+^m(t)  &=&  \underbrace{\left(\beta_+^0 \ast \beta_+^0  \ast \cdots \ast \beta_+^0\right)}_{m+1}(t)  \notag\\
&=&  \Fc^{-1} \left\{ \left(\frac{1-e^{-i\omega}}{i\omega}\right)^{m+1} \right\}(t) = \sum_{k=0}^{m+1} (-1)^k   \binom{m+1}{k} \frac{(t-k)^m_+}{m!}
\end{eqnarray}
with $(t)_+ = \max(0,t)$.
We can see that the length of the corresponding filter $b_L[n]$ is now given by $ m+1$.
Hence, when the underlying signal is $r$-Diracs, then the sparsity level of the discrete innovation is upper bounded by
\begin{equation}\label{eq:sparsitydouble}
(m+1) r
\end{equation}
 and the corresponding  weighting function is given by
\beq\label{eq:mld}
\hat l_d(\omega)  =  (1-e^{-i\omega})^{m+1}.
\eeq
Again, Figure~\ref{fig:ld} clearly showed that this weighting function is much more robust against noises compared to the original weighting $(i\omega)^{m+1}$.

Note that the relationship between the sparsity in \eqref{eq:sparsitydouble} and the noise reduction by \eqref{eq:mld}   clearly demonstrate the trade-off between regularization and the 
resolution in signal recovery.
Specifically, to recover high order splines,  rather than imposing the higher order weighting that is prone to noise boosting,
we can use regularised weighting  \eqref{eq:mld}  that comes from discrete whitening operator.
The catch, though,  is the necessity for additional spectral samples originated from the sparsity increase.

\subsection{Recovery of Continuous Domain Signals After Interpolation}

%

In contrast to the recovery of general FRI signals from its spectral measurements,  the reconstruction of cardinal L-spline
can be done using standard B-spline signal processing tools \cite{unser1993a,unser1993b}.
Specifically,
after recovering the DFT spectrum $\hat x[k]$ using the Hankel structured matrix completion,  a trivial application of an inverse DFT can obtain $x_d[n]$.
Then, to recover $x(t)$, we use the equivalent representation Eq.~\eqref{eq:bspline}.
More specifically, the coefficient $c[n]$ in \eqref{eq:bspline} can be computed by \eqref{eq:xL}:
$$x_d[n] = (c \ast b_L)[n].$$
Because $x_d[n]$ are already computed and $b_L[n]$ is known, the unknown coefficient $c[n]$ can be obtained using the
standard method in \cite{unser1993a,unser1993b} using recursive filtering without computationally expensive matrix inversion.
In case the operator $\mathrm{L}$ is the first differentiation, $b_L[n]=\delta[n]$, so  $c[n]$ can be readily obtained as $x_d[n]$.

\section{Algorithm Implementation}
\label{sec:algorithm}

\subsection{Noiseless  structured matrix completion algorithm} \label{sec:reconstruction} 	

In order to solve  structured matrix completion problem from noise free measurements,
we employ an SVD-free structured rank minimization algorithm \cite{signoretto2013svd} with an initialization using the
 low-rank factorization model (LMaFit) algorithm \cite{wen2012solving}.  This algorithm does not use the singular value decomposition (SVD), so the computational complexity can be significantly reduced. %
Specifically, the algorithm is based on the following observation \cite{signoretto2013svd}:
 \begin{eqnarray}\label{eq:relaxation_nuclear}
 \|A\|_* = \min\limits_{U,V: A=UV^H} \|U\|_F^2+ \|V\|_F^2 \quad \ .
 \end{eqnarray}
Hence,   it can be reformulated as 
 the nuclear norm minimization problem under  the matrix factorization constraint:
\begin{eqnarray}
\min_{U,V:\hank(\gb)=UV^H}  && \|U\|_F^2+ \|V\|_F^2 \nonumber \\
\mbox{subject to}
&&P_\Omega(\gb)=  P_\Omega(\hat \xb)
 . \label{eq:data}
\end{eqnarray}
By combining the two constraints, we have the following cost function for   an alternating direction method of multiplier  (ADMM) step \cite{boyd2011distributed}:
\begin{eqnarray}\label{eq:ADMM}
L(U,V,\gb,\Lambda) & := & \iota(\gb) + \frac{1}{2} \left( \|U\|_F^2+\|V\|_F^2\right)  \nonumber\\
&&+  \frac{\mu}{2}  \|\hank(\gb)- UV^H+\Lambda\|^2_F
\end{eqnarray}
where $\iota(\gb)$  denotes an indicator function:
$$\iota(\gb) = \left\{\begin{array}{ll} 0, & \mbox{if $P_\Omega(\gb)=  P_\Omega(\hat \xb)$} \\ \infty, & \mbox{otherwise} \end{array} \right.  \  .$$
One of the advantages of the ADMM formulation is that each subproblem is simply obtained from \eqref{eq:ADMM}. More specifically,
 $\gb^{(n+1)},  U^{(n+1)}$ and  $V^{(n+1)}$ can be obtained, respectively, by applying the following optimization problems sequentially:
 \begin{equation}\label{eq:update}
 \begin{array}{ll}
\gb^{(n+1)} =& \arg
 \min_\gb  \iota(\gb)  +  \frac{\mu}{2}  \|\hank(\gb)- U^{(n)}V^{(n)H}+\Lambda^{(n)}\|^2_F \\
 U^{(n+1)} =& \arg
 \min_U  \frac{1}{2} \|U\|_F^2 + \frac{\mu}{2}  \|\hank(\gb^{(n+1)})- UV^{(n)H}+\Lambda^{(n)}\|^2_F \\ 
  V^{(n+1)} =&\arg
  \min_V  \frac{1}{2} \|V\|_F^2 + \frac{\mu}{2}  \|\hank(\gb^{(n+1)})- U^{(n+1)}V^{H}+\Lambda^{(n)}\|^2_F 
  \end{array}
  \end{equation}
  and the Lagrangian update is given by
  \begin{eqnarray}\label{eq:lagrangian}
  \Lambda^{(n+1)} =& \Yc^{(n+1)}- U^{(n+1)}V^{(n+1)H}+\Lambda^{(n)} \ ,
 \end{eqnarray}
 where $\Yc^{(n+1)}=\hank(\gb^{(n+1)})$.
 It is easy to show that the first step in \eqref{eq:update} can be reduced to
 \begin{eqnarray}\label{eq:gb}
 \gb^{(n+1)}  = 
 P_{\Omega^c}\hank^{\dag}\left\{  U^{(n)}V^{(n)H}- \Lambda^{(n)} \right\} + P_\Omega (\hat \xb) ,
 \end{eqnarray}
 where $P_{\Omega^c}$ is a projection mapping on the set $\Omega^c$  (the complement set of $\Omega$) and
  $\hank^{\dag}$ corresponds to the Penrose-Moore pseudo-inverse mapping from our structured matrix to a vector.
Hence, the role of the pseudo-inverse is taking the average value and putting it back
to the original coordinate. 
Next, the subproblem for $U$ and $V$ can be easily calculated by taking the derivative with respect to each matrix, and we have
  \begin{equation}\label{eq:UV}
  \begin{array}{l}
 U^{(n+1)} = \mu \left(\Yc^{(n+1)}+\Lambda^{(n)}\right)V^{(n)} \left(I+\mu V^{(n)H}V^{(n)}\right)^{-1}    \\
  V^{(n+1)} = \mu\left(\Yc^{(n+1)}+\Lambda^{(n)}\right)^HU^{(n+1)} \left(I+\mu U^{(n+1)H}U^{(n+1)}\right)^{-1}   
 \end{array}
 \end{equation}

  Now, for faster convergence, the remaining issue is how to initialize $U$ and $V$. For this, we employ an algorithm called the low-rank factorization model (LMaFit) \cite{wen2012solving}.
 More specifically, for a low-rank matrix $Z$, LMaFit solves the following optimization problem:
\begin{equation}\label{eq:lmafit}
\min_{U,V,Z} \frac{1}{2} \|UV^H-Z\|^2_F~  \mbox{subject to } P_I(Z) = P_I(\hank(\hat \xb))
\end{equation}
and $Z$ is initialized with $\hank(\hat \xb)$ and the index set $I$ denotes the positions where the elements of $\hank(\hat \xb)$ are known.
LMaFit solves a linear equation with respect to  $U$ and $V$ to find their updates and relaxes the updates by taking the average between the previous iteration and the current iteration.
Moreover, the rank estimation can be done automatically. LMaFit uses QR factorization instead of SVD, so it is also computationally efficient.

\subsection{Noisy  structured matrix completion algorithm}

Similarly, the noisy matrix completion problem  can be solved by mininimizing the following Lagrangian function:
\begin{eqnarray}
L(U,V,\gb,\Lambda) & := & \frac{\lambda}{2} \|P_\Omega(\hat\yb) - P_\Omega(\gb) \|_2^2 + \frac{1}{2} \left( \|U\|_F^2+\|V\|_F^2\right)  \nonumber\\
&&+  \frac{\mu}{2}  \|\hank(\gb)- UV^H+\Lambda\|^2_F
\end{eqnarray}
where $\lambda$ denotes an appropriate regularization parameter. Compared to the noiseless cases, the  only difference is the update step of $\gb$. 
More specifically, we have
 \begin{eqnarray}
\gb^{(n+1)} =& \arg
 \min_\gb  \frac{\lambda}{2} \|P_\Omega(\hat\yb) - P_\Omega(\gb) \|_2^2  +  \frac{\mu}{2}  \|\hank(\gb)- U^{(n)}V^{(n)H}+\Lambda^{(n)}\|^2_F 
\end{eqnarray} 
which can be reduced to 
 \begin{eqnarray}
 \gb^{(n+1)}  = 
 P_{\Omega^c}\hank^{\dag}\left\{  U^{(n)}V^{(n)H}- \Lambda^{(n)} \right\} +  P_\Omega(\zb)  
 \end{eqnarray} 
 where  $\zb=[z[0],\cdots, z[n-1]]^T$ such that
 \begin{eqnarray}
 z[i]   = 
 \frac{\lambda y[i]+ \mu P_i\left(\hank^*\left(U^{(n)}V^{(n)H}- \Lambda^{(n)} \right)\right)}{\lambda + \mu P_i(\hank^*\hank(\eb_i))},  
 \end{eqnarray}
 where $\eb_i$ denotes the unit coordinate vector where the $i$-th element is 1, and $P_i$ is the projection operator to the $i$-th coordinate.

\subsection{Implemetation of ADMM}


The alternating direction method of multipliers (ADMM) described above is widely used to solve
large-scale linearly constrained optimization problems, convex or nonconvex, in many engineering
fields.
The convergence of ADMM algorithm for minimizing the sum of two or more nonsmooth convex separable functions have been well-studied,
and Hong and Luo \cite{hong2012linear} proved the linear convergence of a general ADMM algorithm with any number of blocks under linear constraints.
Even for the nonconvex problems, Hong et al \cite{hong2016convergence} further showed that   the classical ADMM converges to
the set of stationary solutions, provided that the penalty parameter  in ADMM  ($\mu$  in our case)
is chosen to be sufficiently large. 
Accordingly, to ensure the convergence of ADMM,  it is usually recommended to  use a sufficiently large $\mu$; so, we  chose $\mu=10^3$ in our implementation.

Note that the computational complexity of our ADMM algorithm is crucially determined by the matrix inversion in \eqref{eq:UV}.
  More specifically,  the computational complexity  in
   \eqref{eq:UV} in terms of multiplication is
   $\mathcal{O}((n-d+1)rd+r^3)$, whereas
 the number of multiplication required for \eqref{eq:gb} and \eqref{eq:lagrangian} is $\mathcal{O}((n-d+1)rd)$.
Thus, if the underlying signal is sparse, then
we can choose sufficiently small rank estimate $r$ and the matrix pencil size $d$ to reduce the  overall computational complexity.
Another  important issue in practice is the memory usage. Note that the $ U, V, \Yc$ as well as the Lagrangian parameter $\Lambda$ should be stored throughout the iterations
in our ADMM implementation.  The associated memory requirement is at least $(n-d+1)r+rd+2(n-d+1)d$. 
This is not an issue in our 1-D problems, but for  large size problems (especially originated for three dimensional recovery problems in medical imaging applications),
the memory requirement quickly grows, which can become a dominating computational bottleneck  in parallel implementation using memory limited graphic processor unit (GPU).

\section{Numerical Results}
\label{sec:result}

In this section,
we first perform comparative numerical study for recovering FRI signals on integer grid.
Then, we provide numerical experiments of reconstructing piecewise polynomials where the discontinuities are located in arbitrary positions.

\subsection{Recovery of On-grid Signals}

First,  we perform numerical simulations using noiseless measurements.
Specifically, we consider three scenario: 1) streams of Diracs,
 2) piecewise constant signals, and 3) super-position of Diracs and piecewise constant signals.  
  As a reference for comparsion,  the SPGL1 
   implementation of the basis pursuit (BP) algorithm  \cite{van2008probing} was used  for recovering a stream of Diracs,
whereas  the split Bregman method of $l_1$ total variation reconstruction \cite{goldstein2009split} was used  for recovering  signals in 2) and 3) scenario.
We assume that all  the singularities are located on integer grid.
 To quantify recovery performances, phase transition plots were calculated
 using  300 Monte Carlo runs.

\subsubsection{Diracs streams}

To simulate Diracs stream signals, we generated  one-dimensional vectors with the length of $100$,
where the location of Diracs are constrained on integer grid.
The spectral measurements were randomly sampled  with uniform random distribution, where the zero frequency component was always included.
This made the Fourier sensing matrix become a DFT matrix, so we can use basis pursuit using partial DFT sensing matrix.
We used the SPGL1 basis pursuit algorithm which was obtained from the original author's homepage \cite{van2008probing}. Only thing we need to set for SPGL1 was the number of iteration to 500.
For the proposed method,  
 $d$ was set to be $\lfloor n/2\rfloor+1=51$. The  other hyper-parameters for the proposed method were as following: $\mu=10^{3}$, $500$ iterations, $tol=10^{-4}$ for LMaFit.
For a fair comparison, we used the same iteration numbers and sampling patterns for  both  basis pursuit and the proposed algorithm. The phase transitions 
in  Fig.~\ref{fig:deltas} show the success ratio calculated from 300 Monte Carlo trials. Each trial from Monte Carlo simulations was considered as a success when the normalized mean square error (NMSE) is below $10^{-3}$. 
In Fig. \ref{fig:deltas}, the proposed approach provided a sharper transition curve between success and failure than that of the basis pursuit. Furthermore, a transition curve of the proposed method
(red dotted line) is higher than that of basis pursuit (blue dotted line).  

\begin{figure}[!t]
    \centering
	\centerline{\includegraphics[trim = 0mm 60mm 0mm 50mm,width=7 in]{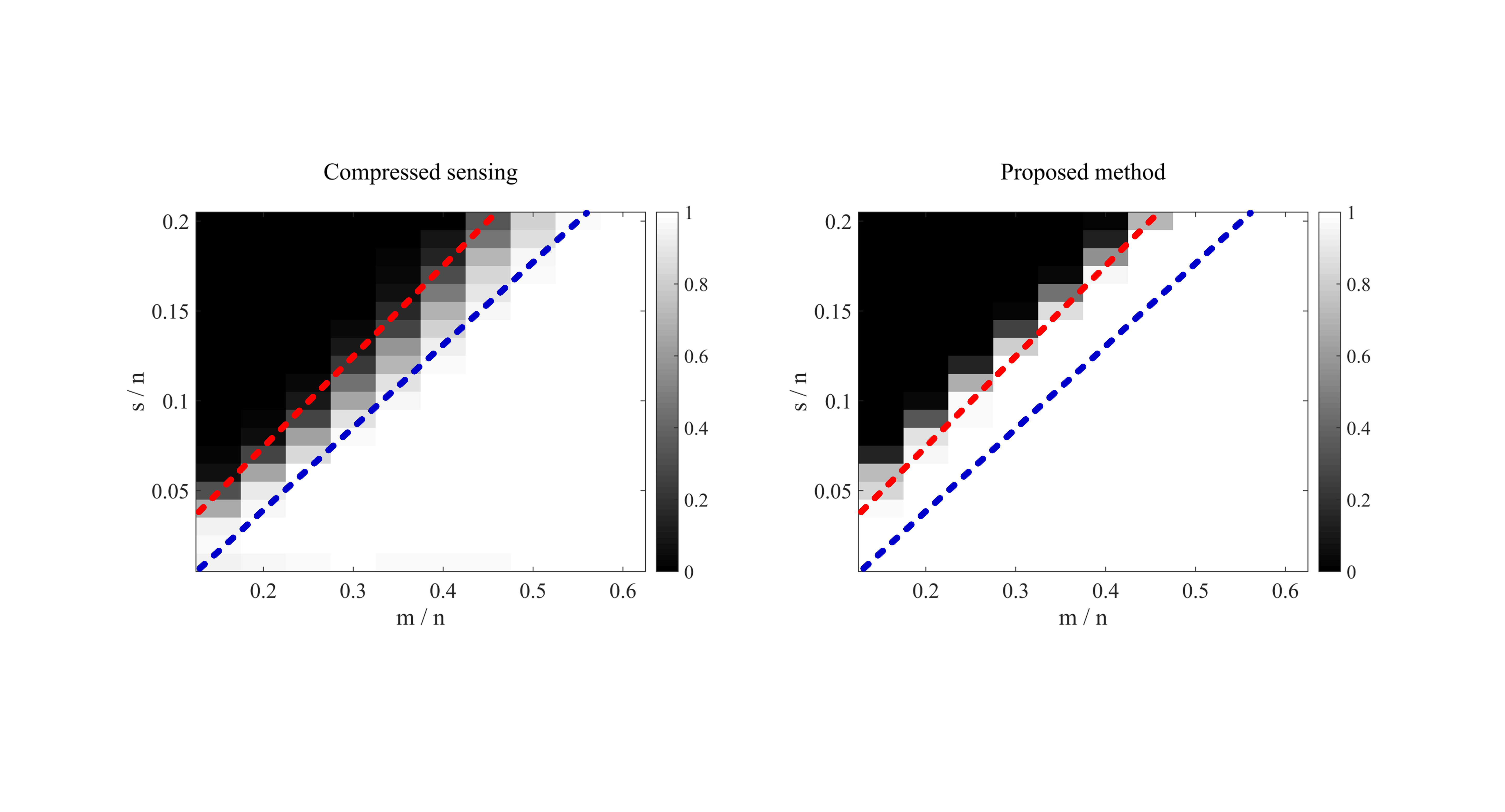}}
	\caption{
	Phase transition diagrams  for recovering stream of Diracs   from $m$ random sampled Fourier samples. The size of target signal ($n$) is $100$ and the annihilating filter size 
	$d$ was set to be $51$.  $s$ denotes the number of Diracs. The left and right graphs correspond to the phase transition diagram of the basis pursuit \cite{van2008probing} compressed sensing approach and the proposed low-rank interpolation approach,  respectively. The success ratio is obtained by the success ratio from 300 Monte Carlo runs. Two transition lines from compressed sensing (blue) and low-rank interpolator (red) are overlaid. }
	\label{fig:deltas}
\end{figure}


\subsubsection{Piecewise-constant signals}

To generate the piecewise constant signals, we first generated Diracs signal at random  locations on integer grid and added steps in between the Diracs.
The length of the unknown one-dimensional vector was again set to  $100$. To avoid boundary effect, the values at the end of both boundaries were set to zeros.
As a  conventional compressed sensing approach, we employed the  1-D version of $l_1$-total variation reconstruction  ($l_1$-TV) using the  split Bregman  method \cite{goldstein2009split}, which was modified from the original  2-D  version of $l_1$-TV  from author's homepage. 
We found that the optimal parameters for $l_1$-TV were $\mu=10^2,\lambda=1$, and the outer and inner loop iterations  of 5 and 40, respectively. 
The  hyper-parameters for the proposed method are as follows; $\mu=10^{3}$, $200$ iterations, $tol=10^{-3}$ for LMaFit.
Note that  we need $1-e^{-i\omega}$ weighting for low-rank Hankel matrix completion as a discrete whitening operator for TV signals.
The phase transition plots were calculated using  averaged success ratio  from 300 Monte Carlo trials. Each trial from Monte Carlo simulations was  considered as a success when the NMSE was below $10^{-2}$.
Because the actual non-zero support of the piecewise constant signals was basically entire domain,  the threshold was set larger than the previous Dirac experiments.

As shown in Fig. \ref{fig:pw}, a transition curve from the proposed method (red dotted line) provided a sharper and improved transition than $l_1$ total variation approach (blue dotted line). Furthermore, even in the area of success status, there were some  unsuccessful recoveries for the case of conventional method,
whereas
the proposed method succeeded nearly all the times. 
 In Fig. \ref{fig:single},  we also illustrated  sample recovery results from the same locations in the phase transition diagram, which are at the yellow star marked position in Fig.~\ref{fig:pw}.
We observed near perfect reconstruction from the proposed method, whereas severe blurring was observed in $l_1$-TV reconstruction.

\begin{figure}[!htb]
    \centering
	\centerline{\includegraphics[trim = 0mm 60mm 0mm 50mm,clip=true,width=7 in]{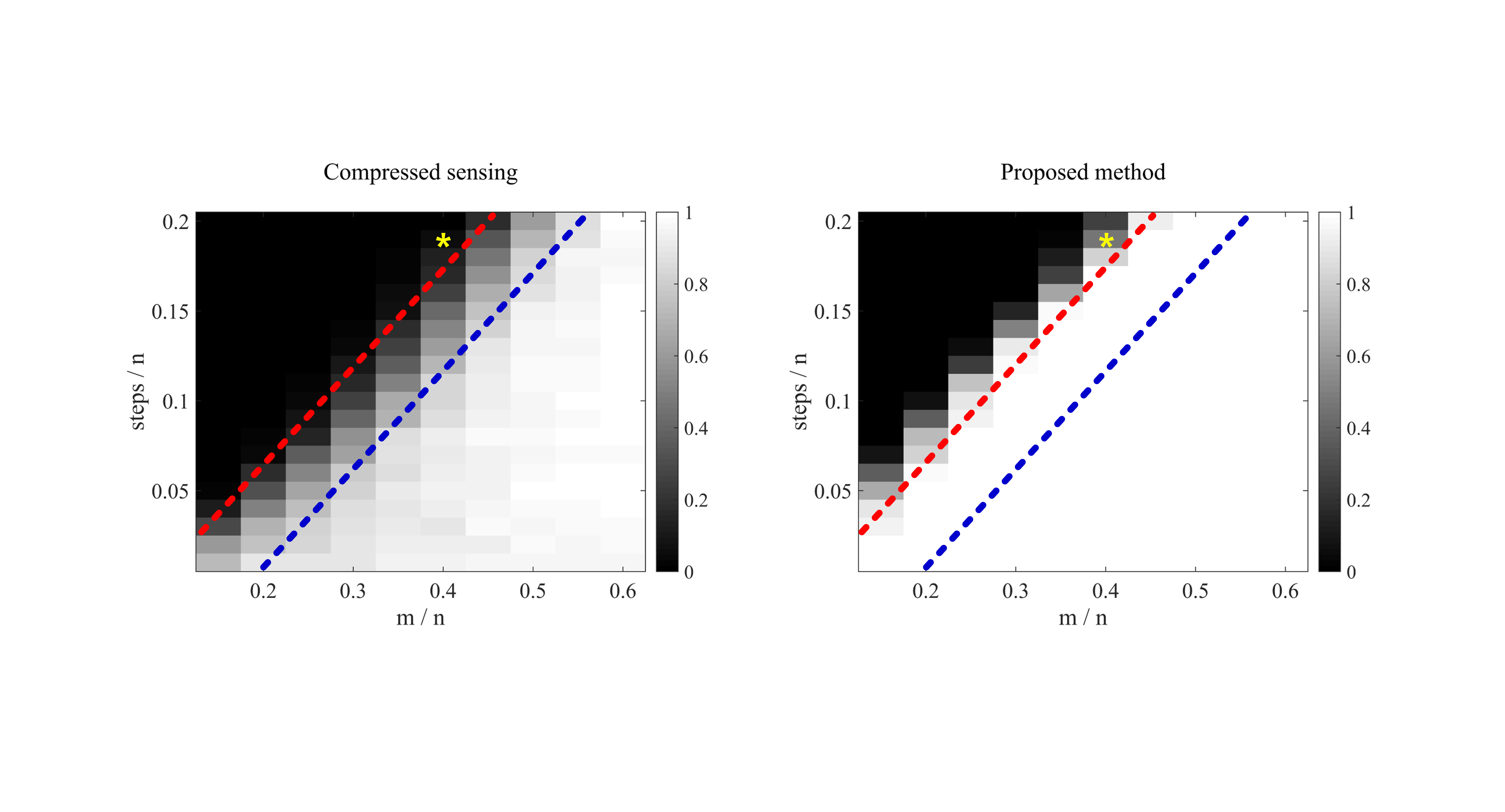}}
	\caption{
	Phase transition diagrams  for piecewise constant signals   from $m$ random sampled Fourier samples. The size of target signal ($n$) is $100$ and the annihilating filter size 
	$d$ was set to be $51$.   The left and right graphs correspond to the phase transition diagram of the $l_1$-TV  compressed sensing approach and the proposed low-rank interpolation approach,  respectively. The success ratio is obtained by the success ratio from 300 Monte Carlo runs. Two transition lines from compressed sensing (blue) and low-rank interpolator (red) are overlaid.  }
	\label{fig:pw}
\end{figure}

\begin{figure}[!htb]
    \centering
	\centerline{\includegraphics[width=8cm]{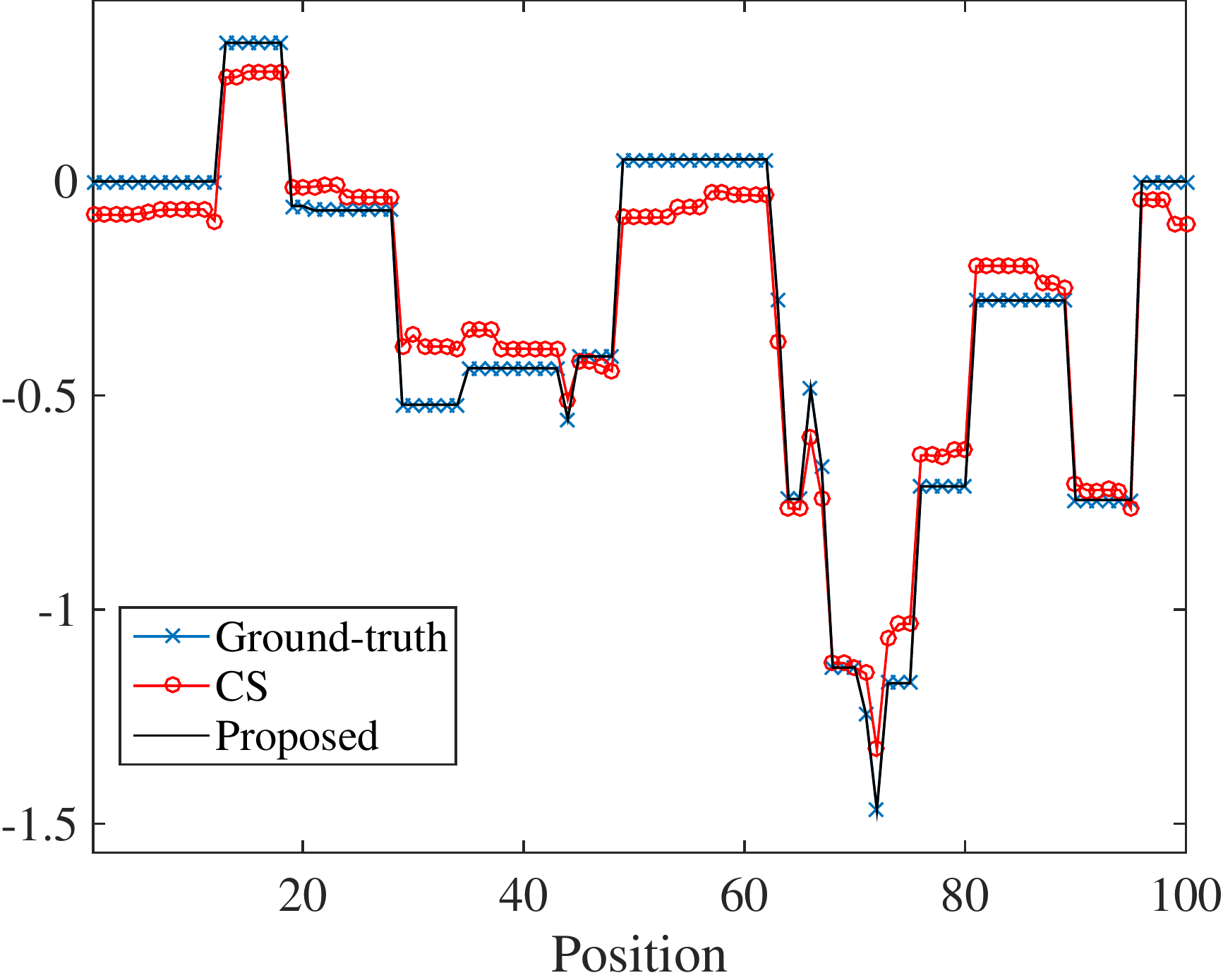}}
	\caption{
	Sample reconstruction results at the yellow star position in Fig.~\ref{fig:pw}.   
	Ground-truth signal (original),  
	$l_1$-TV  (compressed sensing) and the proposed method (low-rank interpolator) were illustrated.
	The parameters for the experiments are:  $n=100, d=51,m=40$ and the number of steps was 19.
}
	\label{fig:single}
\end{figure}

\subsubsection{Piecewise-constant signal + Diracs}

\begin{figure}[!htb]
    \centering
	\centerline{\includegraphics[trim = 0mm 60mm 0mm 50mm,clip=true,width=7 in]{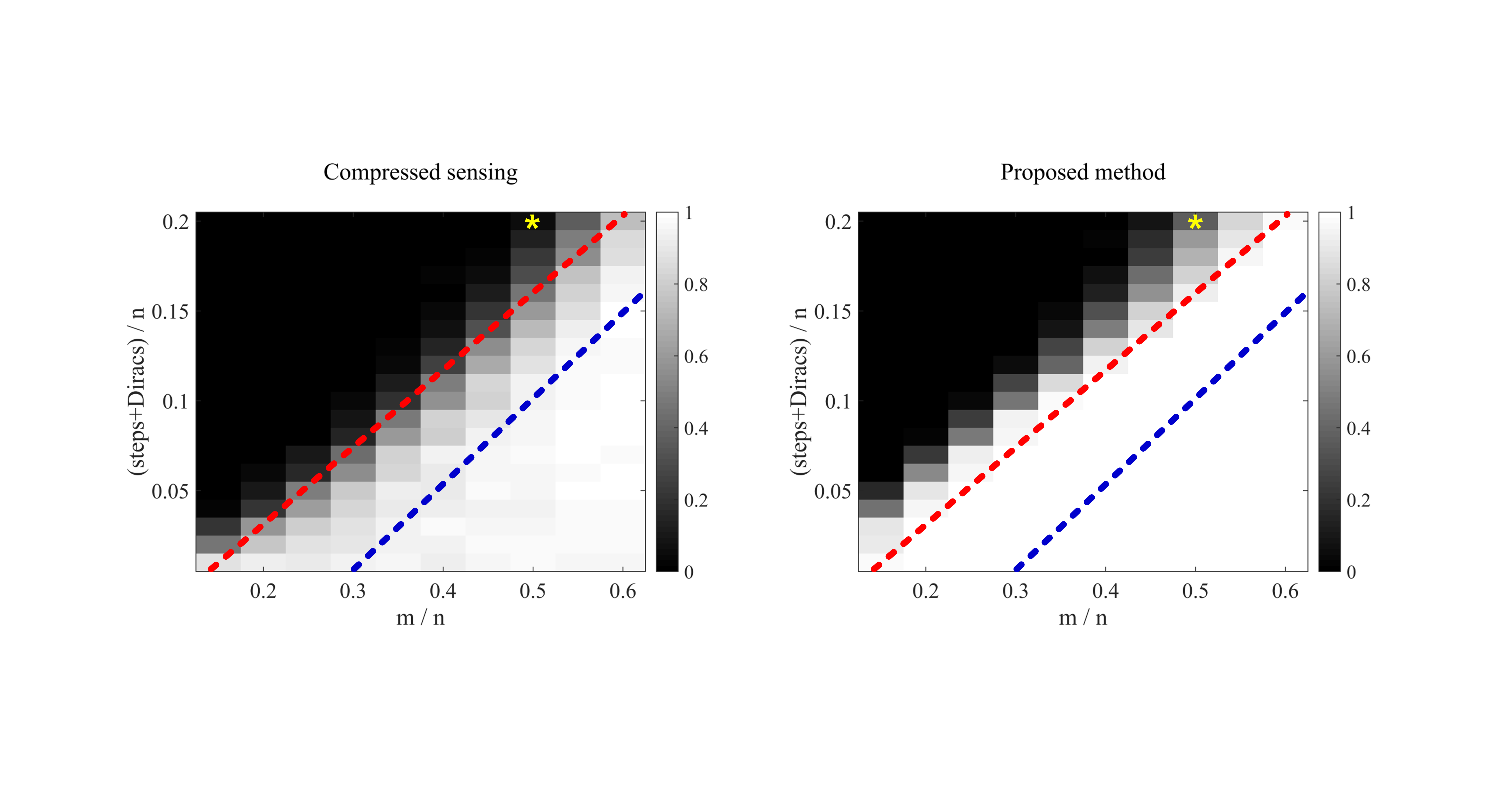}}
	\caption{
	Phase transition diagrams  for recovering super-position of piecewise constant signal and Diracs  from $m$ random sampled Fourier samples. The size of target signal ($n$) is $100$ and the annihilating filter size 
	$d$ was set to be $51$.   The left and right graphs correspond to the phase transition diagram of the $l_1$-TV compressed sensing approach and the proposed low-rank interpolation approach,  respectively. The success ratio is obtained by the success ratio from 300 Monte Carlo runs. Two transition lines from compressed sensing (blue) and low-rank interpolator (red) are overlaid.  }
	\label{fig:pw_pts}
\end{figure}

\begin{figure}[!htb]
    \centering
	\centerline{\includegraphics[width=8cm]{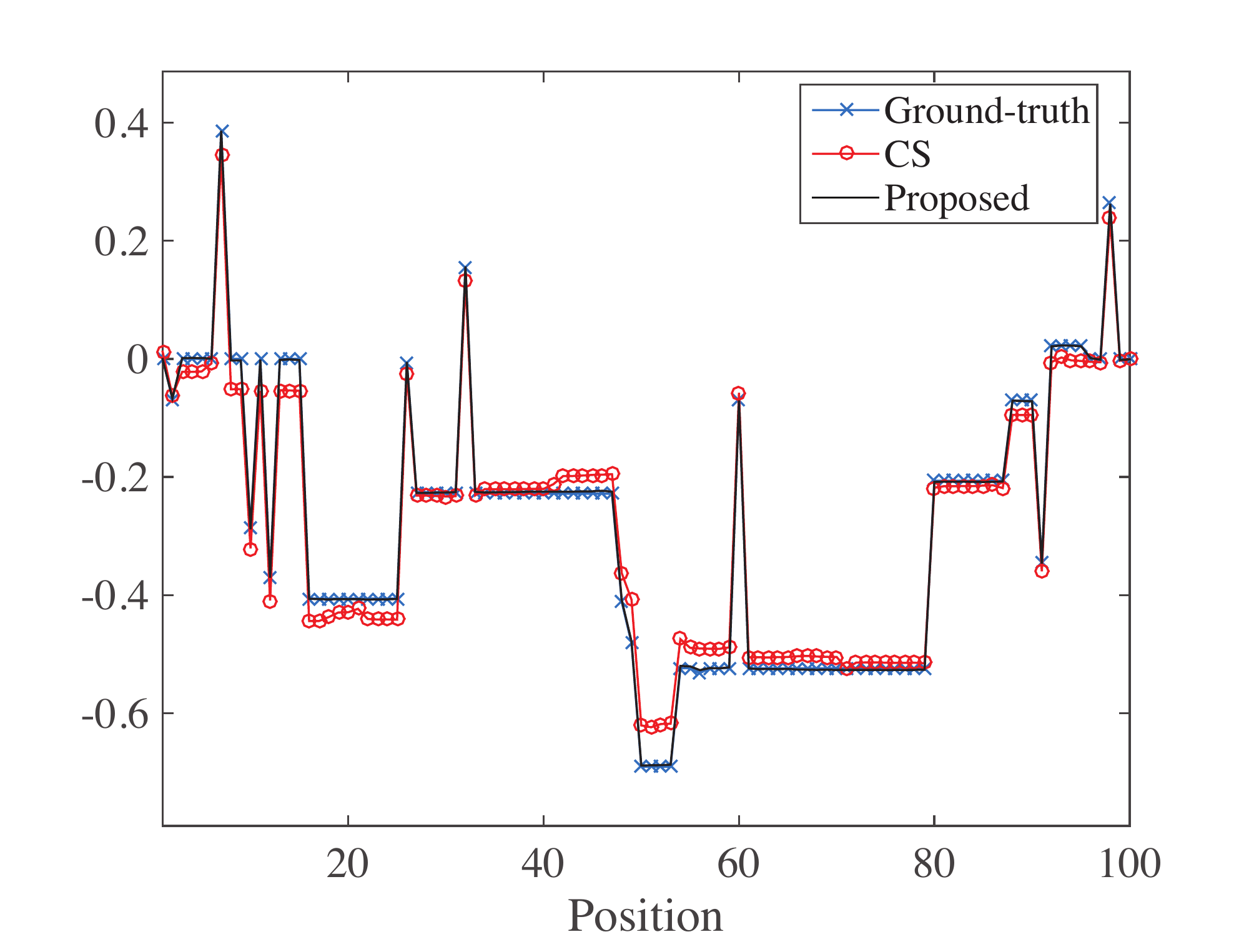}}
	\caption{
	Sample reconstruction results at the yellow star position in Fig.~\ref{fig:pw_pts}.   Ground-truth signal (original), 
	$l_1$-TV  (compressed sensing) and the proposed method (low-rank interpolator) were illustrated.  The parameters for the experiments are:  $n=100, d=51,m=50$ and the number of steps and Diracs were all 10.}
	\label{fig:single_pw_pts}
\end{figure}

We performed  additional experiments for reconstruction of  a super-position of piecewise constant signal and  Dirac spikes. 
Note that this corresponds to the first derivative of piecewise polynomial with maximum order of 1 (i.e. piecewise constant and linear signals).
The goal of this experiment was to verify the capability of recovering piecewise polynomials, but there was no widely used 
compressed sensing solution for this type of signals;  so for a fair comparison, we were interested in recovering their derivatives, because
 the conventional $l_1$-TV approach can be still used for recovering Diracs and piecewise constant signals. The optimal parameters for $l_1$-TV were $\mu=10^3,\lambda=1$, and outer and inner loop iterations  of 5 and 40, respectively. The hyper-parameters for the proposed method are as following: $\mu=10^{3}$, $200$ iterations, $tol=10^{-3}$ for LMaFit. 
In this case,  the sparsity level doubles at the Dirac locations when we use $l_1$-TV method for this type of signals. 
Similar sparsity doubling was observed in our approach.
More specifically,
  our method required the derivative operator as a whitening operator, which resulted in the first derivative of Diracs.
According to \eqref{eq:sparsitydouble},  this makes the the effective sparsity level doubled.
Accordingly, the comparison of $l_1$-TV and our low-rank interpolation approach was fair, and
the overall phase transition were expected to be inferior compared to those of piecewise constant signals. 
The simulation environment was set to be same as those of the previous piecewise constant setup except for  the signal generation. For signals, we generated equal number of  steps and Diracs. When the sparisty is an odd number, the numbers of Diracs was set to the number of steps minus 1.

As shown in Fig. \ref{fig:pw_pts}, there were  much more significant differences between the two approaches. Our algorithm still provided very clear and improved phase transition, whereas
the conventional $l_1$-TV approach resulted in a very fuzzy and inferior phase transition. 
In Fig. \ref{fig:single_pw_pts},  we also illustrated  sample recovery results from the same locations in the phase transition diagram, which are at the yellow star marked position in Fig.~\ref{fig:pw_pts}.
The proposed approach provided a near perfect reconstruction, whereas $l_1$-TV reconstruction exhibits blurrings.
This again confirms the effectiveness of our approach.

\subsubsection{Recovery from Noisy Measurements}

To verify the noise robustness, 
we  performed experiments using  piecewise constant signals by adding  the additive complex Gaussian noise to partial Fourier measurements. 
%
Fig.~\ref{fig:single_pw_pts2}(a) showed the recovery performance of the proposed low-rank interpolation method at   several signal to noise ratios (SNR).
%
All setting parameters for the proposed method are same with parameters of previous experiments except  the addition of $\lambda=10^5$. 
As expected from the theoretical results, the recovery performance was proportional to the noise level.
Fig.~\ref{fig:single_pw_pts2}(b) illustrated the example of reconstructions from 30dB noisy meausurements.  
Here, the optimal parameters for $l_1$-TV (CS) were $\mu=10^3,\lambda=1$, and outer-inner loop iterations  of 5 and 40, respectively.
The proposed approaches still provide accurate reconstruction result. 


\begin{figure}[!htb]
    \centering
	\centerline{\includegraphics[height=6cm]{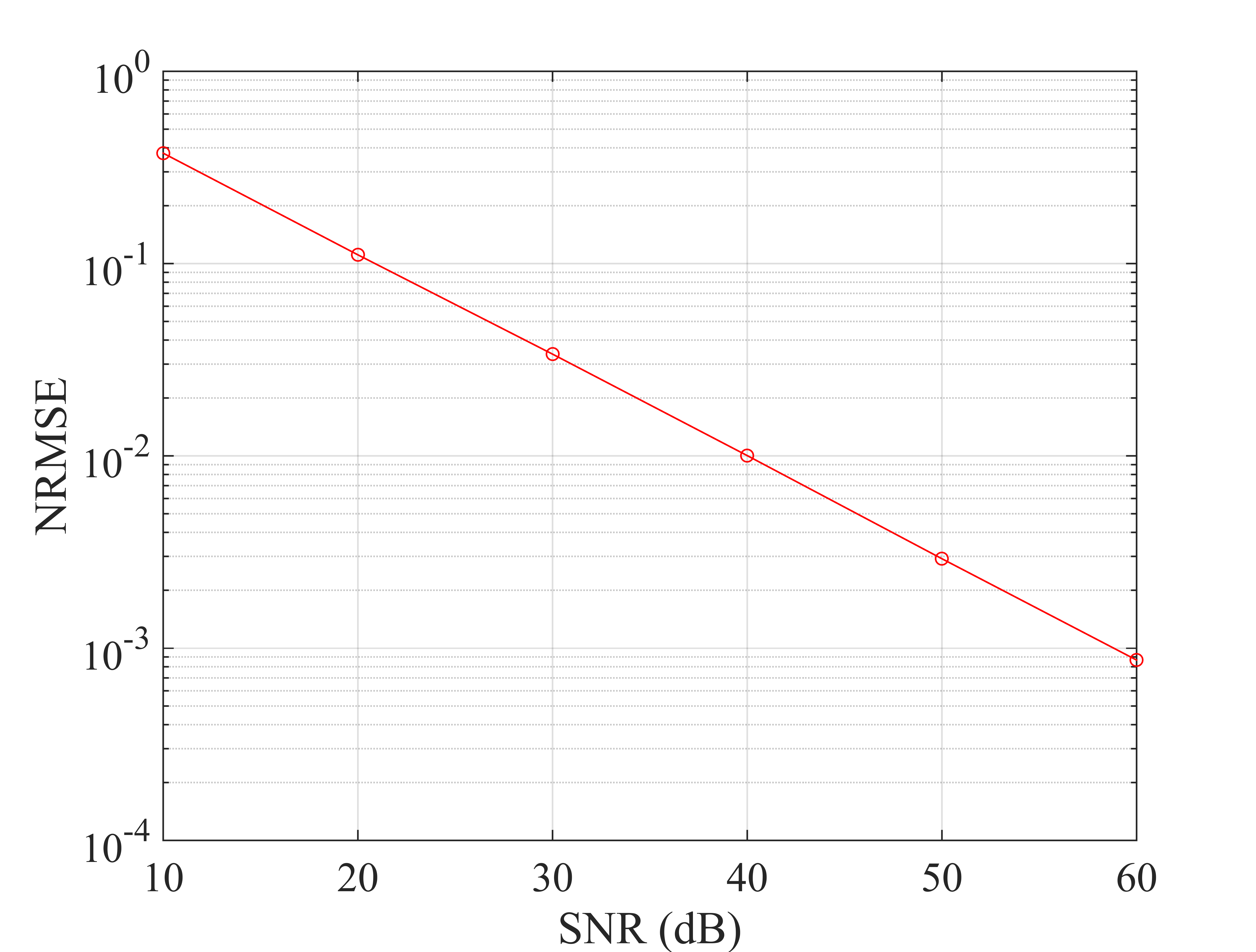}\includegraphics[height=6cm]{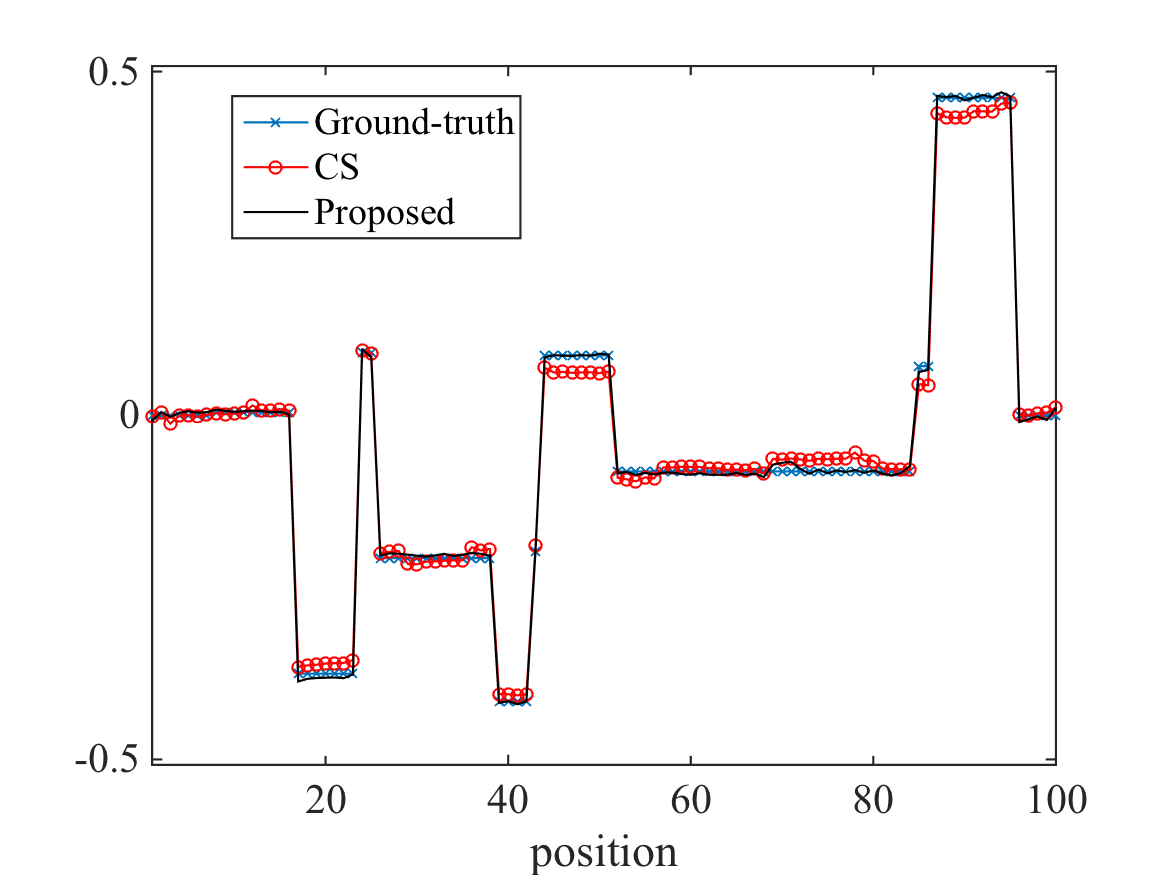}}
	\centerline{\mbox{(a)}\hspace{6cm}\mbox{(b)}}
	\caption{(a)
	The reconstruction NMSE plots  by the proposed low-rank interpolation scheme at various SNR values. For this simulation, 
	we set the annihilating filter size $d=51$, $n=100$,  the number of singularity due to steps was 10, and the number of measurements was 50.  
	(b) A reconstruction example from 30dB noisy measurement.}
	\label{fig:single_pw_pts2}
\end{figure}

\subsection{Recovery of  Off the Grid Signals}

To verify the performance of  off the grid signal recovery,
we  also performed additional experiments using piecewise constant signals  whose edges are located  on a continuous domain between $[0,1]$.
Specifically, we consider a signal composed of several rectangles whose edges are located on off-grid positions. 
Because there exists closed-form expression of  Fourier transform
of shifted rectangle functions, the measurement data could be generated accurately without using discrete Fourier transform.
%
%
Since the signal is composed of  rectangles, the  singularities are located at the edge position after the differentiation. 
Accordingly, 
the weighting
factor for low-rank interpolation was the spectrum of the continuous domain derivative, i.e. $\hat l(\omega)=i\omega$. 
Owing to the Nyquist criterion, the sampling grid in the Fourier domain corresponds to an integer grid, and 
the ambient signal dimension $n$  (which corresponds to the aperture size) was set to 100.
Then,  $m=36$ Fourier samples were randomly obtained from $[0,\cdots, 99]$. 
The parameters for the proposed low-rank interpolation
were as following: the tolerance for LMaFit = $10^{-1}$, number of iterations 300, $\mu = 10^3$.
Once the Fourier data were interpolated,
we used the matrix pencil method \cite{hua1990matrix,sarkar1995using} as described in Section~\ref{sec:matrix_pencil} for the second step  of signal recovery.

Fig. \ref{fig:offgrid_pw_pts2}(b) illustrated the interpolated measurement from 36 irregularly sampled Fourier data using the proposed low-rank interpolation. Because
the underlying signal is piecewise constant signal,  the optimal weighting $\hat l(\omega)=i\omega$ was applied for the Fourier data before the low-rank matrix completion 
was applied.  As shown in Fig. \ref{fig:offgrid_pw_pts2}(b),  near perfect interpolation was achieved.
On the other hand,  if the original Fourier data was used for the low-rank interpolation without the weighting,  then  the resulting interpolation was very different from the true
measurement (see Fig. \ref{fig:offgrid_pw_pts2}(a)). The results confirmed our theory.

 \begin{figure}[!htb]
    \centering
	\centerline{\includegraphics[width=16cm]{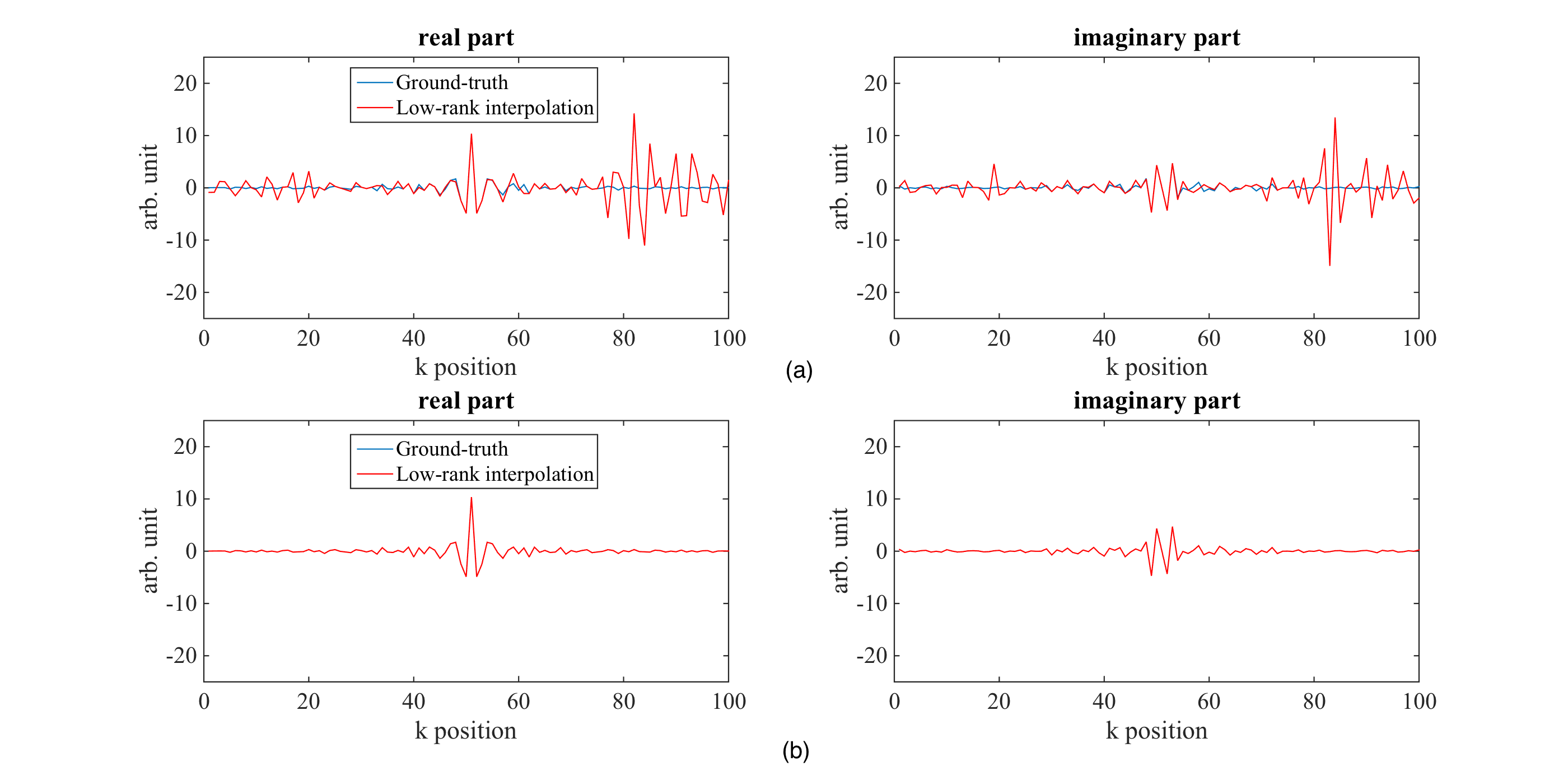}}
	\caption{
	Fully sampled Fourier measurement and the interpolated data from $m=36$ irregularly sampled data. (a) Low-rank interpolation results without spectrum weighting.
	(b) The proposed low-rank interpolation using the optimal weighting $\hat l(\omega)=i\omega$.
	 For this simulation, the following parameters were used: $d=51$, $n=100$,  and $m=36$. }
	\label{fig:offgrid_pw_pts2}
\end{figure}

Fig. \ref{fig:offgrid_recon} illustrates corresponding reconstruction results from noiseless measurements using  the proposed method. We also performed additional simulation under 40dB measurement noise.
The results clearly showed that the proposed approach accurately reconstruct the underlying piecewise constant signals.
Recall that there are no existing off the grid reconstruction algorithm
for piecewise constant signals. 
Therefore, the near perfect reconstructions by the proposed method clearly show that our theory is quite general so that it can be used for recovery of general FRI signals.


\begin{figure}[!htb]
    \centering
	\centerline{
	\includegraphics[width=8cm]{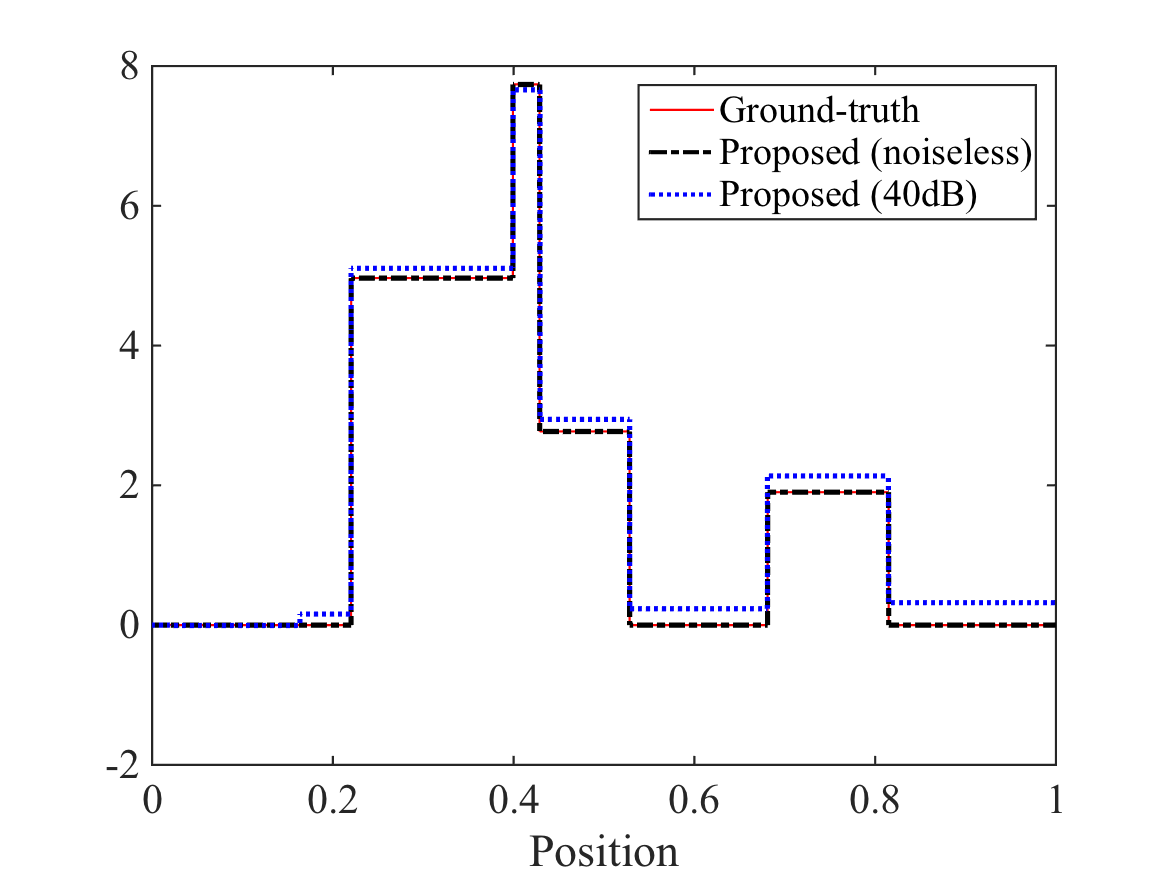}}
	\caption{
	Proposed reconstruction result of a piecewise constant signal from noiseless and 40dB noisy sparse Fourier samples. For this simulation, the following parameters were used: $d=51$, $n=100$,  and $m=36$. 
	The matrix pencil approach was used for signal recovery after the missing Fourier data was interpolated using the proposed low-rank interpolation with the optimal weighting
	$\hat l(\omega)=i\omega$.}
	\label{fig:offgrid_recon}
\end{figure}

\section{Conclusion}
\label{sec:conclusion}

While the recent theory of compressed sensing (CS) can overcome the Nyquist limit for recovering sparse signals, the standard recovery algorithms are usually implemented in discrete domain as   inverse problem approaches that 
are fully dependent on signal representations.
Moreover, the  existing spectral compressed sensing  algorithms of continuous signals such as off the grid spikes are very distinct from the discrete domain counter-parts.
To address these issues and unify the theory, this paper developed   a near optimal Fourier CS framework 
using a structured low-rank interpolator in the measurement domain before analytic reconstruction procedure is applied. This was founded by the fundamental duality between the sparsity in the primary space and the low-rankness of the structured matrix in the reciprocal spaces.
Compared to the existing spectral compressed sensing methods, our theory was generalized to encompass more general signals with finite rate of innovations, such as piecewise polynomials  and splines with provable performance guarantees.
Numerical results confirmed that the proposed methods exhibited significantly improved phase transition than the existing CS approaches.

\section*{Acknowledgement}
The authors would like to thank Prof. Michael Unser at EPFL for giving an insight into  cardinal spline representations.
This study was supported by Korea Science and Engineering Foundation under Grant NRF-2014R1A2A1A11052491.
Kiryung Lee was in part supported in part by the National Science Foundation (NSF) under Grant IIS 14-47879.

\appendix

  \setcounter{equation}{0}  
\setcounter{theorem}{0}
\counterwithin{theorem}{section}
  \renewcommand{\theequation}{A.\arabic{equation}}



\subsection{Properties of Linear Difference Equations with Constant Coefficients}

Before proving Theorem \ref{thm:hrank}, we review some important properties about linear difference equations \cite{elaydi2005introduction}. In general, the $r$-th order homogeneous difference equation has the form 
\begin{equation}\label{LinearDE}
x[k+r]+a_{r-1}[k]x[k+r-1]+\cdots+a_0[k]x[k]=0.
\end{equation}
where $\{a_i[k]\}_{0=1}^{r-1}$ are constant coefficients.
The functions $f_1[k],f_2[k],\cdots,f_r[k]$ are said to be {\it linearly independent} for $k\geq k_0$ if there are constants $c_1,c_2,\cdots,c_r$, not all zero, such that 
$$c_1f_1[k]+c_2f_2[k]+\cdots+c_rf_r[k]=0,~~~k\geq k_0.$$
And, a set of $r$ linearly independent solutions of \eqref{LinearDE} is called a {\it fundamental set} of solutions. 
As in the case of Wronskian in the theory of linear differential equations, we can easily check the linear independence of solution by using Casoratian $W[k]$ which is defined as : 
\begin{equation*}
W[k]=\begin{vmatrix}
x_1[k] & x_2[k] & \cdots & x_r[k] \\ 
x_1[k+1] & x_2[k+1] & \cdots & x_r[k+1] \\
\vdots & & & \vdots \\
x_1[k+r-1] & x_2[k+r-1] & \cdots & x_r[k+r-1] 
\end{vmatrix}
\end{equation*}

\begin{lemma}\cite[Lemma 2.13 Corollary 2.14]{elaydi2005introduction} \label{DEL1}
Let $x_1[k],\cdots,x_r[k]$ be solutions of \eqref{LinearDE} and $W[k]$ be their Casoratian. Then, 
\begin{enumerate}
\item For $k\geq k_0$, $$W[k]=(-1)^{r(k-k_0)}\left(\prod\limits_{i=k_0}^{k-1}a_0[i]\right)W[k_0];$$
\item Suppose that $a_0[k]\neq 0$ for all $k\geq k_0$. Then, $W[k]\neq 0$ for all $k\geq k_0$ iff $W[k_0]\neq 0.$
\end{enumerate}
\end{lemma}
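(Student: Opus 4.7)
The plan is to prove part (1) by establishing the one-step recurrence $W[k+1]=(-1)^r a_0[k]\, W[k]$ and then iterating it from $k_0$ up to $k$; part (2) is an immediate corollary since the hypothesis makes the product $\prod_{i=k_0}^{k-1}a_0[i]$ nonzero.

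To obtain the recurrence, I would write out $W[k+1]$ as the determinant whose rows are the shifted vectors $(x_1[k+j],\ldots,x_r[k+j])$ for $j=1,\ldots,r$, and use the difference equation \eqref{LinearDE} to rewrite the bottom row as
\[
x_j[k+r]=-a_{r-1}[k]\,x_j[k+r-1]-\cdots-a_1[k]\,x_j[k+1]-a_0[k]\,x_j[k],\qquad j=1,\ldots,r.
\]
Since the determinant is invariant under adding scalar multiples of other rows to a given row, I would subtract $-a_{i}[k]$ times the row $(x_1[k+i],\ldots,x_r[k+i])$ (which for $i=1,\ldots,r-1$ is already present among rows $1,\ldots,r-1$ of the matrix for $W[k+1]$) from the bottom row. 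This clears every term in the last row except $-a_0[k]\,x_j[k]$.

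Factoring $-a_0[k]$ out of the last row yields a determinant whose rows are $(x_j[k+1]),(x_j[k+2]),\ldots,(x_j[k+r-1]),(x_j[k])$. To recover the Casoratian $W[k]$, I would cycle the bottom row to the top via $r-1$ adjacent row transpositions, each contributing a factor of $-1$. Combining the signs gives
\[
W[k+1] = (-1)^{r-1}\cdot(-a_0[k])\cdot W[k] = (-1)^{r} a_0[k]\, W[k],
\]
and a straightforward induction on $k\geq k_0$ produces the closed form in (1). Part (2) then follows because under the hypothesis $a_0[i]\neq 0$ for all $i\geq k_0$, the product in (1) is nonzero, so $W[k]=0$ if and only if $W[k_0]=0$.

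There is no real obstacle here beyond careful bookkeeping of signs: the delicate points are (i) verifying that the row-reduction used to eliminate the middle coefficients is legal (it is, because rows $1,\ldots,r-1$ of the matrix defining $W[k+1]$ are exactly the rows indexed by shifts $k+1,\ldots,k+r-1$), and (ii) confirming that the cyclic permutation of rows costs $(-1)^{r-1}$ rather than $(-1)^{r}$. Both are routine, so this lemma serves purely as preparation for the forthcoming argument about the rank of $\hank(\hat\xb)$.
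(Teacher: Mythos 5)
Your proof is correct: the row reduction of the bottom row of $W[k+1]$ via the difference equation, the extraction of the factor $-a_0[k]$, and the $(-1)^{r-1}$ cost of cycling the shift-$k$ row to the top combine to give the one-step recurrence $W[k+1]=(-1)^{r}a_0[k]\,W[k]$, from which part (1) follows by induction and part (2) immediately. The paper offers no proof of its own here --- the lemma is quoted verbatim from the cited textbook of Elaydi --- and your argument is exactly the standard Abel-type proof given in that source, so the two approaches coincide.
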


Now, we state the criterion by which we can easily check the linear independence of solutions of \eqref{LinearDE}. 

\begin{lemma}\cite[Theorem 2.15, Theorem 2.18]{elaydi2005introduction}, \label{DEL2}
\begin{enumerate}
\item The set of solutions $x_1[k],x_2[k],\cdots,x_r[k]$ of \eqref{LinearDE} is a fundamental set iff $W[k_0]\neq 0$ for some $k_0\in\mathbb{N}.$
\item If $a_0[k]\neq 0$ for all $k\geq k_0$, then \eqref{LinearDE} has a fundamental solutions for $k\geq k_0.$
\end{enumerate}
\end{lemma}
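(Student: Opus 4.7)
My plan is to prove the two parts separately, leveraging the previously stated Lemma \ref{DEL1} on the propagation of the Casoratian.

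\textbf{Part 1 (Casoratian criterion).} I would argue both directions directly from the definition of linear independence.

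For the sufficient direction ($W[k_0]\ne0 \Rightarrow$ fundamental set), suppose there exist constants $c_1,\ldots,c_r$ with
\[
c_1 x_1[k] + c_2 x_2[k] + \cdots + c_r x_r[k] = 0, \qquad k\geq k_0.
\]
Evaluating this identity at the $r$ consecutive indices $k=k_0,k_0+1,\ldots,k_0+r-1$ yields the homogeneous linear system whose coefficient matrix is exactly the Casoratian matrix at $k_0$. Because its determinant $W[k_0]$ is nonzero, the system is invertible and forces $c_1=\cdots=c_r=0$, so the $x_i$'s are linearly independent and hence form a fundamental set.

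For the necessary direction, I would argue by contrapositive: assume $W[k_0]=0$ for every $k_0$. Then at any fixed $k_0$ the Casoratian matrix has a nontrivial null vector $(c_1,\ldots,c_r)$. Define $y[k] := c_1 x_1[k]+\cdots+c_r x_r[k]$. By linearity, $y$ solves \eqref{LinearDE}, and by construction $y[k_0]=y[k_0+1]=\cdots=y[k_0+r-1]=0$. Rearranging \eqref{LinearDE} as $x[k+r]=-a_{r-1}[k]x[k+r-1]-\cdots-a_0[k]x[k]$ shows that the solution is uniquely determined forward from any $r$ consecutive values, so $y[k]\equiv 0$ for $k\geq k_0$, contradicting linear independence. (In fact, by Lemma \ref{DEL1}(2), once $W$ vanishes at one point and $a_0[k]\ne0$, it vanishes everywhere, so a single zero of $W$ already defeats the fundamental-set property.)

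\textbf{Part 2 (existence of a fundamental set).} Under the hypothesis $a_0[k]\ne0$ for $k\geq k_0$, I would construct $r$ explicit solutions $x_1,\ldots,x_r$ by specifying initial data at $k_0$ as the standard basis, i.e.
\[
x_i[k_0+j-1] = \delta_{ij}, \qquad i,j=1,\ldots,r,
\]
and extending forward via the recursion $x_i[k+r]=-a_{r-1}[k]x_i[k+r-1]-\cdots-a_0[k]x_i[k]$, which is well-defined for all $k\geq k_0$. The Casoratian at $k_0$ is then $W[k_0]=\det(I_r)=1\ne0$, and Part 1 yields that $\{x_1,\ldots,x_r\}$ is a fundamental set. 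The hypothesis $a_0[k]\ne0$ is what lets Lemma \ref{DEL1}(2) propagate the non-vanishing of $W$ to all $k\geq k_0$, so the fundamental-set property persists throughout the domain.

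\textbf{Where the subtlety lies.} The routine direction is sufficiency in Part 1 and the explicit construction in Part 2; the only delicate point is the necessity direction in Part 1, which tacitly requires the uniqueness of extension of a solution from $r$ consecutive values. This uniqueness is automatic in the forward direction from the defining recursion, so no further hypothesis on $a_0[k]$ is needed for necessity; the role of $a_0[k]\ne0$ enters only when one wants the Casoratian to be nonvanishing on the entire half-line (Part 2), or to run the recursion backward. I would be careful to state the quantifier structure precisely: Part 1 says $W[k_0]\ne0$ for \emph{some} $k_0$, which, combined with Lemma \ref{DEL1}(1), is equivalent to $W[k]\ne0$ for all $k$ under the standing assumption $a_0[k]\ne0$.
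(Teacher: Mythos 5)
Your proof is correct. The paper does not actually prove this lemma---it is imported verbatim from Elaydi's textbook (Theorems 2.15 and 2.18 of \cite{elaydi2005introduction})---and your argument (invertibility of the Casoratian matrix at $k_0$ for sufficiency, forward uniqueness of a solution from $r$ consecutive values for necessity, and Kronecker-delta initial data giving $W[k_0]=\det I_r=1$ for existence) is precisely the standard proof given in that reference, including your correct observation that $a_0[k]\neq 0$ is needed only to propagate the non-vanishing of $W$ via Lemma~\ref{DEL1}, not for the forward recursion itself.
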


Now, for the linear difference equation with constant coefficients, similarly to the case of differential equation with constant coefficients, we have the following result:

\begin{lemma}\cite[Lemma 2.22, Theorem 2.23]{elaydi2005introduction}\label{DEL3}
Assume that we have a linear difference equation 
\begin{equation}\label{LinearCDE}
z_{k+r}+a_{r-1}z_{k+r-1}+\cdots+a_1z_{k+1}+a_0z_k=0,
\end{equation}
where the coefficients $a_i$'s are constants and $a_0\neq 0$, and we are given 
$$P(\lambda):=\lambda^r+a_{r-1}\lambda^{r-1}+\cdots+a_1\lambda+a_0=\prod\limits_{i=0}^{p-1}(\lambda-\lambda_i)^{l_i},$$
as its characteristic equation, where $\lambda_0,\cdots,\lambda_{p-1}$ are all the distinct nonzero complex roots of $P(\lambda).$ Then the set 
$$G=\bigcup\limits_{i=0}^{p-1} G_i$$
is a fundamental set of solutions \eqref{LinearCDE}, where 
\begin{equation}\label{eq:Gi}
G_i =\left\{[\lambda_i^k]_k,\left[{k\choose 1}\lambda_i^{k-1}\right]_k,\left[{k\choose 2}\lambda_i^{k-2}\right]_k,\cdots,\left[{k\choose l_i-1}\lambda_i^{k-l_i+1}\right]_k\right\}.
\end{equation}
\end{lemma}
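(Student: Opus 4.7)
The plan is to prove the lemma in three stages: (i) verify that each element of $G_i$ is a solution to \eqref{LinearCDE}, (ii) establish linear independence of $G = \bigcup_i G_i$, and (iii) conclude via a dimension count that $G$ is a fundamental set of solutions.

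For stage (i), I would recast \eqref{LinearCDE} operator-theoretically as $P(E)z = 0$, where $E$ is the forward-shift operator acting by $(Ez)_k = z_{k+1}$. Since the factorization $P(\lambda) = \prod_{i=0}^{p-1}(\lambda - \lambda_i)^{l_i}$ passes through to the commuting shift polynomials, $P(E) = \prod_{i=0}^{p-1}(E - \lambda_i)^{l_i}$. It therefore suffices to check that $(E - \lambda_i)^{l_i}$ annihilates each element of $G_i$. Using Pascal's identity $\binom{k+1}{j} = \binom{k}{j} + \binom{k}{j-1}$, a direct computation yields
\[
(E - \lambda_i)\!\left[\binom{k}{j}\lambda_i^{k-j}\right] \;=\; \binom{k}{j-1}\lambda_i^{k-j+1},
\]
so each application of $(E - \lambda_i)$ lowers the binomial index $j$ by one while preserving the exponential part. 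After $j+1 \le l_i$ applications the sequence becomes $\binom{k}{-1}\lambda_i^{k+1}\equiv 0$, confirming $(E - \lambda_i)^{l_i}G_i = \{0\}$ and hence $P(E)G = \{0\}$.

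For stage (ii), I would invoke the Casoratian criterion of Lemma~\ref{DEL2}. Writing out the Casoratian matrix of $G$ at $k_0 = 0$ and performing elementary column operations that peel off the combinatorial prefactors $\binom{k}{j}$ block by block, the determinant $W[0]$ reduces, up to a nonzero scalar, to the confluent Vandermonde determinant associated with nodes $\lambda_0,\ldots,\lambda_{p-1}$ and multiplicities $l_0,\ldots,l_{p-1}$. This determinant is classically known to be nonzero precisely when the $\lambda_i$ are distinct, so $W[0]\neq 0$. By Lemma~\ref{DEL2}(1) the elements of $G$ are linearly independent; since $a_0 \neq 0$, Lemma~\ref{DEL1}(2) additionally guarantees $W[k]\neq 0$ for all subsequent $k$, so the conclusion is uniform in $k$.

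For stage (iii), the factorization $P(\lambda)=\prod_{i=0}^{p-1}(\lambda-\lambda_i)^{l_i}$ forces $|G|=\sum_{i=0}^{p-1}l_i=\deg P=r$. Because $a_0\neq 0$, Lemma~\ref{DEL2}(2) guarantees that \eqref{LinearCDE} admits a fundamental set of $r$ linearly independent solutions, and any $r$ linearly independent solutions form such a set. Stages (i)-(ii) supply exactly $r$ linearly independent solutions, so $G$ itself is a fundamental set. I expect the main obstacle to be stage (ii): the operator calculation in (i) is essentially mechanical and (iii) is a counting argument, whereas cleanly reducing the Casoratian to a confluent Vandermonde determinant—and rigorously justifying the column manipulations that strip away the binomial coefficients—is the technical heart of the proof.
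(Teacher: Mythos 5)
Your proposal is correct, but there is nothing in the paper to compare it against: the paper does not prove this lemma at all---it is imported verbatim, with citation, from Elaydi's textbook (\cite[Lemma 2.22, Theorem 2.23]{elaydi2005introduction}), and the appendix only \emph{uses} it (together with Lemmas~\ref{DEL1}, \ref{DEL2}, and \ref{DEL3-2}) inside the proof of Theorem~\ref{thm:hrank}. Your three-stage argument is essentially the standard textbook route that Elaydi himself follows: the identity $(E-\lambda_i)\bigl[\binom{k}{j}\lambda_i^{k-j}\bigr]_k = \bigl[\binom{k}{j-1}\lambda_i^{k-j+1}\bigr]_k$ via Pascal's rule (with the convention $\binom{k}{-1}=0$) is exactly the content of Elaydi's Lemma~2.22, and the nonvanishing Casoratian is his Theorem~2.23. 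Two small remarks. First, the reduction of $W[0]$ to a confluent Vandermonde determinant is even easier than your cautious phrasing suggests: the $(k,(i,j))$ entry $\binom{k}{j}\lambda_i^{k-j}$ differs from the confluent Vandermonde entry $\frac{k!}{(k-j)!}\lambda_i^{k-j}$ only by the column-wise constant $1/j!$, so a single column rescaling (total factor $\prod_i \prod_{j=0}^{l_i-1} j!\neq 0$) does the job, and the determinant $\prod_{i<i'}(\lambda_{i'}-\lambda_i)^{l_i l_{i'}}$ (up to positive factorial factors) is nonzero since the $\lambda_i$ are distinct; note also that $\lambda_i\neq 0$ (needed for $\lambda_i^{k-j}$ when you invoke Lemma~\ref{DEL1}(2)) is guaranteed by $a_0\neq 0$. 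Second, your stage (iii) is redundant under the paper's definitions: a fundamental set is \emph{defined} as $r$ linearly independent solutions, and Lemma~\ref{DEL2}(1) already certifies this from $W[0]\neq 0$, so stages (i)--(ii) alone complete the proof; the dimension count buys nothing extra here.
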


%
%


Here, we use the notation $[a[k]]_k$ to denote the sequence with $k$-th term $a[k]$. 
Now, we have the following results which will be used later. 
\begin{lemma}\label{DEL3-2}
The sequence set 
\begin{equation}\label{eq:gi}
\left\{\left[\lambda_i^k\right]_k,\left[{k\choose 1}\lambda_i^{k-1}\right]_k,\left[{k\choose 2}\lambda_i^{k-2}\right]_k,\cdots,\left[{k\choose l_i-1}\lambda_i^{k-l_i+1}\right]_k\right\},
\end{equation}
and 
\begin{equation}\label{eq:gi2}
\left\{\left[\lambda_i^k\right]_k,\left[k\lambda_i^k\right]_k,\cdots,\left[k^{l_i-1}\lambda_i^k\right]_k\right\}
\end{equation}
span the same sequence space. 
\end{lemma}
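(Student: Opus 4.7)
The plan is to identify both sequence sets as bases of the same explicit $l_i$-dimensional vector space, namely
\[
V_i \;:=\; \left\{\,[\,q(k)\lambda_i^k\,]_k \;:\; q\in\mathbb{C}[x],\ \deg q \leq l_i-1\,\right\},
\]
and then conclude by a dimension count.

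First, I would unwrap the binomial coefficient in \eqref{eq:gi} using the falling-factorial identity
\[
\binom{k}{j}\lambda_i^{k-j} \;=\; \frac{1}{j!\,\lambda_i^j}\,k(k-1)(k-2)\cdots(k-j+1)\,\lambda_i^k ,
\]
so that the $j$-th element of the set \eqref{eq:gi} equals $\lambda_i^k$ times a polynomial in $k$ of degree exactly $j$ (note $\lambda_i\neq 0$ by the hypothesis in Lemma~\ref{DEL3}, so dividing by $\lambda_i^j$ is legal). The set \eqref{eq:gi2} is manifestly $\lambda_i^k$ times the monomials $1,k,k^2,\dots,k^{l_i-1}$. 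Both sets thus sit inside $V_i$.

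Next I would observe that $V_i$ has dimension $l_i$ and that \eqref{eq:gi2} is obviously a basis, since the monomials $\{k^j\}_{j=0}^{l_i-1}$ are linearly independent. For \eqref{eq:gi}, the change-of-basis matrix from $\{k^j\lambda_i^k\}_{j=0}^{l_i-1}$ to $\{\binom{k}{j}\lambda_i^{k-j}\}_{j=0}^{l_i-1}$ is upper triangular with nonzero diagonal entries $1/(j!\lambda_i^j)$, since the $j$-th element is a polynomial in $k$ of degree exactly $j$ times $\lambda_i^k$. Hence \eqref{eq:gi} is also a linearly independent set of $l_i$ vectors in the $l_i$-dimensional space $V_i$, so it is also a basis of $V_i$, which gives the claim.

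The argument is essentially a triangular change of basis, so there is no real obstacle beyond making the degree bookkeeping explicit; the only point one has to be careful about is the nonvanishing of $\lambda_i$, which is guaranteed in the setting of Lemma~\ref{DEL3} because $a_0\neq 0$ forces all characteristic roots to be nonzero.
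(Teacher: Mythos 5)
Your proof is correct, and it reaches the conclusion by a somewhat different mechanism than the paper. The paper proves the two span inclusions separately: for the easy direction it uses, as you do, that $\binom{k}{j}$ is a polynomial in $k$ of degree exactly $j$ (after first discarding the nonzero scalar factors $\lambda_i^{-j}$, legitimate because $a_0\neq 0$ forces every $\lambda_i\neq 0$); for the reverse direction it invokes the explicit binomial inversion formula $k^j=\sum_{l=0}^{j} a_l\binom{k}{l}$ with $a_l=\sum_{m=0}^{l}(-1)^{l-m}\binom{l}{m}m^j$, citing Rota's finite operator calculus. Your triangular change-of-basis argument makes that explicit inversion unnecessary: once each element of \eqref{eq:gi} is exhibited as $\lambda_i^k$ times a polynomial in $k$ of degree exactly $j$ with leading coefficient $1/(j!\,\lambda_i^j)\neq 0$, the coefficient matrix relating the two families is triangular with nonzero diagonal, hence invertible, and the span equality follows in both directions at once --- self-contained, and it removes the need for the external citation. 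One small piece of bookkeeping: your framing via bases of the $l_i$-dimensional space $V_i$ tacitly assumes that the sequences $\left[k^j\lambda_i^k\right]_k$ are linearly independent; this is true and easy (if $q(k)\lambda_i^k=0$ for all $k$ then the polynomial $q$ of degree at most $l_i-1$ vanishes at infinitely many integers, since $\lambda_i^k\neq 0$, so $q\equiv 0$), but it deserves a sentence --- although, strictly speaking, the invertibility of your triangular matrix already gives the span equality asserted by the lemma without any dimension count, so you could drop the basis language entirely and keep only the two-way substitution that the invertible triangular system provides.
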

\begin{proof}
Note that \eqref{eq:gi} has the same span as 
$$\left\{\left[\lambda_i^k\right]_k,\left[{k\choose 1}\lambda_i^k\right]_k,\cdots,\left[{k\choose l_i-1}\lambda_i^k\right]_k\right\}$$
since  $a_0\neq 0$ and  $\lambda_i\neq 0$ for all $0\leq i\leq p-1$. Moreover, for any given $0\leq i\leq p-1$ and $0\leq j\leq l_i-1$,  
$${k\choose j}=\frac{k(k-1)\cdots(k-j+1)}{j!}$$
is a polynomial of $k$ with order $j$ so that each sequence $\left\{{k\choose j}\lambda_i^k\right\}_{k\in\mathbb{N}}$ is a linear combination of 
$$\left\{\left[\lambda_i^k\right]_k,\left[k\lambda_i^k\right]_k,\cdots,\left[k^j\lambda_i^k\right]_k\right\}.$$
On the other hand, every polynomial $P(k)$ on $\mathbb{Z}$ of degree at most $j$ uniquely expressible as a linear combination of 
$${k\choose 0},{k\choose 1},\cdots,{k\choose j}$$
and the explicit representation is given by 
$$k^j=\sum\limits_{l=0}^j a_l{k\choose l}~\text{where}~a_l=\sum\limits_{m=0}^l (-1)^{l-m}{l\choose m}m^j.$$
For the above formula, you may refer \cite{rota1975finite}. Thus, the sequence $\{k^j\lambda_i^j\}_{k\in\mathbb{N}}$ is given as a linear combination of 
$$\left\{\left[\lambda_i^k\right]_k,\left[{k\choose 1}\lambda_i^k\right]_k,\cdots,\left[{k\choose i-1}\lambda_i^k\right]_k\right\}.$$
Thus, \eqref{eq:gi} and the sequence
$$\left\{\left[\lambda_i^k\right]_k,\left[k\lambda_i^k\right]_k,\cdots,\left[k^{l_i-1}\lambda_i^k\right]_k\right\}$$
spans the same sequence space.  Q.E.D.
\end{proof}

\subsection{Proof of Theorem \ref{thm:hrank}}\label{ap:proof_hrank}
\begin{proof}
First, we will show that  $\hank(\hat \xb)$ has rank at most $r$.
Let $\hat \hb \in \Cd^{r+1}$ be the minimum size annihilating filter.
Then,   \eqref{eq:hba} can be represented as
\begin{eqnarray}
\hat\hb_a  &=&  \conv(\hat \hb) \hat \ab 
\end{eqnarray}
where $\hat\ab =\begin{bmatrix} \hat a[0] & \cdots & \hat a[k_1-1]\end{bmatrix}$ and 
$\conv(\hat \hb) \in \Cd^{d \times k_1}$ is a Toeplitz structured convolution matrix from $\hat \hb$:
\begin{eqnarray}
\conv(\hat \hb) =\begin{bmatrix} \hat h[0] & 0 & \cdots & 0 \\ \hat h[1]  & \hat h[0] & \cdots & 0 \\ \vdots &  \vdots & \ddots & \vdots  \\
\hat h[r] & \hat h[r-1] & \cdots & \hat h[r-k_1+1]  \\  \vdots & \vdots & \ddots & \vdots \\ 0 & 0 & \cdots &  \hat h[r] \end{bmatrix}  \in \Cd^{d \times k_1 }
 \end{eqnarray}
where $d=r+k_1$. Since $\conv(\hat \hb)$ is a convolution matrix, it is full ranked and 
we can show that  
$$\dim  \Ran\conv(\hat \hb)=   k_1,$$
where $\dim \Ran (\cdot)$ denotes the dimension of the range space.
Moreover, the range space of $\conv(\hat \hb)$ now  belongs to the null space of the Hankel matrix  $\hank(\hat \xb)$, so it is easy to show
$$k_1=\dim \Ran\conv(\hat \hb)\leq  \dim \Null \hank(\hat \xb),$$
where $\dim\Null(\cdot)$ represent the dimension of the null space.
Thus,
$$\rank \hank(\hat \xb) = \min\{d, n-d+1\} -  \dim \Null \hank(\hat \xb) \leq  d-k_1 = r.$$ 

Now,  we will show by contradiction that the rank of  $\hank(\hat \xb)$  cannot be smaller than $r$.
Since the rank of the  Hankel matrix is at most $r$, any set of $r+1$ consecutive rows (or columns) of $(n-d+1)\times d$ Hankel matrix with the entries $\hat x[k]$ must be linearly dependent. 
Therefore,  $\hat x[k]$ should be the solution of the following difference equation:
\begin{equation}\label{LDE}
z_{k+r}+a_{r-1}z_{k+r-1}+\cdots+a_1z_{k+1}+a_0z_k=0,~{\rm for}~0\leq k\leq n-r-1
\end{equation}
where  $\{a_i\}_{i=0}^{r-1}$ are coefficients of the linear difference equation,  and 
\begin{equation}\label{eq:P}
P(\lambda):=\lambda^r+a_{r-1}\lambda^{r-1}+\cdots+a_1\lambda+a_0=\prod\limits_{j=0}^{p-1}(\lambda-\lambda_j)^{l_j} ,\quad  
\end{equation}
is the characteristic polynomial of the linear difference equation, 
 where $\lambda_0,\cdots,\lambda_{p-1}$ are distinct nonzero complex numbers so that $a_0\neq 0$ and 
$$r=\sum\limits_{j=0}^{p-1}l_j.$$
From Lemma \ref{DEL3} and \ref{DEL3-2}, we know that
  the sequences $\{k^l\lambda_j^k\}_{k\in\mathbb{Z}}$ $(0\leq j\leq p-1~{\rm and}~0\leq l\leq l_j-1)$ are the fundamental solutions of the linear difference equation,
 and  $\hat x[k]$ can be represented as their linear combination  \cite{elaydi2005introduction}:
%
%
%
\begin{equation}\label{eq:signal}
\hat x[k]:=\sum\limits_{j=0}^{p-1}\sum\limits_{l=0}^{l_j-1}a_{j,l}k^l{\lambda_j}^k~{\rm for}~0\leq k\leq n-1,
\end{equation}
%
%
%
where all the leading coefficients $a_{j,l_j-1}$ $(0\leq j\leq p-1)$ are nonzero. 
 If we assume that the rank of the Hankel matrix with the sequence as in \eqref{eq:signal} is less than $r$, then the sequence $\hat x[k]$ must satisfy the recurrence relation of order $q<r$, since any collection of 
 $q+1$ consecutive rows (or columns) are linearly dependent. Thus, there exist a recurrence relation for $\hat x[k]$ of order $q<r$ such that 
\begin{equation}\label{LDE2}
z_{k+q}+b_{q-1}z_{k+q-1}+\cdots+b_1z_{k+1}+b_0z_k=0,~{\rm for}~0\leq k\leq n-q-1,
\end{equation}
whose solution is the sequence  given by 
\begin{equation}\label{signal2}
\hat x[k]=\sum\limits_{j=0}^{p'-1}\sum\limits_{l=0}^{l'_j-1}a'_{j,l}k^l{(\lambda'_j)}^k~{\rm for}~0\leq k\leq n-1,
\end{equation}
where $\sum_{j=0}^{p'-1}l'_j\leq r-1,$ and 
$$P_1(\lambda)=\lambda^q+b_{q-1}\lambda^{q-1}+\cdots+b_1\lambda+b_0=\prod\limits_{j=0}^{p'-1}(\lambda-\lambda'_j)^{l'_j}$$ 
is the characteristic polynomial of \eqref{LDE2}. Subtracting \eqref{signal2} from \eqref{eq:signal}, we have the equation 
\begin{equation}\label{dependency}
0=\sum_{j=0}^{P-1}\sum\limits_{l=0}^{L_j-1}c_{j,l}k^l(\Lambda_j)^k,~{\rm for}~0\leq k\leq n-1, 
\end{equation}
where 
$$\{\Lambda_j:1\leq j\leq P\}=\{\lambda_j:0\leq j\leq p-1\}\cup \{\lambda'_j:0\leq j\leq p'-1\},$$
and $L_j$ $(1\leq j\leq P)$ is the multiplicity of the root of $\Lambda_j$ for the least common multiple of $P(\lambda)$ and $P_1(\lambda).$ Moreover, we have $N:=\sum_{j=0}^{P-1}L_j\leq r+(r-1)=2r-1$.

Now, from Lemma \ref{DEL3}, we know that the sequences 
$\{\hat x_{l,j}[k]=k^l(\Lambda_j)^k:0\leq j\leq P-1,0\leq l\leq L_j-1\}$ are linearly independent sequences, and from the hypothesis $\min\{n-d+1,d\}>r$, we have $n>2r-1\geq N.$ 
Thus, if we write  \eqref{dependency}  as a matrix equation, 
$$\mathbf{0}=\Phi \mathbf{c}$$
where 
$$\mathbf{c}=[c_{0,0},\cdots,c_{0,L_0-1},c_{1,0},\cdots,\cdots,c_{P-1,0},\cdots,c_{P-1,L_{P-1}-1}]^T$$
is an $N\times 1$ matrix, and $\Phi$ is an $n\times N$ matrix which has 
$$[\Lambda_0^k,\cdots,k^{L_0-1}(\Lambda_0)^k,\Lambda_1^k,\cdots,\cdots,\Lambda_{P-1}^k,\cdots,k^{L_{P-1}-1}(\Lambda_{P-1})^k]$$
as an $k+1$-th row. By combining Lemma \ref{DEL1} and \ref{DEL2}, we can conclude that the $N\times N$ principal minor for the matrix $\Phi$ must have a nonzero determinant. Thus, the matrix $\Phi$ is of full column rank so that all the coefficients vector $\mathbf{c}$ must be zero. 


%
%
Thus, all the zeros and their multiplicities of zeros for the polynomials $P(\lambda)$, $P_1(\lambda)$ must be identical. That is a contradiction to the hypothesis that the degree of $P_1(\lambda)$ is less than that of $P(\lambda).$~~~~~~ 
\end{proof}

\subsection{A basis representation of structured matrices}
\label{subsec:basis}

\subsubsection{Hankel matrix and variations}

The linear space $\Hc(n,d)$ of $(n-d+1)$-by-$d$ Hankel matrices is spanned by a basis $\{A_k\}_{k=1}^n$ given by
\beq
\label{eq:basis}
A_k = \begin{dcases}
\frac{1}{\sqrt{k}}\sum_{i=1}^k \eb_i\eb_{k-i+1}^*,& 1 \leq k \leq d, \\
\frac{1}{\sqrt{d}}\sum_{i=1}^d \eb_i\eb_{k-i+1}^*,& d+1 \leq k \leq n-d+1, \\
\frac{1}{\sqrt{n-k+1}}\sum_{i={k-n+d}}^d \eb_i\eb_{k-i+1}^*,& n-d+2 \leq k \leq n.
\end{dcases}
\eeq

Note that $\{A_k\}_{k=1}^n$ satisfies the following properties.
First, $A_k$ is of unit Frobenius norm and all nonzero entries of $A_k$ are of the same value, i.e.,
\beq
\label{eq:const_nonzero}
[A_k]_{i,j} =
\begin{dcases}
\frac{1}{\sqrt{\norm{A_k}_0}}, & [A_k]_{i,j} \neq 0, \\
0, & \text{otherwise},
\end{dcases}
\eeq
for all $k = 1,\ldots,n$.
It follows from \eqref{eq:const_nonzero} that the spectral norm of $A_k$ is bounded by
\beq
\norm{A_k} \leq \norm{A_k}_0^{-1/2}.
\eeq
Second, each row and column of all $A_k$'s has at most one nonzero element, which implies
\beq
\label{eq:rowcolsp}
\sum_{j=1}^{n_2} \left(\sum_{i=1}^{n_1} |[A_k]_{i,j}|\right)^2 = 1,
\quad \text{and} \quad
\sum_{i=1}^{n_1} \left(\sum_{j=1}^{n_2} |[A_k]_{i,j}|\right)^2 = 1.
\eeq
Last, any two distinct elements of $\{A_k\}_{k=1}^n$ have disjoint supports,
which implies that $\{A_k\}_{k=1}^n$ constitutes an orthonormal basis for the subspace spanned by $\{A_k\}_{k=1}^n$.
 In fact, these properties are satisfied by bases for structured matrices of a similar nature
including Toeplitz, Hankel-block-Hankel, and multi-level Toeplitz matrices.

\subsubsection{Warp-around Hankel matrix and variations}

The linear space $\Hc_c(n,d)$ of $n$-by-$d$ wrap-around Hankel matrices for $n \geq d$ is spanned by a basis $\{A_k\}_{k=1}^n$ given by
\beq\label{eq:basis_cir}
A_k =
\begin{dcases}
\frac{1}{\sqrt{d}} \left( \sum_{i=1}^{k} \eb_i\eb_{(k+d-i-1)_n+1}^* + \sum_{j=n-d+k+1}^n \eb_j\eb_{(k+d-j-1)_n+1}^* \right), \quad 1 \leq k \leq d-1, \\
\frac{1}{\sqrt{d}} \sum_{i=k-d+1}^{k} \eb_i\eb_{(k+d-i-1)_n+1}^*, \quad d \leq k \leq n,
\end{dcases}
\eeq
where $(\cdot)_n$ denotes the modulo operation that finds the remainder after division by $n$.
The above basis $\{A_k\}_{k=1}^n$ for $\Hc_c(n,d)$ also satisfies the aforementioned properties of that for $\Hc(n,d)$.

Similarly, all elements of a structured matrix with the wrap-around property
(e.g., wrap-around Hankel matrix) are repeated by the same number of times.
Thus, the corresponding basis $\{A_k\}_{k=1}^n \subset \mathbb{C}^{n_1 \times n_2}$ has an extra property that
\beq
\label{eq:samesparseAk}
\norm{A_k}_0 = \min\{n_1,n_2\}, \quad \forall k = 1,\ldots,n.
\eeq

\subsection{Incoherence conditions}

The notion of the incoherence plays a crucial role in matrix completion and structured matrix completion.
We recall the definitions using our notations.
Suppose that $M \in \mathbb{C}^{n_1 \times n_2}$ is a rank-$r$ matrix whose SVD is $U \Sigma V^*$.
It was shown that the standard incoherence condition \eqref{eq:incoherence0} alone suffices to provide a near optimal sample complexity for matrix completion \cite{chen2015incoherence}.
For structured matrix completion, Chen and Chi \cite{chen2014robust} extended the notion of the standard incoherence as follows:
$M$ is said to satisfy the \emph{basis incoherence} condition with parameter $\mu$ if
\begin{equation}
\label{eq:incoherence2}
\begin{aligned}
\max_{1 \leq k \leq n} \norm{U^* A_k}_{\mathrm{F}} {} & \leq \sqrt{\frac{\mu r}{n_1}}, \\
\max_{1 \leq k \leq n} \norm{V^* A_k^*}_{\mathrm{F}} {} & \leq \sqrt{\frac{\mu r}{n_2}}.
\end{aligned}
\end{equation}
When $A_k = \eb_i \eb_j^*$, the basis incoherence reduces to the standard incoherence with the same parameter.
In general, two incoherence conditions are related as shown in the following lemma.

\begin{lemma}
\label{lem:two_inco}
Let $U \in \mathbb{C}^{n_1 \times r}$  and $V \in \mathbb{C}^{n_2 \times r}$.
Let $A_k \in \mathbb{C}^{n_1 \times n_2}$ for $k=1,\dots,n$. Then,
\begin{align*}
\max_{1 \leq k \leq n} \norm{U^* A_k}_{\mathrm{F}}
{} & \leq \left( \max_{1 \leq i' \leq n_1} \norm{ U^* \eb_{i'} }_2 \right) \cdot \max_{1 \leq k \leq n} \left[ \sum_{j=1}^{n_2} \left( \sum_{i=1}^{n_1} |[A_k]_{i,j}| \right)^2 \right]^{1/2}, \\
\max_{1 \leq k \leq n} \norm{V^* A_k^*}_{\mathrm{F}}
{} & \leq \left( \max_{1 \leq j' \leq n_2} \norm{ V^* \eb_{j'} }_2 \right) \cdot \max_{1 \leq k \leq n} \left[ \sum_{i=1}^{n_1} \left( \sum_{j=1}^{n_2} |[A_k]_{i,j}| \right)^2 \right]^{1/2}.
\end{align*}
\end{lemma}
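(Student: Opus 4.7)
The plan is to prove each inequality by expanding the Frobenius norm of $U^* A_k$ column by column, applying the triangle inequality, and factoring out the maximum row-norm of $U$. The two inequalities are symmetric — the second follows from the first by swapping the roles of $U$ with $V$ and rows with columns — so I will only sketch the first.

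First I would write
\[
\norm{U^* A_k}_{\mathrm{F}}^2 = \sum_{j=1}^{n_2} \norm{U^* A_k \eb_j}_2^2,
\]
and expand $A_k \eb_j = \sum_{i=1}^{n_1} [A_k]_{i,j}\, \eb_i$, so that $U^* A_k \eb_j = \sum_i [A_k]_{i,j}\, U^* \eb_i$. The triangle inequality in $\ell_2$ then gives
\[
\norm{U^* A_k \eb_j}_2 \;\leq\; \sum_{i=1}^{n_1} |[A_k]_{i,j}| \cdot \norm{U^* \eb_i}_2 \;\leq\; \Bigl( \max_{1 \leq i' \leq n_1} \norm{U^* \eb_{i'}}_2 \Bigr) \sum_{i=1}^{n_1} |[A_k]_{i,j}|.
\]

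Next I would square, sum over $j$, and pull the (constant in $j$) factor $\max_{i'} \norm{U^* \eb_{i'}}_2^2$ out of the sum, yielding
\[
\norm{U^* A_k}_{\mathrm{F}}^2 \;\leq\; \Bigl(\max_{1 \leq i' \leq n_1} \norm{U^* \eb_{i'}}_2^2\Bigr) \sum_{j=1}^{n_2} \Bigl( \sum_{i=1}^{n_1} |[A_k]_{i,j}| \Bigr)^2.
\]
Taking square roots and then maximizing over $k \in \{1,\dots,n\}$ gives exactly the first claim. For the second inequality, I would apply the identical argument to $V^* A_k^*$: expanding $\norm{V^* A_k^*}_{\mathrm{F}}^2 = \sum_{i} \norm{V^* A_k^* \eb_i}_2^2$, using $A_k^* \eb_i = \sum_j \overline{[A_k]_{i,j}}\, \eb_j$, and noting $|\overline{[A_k]_{i,j}}| = |[A_k]_{i,j}|$.

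There is no real obstacle here — the argument is a clean application of the triangle inequality and the only care needed is to ensure that the row/column sum pattern of $A_k$ appears in the correct position so that, when later combined with property \eqref{eq:rowcolsp} of the bases constructed in Appendix~\ref{subsec:basis}, the bracketed factor collapses to $1$. That is precisely how this lemma will be used to transfer the standard incoherence condition \eqref{eq:incoherence0} to the basis incoherence \eqref{eq:incoherence2} with the same parameter $\mu$, as required by Theorem~\ref{thm:uniqueness}.
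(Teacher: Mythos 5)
Your proof is correct and follows essentially the same route as the paper's: the paper likewise expands $A_k = \sum_{i,j} \eb_i \eb_j^* [A_k]_{i,j}$, groups by columns so that $\norm{U^* A_k}_{\mathrm{F}}^2 = \sum_j \norm{\sum_i [A_k]_{i,j} U^* \eb_i}_2^2$, applies the triangle inequality, and pulls out $\max_{i'} \norm{U^* \eb_{i'}}_2$, with the second bound obtained by symmetry exactly as you indicate. Your closing remark about how the lemma combines with \eqref{eq:rowcolsp} to pass from standard to basis incoherence also matches the paper's intended use.
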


\begin{proof}[Proof of Lemma~\ref{lem:two_inco}]
Let $k$ be an arbitrary in $\{1,\ldots,n\}$. Then,
\begin{align*}
\norm{U^* A_k}_{\mathrm{F}}^2
{} & = \norm{ U^* \sum_{i=1}^{n_1} \sum_{j=1}^{n_2} \eb_i \eb_j^* [A_k]_{i,j} }_{\mathrm{F}}^2 \\
{} & = \norm{ \sum_{j=1}^{n_2} \left( \sum_{i=1}^{n_1} [A_k]_{i,j} U^* \eb_i \right) \eb_j^* }_{\mathrm{F}}^2 \\
{} & = \sum_{j=1}^{n_2} \norm{ \sum_{i=1}^{n_1} [A_k]_{i,j} U^* \eb_i }_2^2 \\
{} & \leq \sum_{j=1}^{n_2} \left( \sum_{i=1}^{n_1} |[A_k]_{i,j}| \norm{ U^* \eb_i }_2 \right)^2 \\
{} & \leq \left( \max_{1 \leq i' \leq n_1} \norm{ U^* \eb_{i'} }_2 \right)^2 \sum_{j=1}^{n_2} \left( \sum_{i=1}^{n_1} |[A_k]_{i,j}| \right)^2.
\end{align*}
Therefore, the first claim follows by taking maximum over $k$.
The second claim is proved similarly by symmetry.
\end{proof}

%
%
%
%

By Lemma~\ref{lem:two_inco}, if \eqref{eq:rowcolsp} is satisfied,
then the standard incoherence condition implies the basis incoherence condition with the same parameter $\mu$.
However, the converse is not true in general.

\subsection{Proof of Theorem~\ref{thm:uniqueness}}
\label{subsec:pf:thm:uniqueness}
The previous work by Chen and Chi \cite{chen2014robust}
could have proved the claim (in the case without the wrap-around property) as in their Theorem~4.
However, a few steps in their proof depend on a Vandermonde decomposition of $\Lc (\xb)$,
where the generators of the involved Vandermonde matrices are of unit modulus.
Therefore, the original version \cite[Theorem~4]{chen2014robust} only applies to the spectral compressed sensing.

Essentially, their results apply to the setup in this paper with slight modifications.
In the below, a summary of the proof in the previous work \cite{chen2014robust} will be presented
with emphasis on necessary changes that enable the extension of the result by Chen and Chi \cite{chen2014robust} to the setup of this theorem.

We first adopt notations from the previous work \cite{chen2014robust}.
Define $\Ac_k: \mathbb{C}^{n_1 \times n_2} \to \mathbb{C}^{n_1 \times n_2}$ by
\[
\Ac_k(M) = A_k \langle A_k, M \rangle
\]
for $k = 1,\ldots,n$,
where 
$\langle A, B \rangle =  \mathrm{Tr}(A^*B)$
and  $\mathrm{Tr}(\cdot)$ is the trace of a matrix.
Then each $\Ac_k$ is an orthogonal projection onto the one-dimensional subspace spanned by $A_k$.
The orthogonal projection onto the subspace spanned by $\{A_k\}_{k=1}^n$ is given as $\Ac = \sum_{k=1}^n \Ac_k$.
The summation of the rank-1 projection operators in $\{\Ac_k\}_{k \in \Omega}$ is denoted by $\Ac_\Omega$,
i.e., $\Ac_\Omega := \sum_{k \in \Omega} \Ac_k$.
With repetitions in $\Omega$, $\Ac_\Omega$ is not a projection operator.
The summation of distinct elements in $\{\Ac_k\}_{k \in \Omega}$ is denoted by $\Ac'_\Omega$, which is a valid orthogonal projection.
Let $\Lc (\xb) = U \Lambda V^*$ denote the singular value decomposition of $\Lc (\xb)$.
Then the tangent space $T$ with respect to $\Lc (\xb)$ is defined as
\[
T := \{ U M^* + \widetilde{M} V^* :~ M \in \mathbb{C}^{n_2 \times r}, ~ \widetilde{M} \in \mathbb{C}^{n_1 \times r} \}.
\]
Then the projection onto $T$ and its orthogonal complement
will be denoted by $\Pc_T$ and $\Pc_{T^\perp}$, respectively.
Let $\mbox{sgn}(\widetilde{X})$ denote the sign matrix of $\widetilde{X}$, defined by $\widetilde{U} \widetilde{V}^*$,
where $\widetilde{X} = \widetilde{U} \widetilde{\Lambda} \widetilde{V}^*$ denotes the SVD of $\widetilde{X}$.
For example, $\mbox{sgn}[\Lc (\xb)] = U V^*$.
The identity operator for $\mathbb{C}^{n_1 \times n_2}$ will be denoted by $\id$.

The proof starts with Lemma~\ref{lemma:uniqueness}, which improves on the corresponding result by Chen and Chi \cite[Lemma~1]{chen2014robust}.

\begin{lemma}[{Refinement of \cite[Lemma~1]{chen2014robust}}]
\label{lemma:uniqueness}
Suppose that $\Ac_\Omega$ satisfies
\begin{equation}
\label{eq:local_isometry_rank_deficient}
\left\|
\frac{n}{m} \Pc_T \Ac_\Omega \Pc_T
- \Pc_T \Ac \Pc_T
\right\|
\leq \frac{1}{2}.
\end{equation}
If there exists a matrix $W \in \mathbb{C}^{n_1 \times n_2}$ satisfying
\begin{equation}
\label{eq:dualcert_vanish}
(\Ac - \Ac'_{\Omega}) (W) = 0,
\end{equation}
\begin{equation}
\label{eq:dualcert_sgn}
\norm{\Pc_T (W - \mbox{\upshape sgn}[\Lc (\xb)])}_{\mathrm{F}} \leq \frac{1}{7 n},
\end{equation}
and
\begin{equation}
\label{eq:dualcert_bnd}
\norm{\Pc_{T^\perp} (W)} \leq \frac{1}{2},
\end{equation}
then $\xb$ is the unique minimizer to (\ref{eq:nucmin}).
\end{lemma}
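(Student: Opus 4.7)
The plan is to proceed by a standard inexact dual certificate argument along the lines of Chen--Chi, with a sharper tracking of constants that yields the $1/(7n)$ factor appearing in \eqref{eq:dualcert_sgn}. Suppose, toward contradiction, that there exists a nonzero perturbation $\hb\in\Cd^n$ with $P_\Omega(\hb)=0$ satisfying $\|\Lc(\xb+\hb)\|_*\leq\|\Lc(\xb)\|_*$. Decompose $\Lc(\hb)=\Pc_T\Lc(\hb)+\Pc_{T^\perp}\Lc(\hb)$, and invoke the subgradient characterization of the nuclear norm with a test element of the form $UV^*+F$ for some $F\in T^\perp$ with $\|F\|\leq 1$ chosen to achieve $\langle F,\Pc_{T^\perp}\Lc(\hb)\rangle=\|\Pc_{T^\perp}\Lc(\hb)\|_*$. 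This produces
\[
\|\Lc(\xb+\hb)\|_*\geq\|\Lc(\xb)\|_*+\langle UV^*,\Pc_T\Lc(\hb)\rangle+\|\Pc_{T^\perp}\Lc(\hb)\|_*.
\]

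The next step is to use $W$ to replace $\langle UV^*,\Pc_T\Lc(\hb)\rangle$ by a small error. Because $\{A_k\}$ is orthonormal and $\langle A_k,\Lc(\hb)\rangle=\langle \eb_k,\hb\rangle$, the condition $P_\Omega(\hb)=0$ gives $\Ac_\Omega\Lc(\hb)=\mathbf{0}$ and in particular $\Ac'_\Omega\Lc(\hb)=\mathbf{0}$. Since $\Lc(\hb)\in\mathrm{Im}(\Ac)$ implies $\Ac\Lc(\hb)=\Lc(\hb)$, the certification hypothesis $(\Ac-\Ac'_\Omega)(W)=\mathbf{0}$ together with self-adjointness of $\Ac$ and $\Ac'_\Omega$ yields $\langle W,\Lc(\hb)\rangle=\langle W,\Ac'_\Omega\Lc(\hb)\rangle=0$. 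Splitting $\langle W,\Lc(\hb)\rangle$ across $T\oplus T^\perp$ and applying \eqref{eq:dualcert_sgn} and \eqref{eq:dualcert_bnd} gives
\[
\langle UV^*,\Pc_T\Lc(\hb)\rangle\geq -\tfrac{1}{7n}\|\Pc_T\Lc(\hb)\|_F-\tfrac{1}{2}\|\Pc_{T^\perp}\Lc(\hb)\|_*,
\]
so that combining with the previous inequality yields $\|\Lc(\xb+\hb)\|_*-\|\Lc(\xb)\|_*\geq\tfrac{1}{2}\|\Pc_{T^\perp}\Lc(\hb)\|_*-\tfrac{1}{7n}\|\Pc_T\Lc(\hb)\|_F$.

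To close the argument one must control $\|\Pc_T\Lc(\hb)\|_F$ by $\|\Pc_{T^\perp}\Lc(\hb)\|_*$ using the local isometry hypothesis \eqref{eq:local_isometry_rank_deficient}. Starting from $\Ac_\Omega\Lc(\hb)=\mathbf{0}$ and $\Ac\Lc(\hb)=\Lc(\hb)$, expansion in $\Pc_T\oplus\Pc_{T^\perp}$ produces
\[
\Pc_T\Lc(\hb)=\Pc_T\Ac\Pc_{T^\perp}\Lc(\hb)+\Pc_T\Ac\Pc_T\Lc(\hb),\qquad \tfrac{n}{m}\Pc_T\Ac_\Omega\Pc_T\Lc(\hb)=-\tfrac{n}{m}\Pc_T\Ac_\Omega\Pc_{T^\perp}\Lc(\hb),
\]
and subtracting these two identities while invoking \eqref{eq:local_isometry_rank_deficient} on $\Pc_T\Lc(\hb)$ gives $\tfrac{1}{2}\|\Pc_T\Lc(\hb)\|_F\leq\|\Pc_T\Ac\Pc_{T^\perp}\Lc(\hb)\|_F+\tfrac{n}{m}\|\Pc_T\Ac_\Omega\Pc_{T^\perp}\Lc(\hb)\|_F$. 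Bounding the right-hand side by $\|\Pc_{T^\perp}\Lc(\hb)\|_F\leq\|\Pc_{T^\perp}\Lc(\hb)\|_*$ together with an operator-norm estimate on $\Pc_T\Ac_\Omega\Pc_{T^\perp}$ yields $\|\Pc_T\Lc(\hb)\|_F\leq Cn\|\Pc_{T^\perp}\Lc(\hb)\|_*$ for a universal constant $C$, which combined with the previous bound forces $\|\Lc(\xb+\hb)\|_*-\|\Lc(\xb)\|_*>0$ unless $\Lc(\hb)=\mathbf{0}$, i.e.\ unless $\hb=\mathbf{0}$.

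The main technical obstacle is obtaining the correct $O(n)$ scaling in the bound $\|\Pc_T\Lc(\hb)\|_F\leq Cn\|\Pc_{T^\perp}\Lc(\hb)\|_*$; it is precisely this scaling that permits the factor $1/(7n)$ in \eqref{eq:dualcert_sgn}, as opposed to an $n^2$ factor in the original \cite[Lemma~1]{chen2014robust}, and it is responsible for the improved noise amplification $n^2$ (rather than $n^3$) in Theorem~\ref{thm:stability}. Achieving this sharper scaling relies on the structural properties of the basis $\{A_k\}$ stated in \eqref{eq:const_nonzero} and \eqref{eq:rowcolsp}, which allow bounds on $\Ac_\Omega$ that were previously obtained only through the Vandermonde-specific arguments used for spectral compressed sensing. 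These structural properties hold uniformly for both $\Hc(n,d)$ and $\Hc_c(n,d)$, so the resulting lemma applies to all the structured matrix completion problems considered in the paper.
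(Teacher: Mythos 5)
Your proposal is correct, and every ingredient matches the paper's proof --- the subgradient element $UV^*+\mbox{sgn}[\Pc_{T^\perp}\Lc(\hb)]$, the three-term argument giving $\langle W,\Lc(\hb)\rangle=0$ via \eqref{eq:dualcert_vanish} and $\Ac'_\Omega\Lc(\hb)=0$, the local isometry \eqref{eq:local_isometry_rank_deficient}, and the crude operator bound $\norm{\tfrac{n}{m}\Ac_\Omega+\id-\Ac}\leq n$ --- but you organize them differently. The paper splits into two complementary cases by comparing $\norm{\Pc_T\Lc(\hb)}_{\mathrm{F}}$ with $3n\norm{\Pc_{T^\perp}\Lc(\hb)}_{\mathrm{F}}$, closing Case 1 by the subgradient estimate and Case 2 by a quadratic-form contradiction; you instead extract the unconditional inequality $\norm{\Pc_T\Lc(\hb)}_{\mathrm{F}}\leq Cn\norm{\Pc_{T^\perp}\Lc(\hb)}_{\mathrm{F}}$ from $(\tfrac{n}{m}\Ac_\Omega+\id-\Ac)\Lc(\hb)=0$ and run a single, case-free subgradient estimate. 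The two routes are equivalent in content: the paper's Case 2 computation, read unconditionally, is your inequality in quadratic form with the cleaner constant $Cn=2n$ (pair $(\tfrac{n}{m}\Ac_\Omega+\id-\Ac)\Lc(\hb)$ against $\Pc_T\Lc(\hb)$ rather than taking norms of the vector identity). Your version buys brevity, but the unspecified constant must actually be checked against the $1/(7n)$ tolerance in \eqref{eq:dualcert_sgn}: positivity requires $C<7/2$, and your triangle-inequality derivation gives $\tfrac12\norm{\Pc_T\Lc(\hb)}_{\mathrm{F}}\leq(1+n)\norm{\Pc_{T^\perp}\Lc(\hb)}_{\mathrm{F}}$, i.e.\ $Cn=2(n+1)$, which clears the threshold only for $n\geq 2$ (worst-case margin $1/14$), whereas the quadratic-form pairing yields margin $3/14$ uniformly; note also that when $\Pc_{T^\perp}\Lc(\hb)=0$ the same inequality forces $\Pc_T\Lc(\hb)=0$, whence $\hb=0$ by injectivity of $\Lc$ (orthonormality of $\{A_k\}$). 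One correction to your closing remarks: within this lemma the $O(n)$ bound on $\tfrac{n}{m}\Ac_\Omega$ is nothing but the triangle inequality over $m$ rank-one orthogonal projections and uses neither \eqref{eq:const_nonzero} beyond unit Frobenius norm nor \eqref{eq:rowcolsp}; those basis properties enter the proof of Theorem~\ref{thm:uniqueness} elsewhere (through Lemma~\ref{lem:two_inco} and the analogs of Chen--Chi's Lemma~7), and the relaxation of the tolerance in \eqref{eq:dualcert_sgn} from order $n^{-2}$ to $1/(7n)$ --- which drives the improved $n^2$ noise factor in Theorem~\ref{thm:stability} --- comes purely from sharper constant-tracking in exactly the argument above, independent of the basis structure.
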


\begin{proof}[Proof of Lemma~\ref{lemma:uniqueness}]
See Appendix~\ref{subsec:pf:lemma:uniqueness}.
\end{proof}

Lemma~\ref{lemma:uniqueness}, similarly to \cite[Lemma~1]{chen2014robust}, claims that
if there exists a dual certificate matrix $W$, which satisfies \cref{eq:dualcert_vanish,eq:dualcert_sgn,eq:dualcert_bnd},
then $\xb$ is the unique minimizer to \eqref{eq:nucmin}.
Compared to the previous result \cite[Lemma~1]{chen2014robust},
Lemma~\ref{lemma:uniqueness} allows a larger deviation of the dual certificate $W$ from the sign matrix of $\Lc (\xb)$.
(Previously, the upper bound was in the order of $n^{-2}$.)
\begin{remark}
The relaxed condition on $W$ in \eqref{eq:dualcert_sgn} provides a performance guarantee
at sample complexity of the same order compared to the previous work \cite{chen2014robust}.
However, in the noisy case, this relaxed condition provides an improved performance guarantee
with significantly smaller noise amplification factor given in Theorem~\ref{thm:stability}.
\end{remark}

The next step is to construct a dual certificate $W$ that satisfies \cref{eq:dualcert_vanish,eq:dualcert_sgn,eq:dualcert_bnd}.
The version of the golfing scheme by Chen and Chi \cite{chen2014robust} still works in the setup of this theorem.
They construct a dual certificate $W$ as follows:
recall that the elements of $\Omega$ are i.i.d. following the uniform distribution on $[n]:=\{0,\cdots, n-1\}$. 
The multi-set $\Omega$ is partitioned into $j_0$ multi-sets, $\Omega_1,\ldots,\Omega_{j_0}$
so that each $\Omega_j$ contains $m/j_0$ i.i.d. samples.
A sequence of matrices $(F_0,\ldots,F_{j_0})$ are generated recursively by
\[
F_j = \Pc_T \left(\Ac - \frac{n j_0}{m} A_{\Omega_j}^* \right) F_{j-1}, \quad j = 1,\ldots,j_0,
\]
starting from $F_0 = \mbox{sgn}[\Lc (\xb)] = U V^*$.
Then, $W$ is obtained by
\[
W = \sum_{j=1}^{j_0} \left(\frac{n j_0}{m} \Ac_{\Omega_j}^* + \id - \Ac \right) F_{j-1}.
\]

{Chen and Chi showed that if $j_0 = 3 \log_{1/\epsilon} n$ for a small constant $\epsilon < e^{-1}$, 
then $W$ satisfies \cref{eq:dualcert_vanish,eq:dualcert_sgn} with high probability \cite[Section~VI.C]{chen2014robust}. 
In fact, they showed that a sufficient condition for \eqref{eq:dualcert_sgn} given by
\[
\norm{\Pc_T (W - \mbox{\upshape sgn}[\Lc (\xb)])}_{\mathrm{F}} \leq \frac{1}{2 n^2}
\]
is satisfied. Thus, without any modification, their arguments so far apply to completion of structured matrices in the setup of this theorem.}
Chen and Chi verified that $W$ satisfies \cref{eq:dualcert_vanish,eq:dualcert_sgn} \cite[Section~VI.C]{chen2014robust}.
Without any modification, their arguments so far apply to completion of structured matrices in the setup of this theorem.

They verified that $W$ also satisfies the last property in \eqref{eq:dualcert_bnd} with some technical conditions \cite[Section~VI.D]{chen2014robust}. Specifically, Chen and Chi verified that $W$ satisfies \eqref{eq:dualcert_bnd} through a sequence of lemmas \cite[Lemmas~4,5,6,7]{chen2014robust}
using intermediate quantities given in terms of the following two norms:
\beq
\label{eq:defAcinf}
\norm{M}_{\Ac,\infty} := \max_{1 \leq k \leq n} \left| \langle A_k, M \rangle \right| \norm{A_k},
\eeq
and
\beq
\label{eq:defAc2}
\norm{M}_{\Ac,2} := \left( \sum_{k=1}^n \left| \langle A_k, M \rangle \right|^2 \norm{A_k}^2 \right)^{1/2}.
\eeq
Since most of their arguments in \cite[Sections~VI.D, VI.E]{chen2014robust} generalize to the setup of our theorem,
we do not repeat technical details here.
However, there was one place where the generalization fails.
The results in \cite[Lemma~7]{chen2014robust} provide upper bounds on the initialization of the dual certificate algorithm in the above two norms.
We found that Chen and Chi used both the standard incoherence \eqref{eq:incoherence0} and the basis incoherence \eqref{eq:incoherence2} in this step.
In fact, the proof of \cite[Lemma~7]{chen2014robust} depends crucially on a Vandermonde decomposition with generators of unit modulus.
In spectral compressed sensing, by controlling the condition number of Vandermonde matrices,
both incoherence properties are satisfied with the same parameter.
In fact, this is the place where their proof fails to generalize to other structured matrix completion.


In our setup, we assume that \eqref{eq:rowcolsp} is satisfied.
By Lemma~\ref{lem:two_inco}, the standard incoherence property implies the basis incoherence, and then the dependence on the structure due to a Vandermonde decomposition disappears.
Thus, only the standard incoherence of $\Lc (\xb)$ is included among the hypotheses.

{
\begin{lemma}{\cite[Lemma~7]{chen2014robust}}
\label{lem:ref7cc}
Let $\norm{\cdot}_{\Ac,\infty}$ and $\norm{\cdot}_{\Ac,2}$ be defined respectively in \eqref{eq:defAcinf} and \eqref{eq:defAc2}.
The standard incoherence property with parameter $\mu$ implies that there exists an absolute constant $c_6$ such that
\begin{align}
\norm{U V^*}_{\Ac,\infty} {} & \leq \frac{\mu r}{\min(n_1,n_2)}, \label{eq:lem:ref7cc:a}\\
\norm{U V^*}_{\Ac,2}^2 {} & \leq \frac{c_6 \mu r \log^2 n}{\min(n_1,n_2)}, \label{eq:lem:ref7cc:b}
\intertext{and}
\norm{\Pc_T \left(\norm{A_k}_0^{1/2} A_k\right)}_{\Ac,2}^2 {} & \leq \frac{c_6 \mu r \log^2 n}{\min(n_1,n_2)}, \quad \forall k = 1,\ldots,n. \label{eq:lem:ref7cc:c}
\end{align}
\end{lemma}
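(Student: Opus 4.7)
The plan is to first reduce the hypothesis to the basis incoherence condition \eqref{eq:incoherence2} via Lemma~\ref{lem:two_inco} together with property \eqref{eq:rowcolsp}, and then to derive each of the three bounds in turn. Since each basis matrix $A_k$ in \eqref{eq:basis} and \eqref{eq:basis_cir} has at most one nonzero entry per row and per column, property \eqref{eq:rowcolsp} is satisfied, so Lemma~\ref{lem:two_inco} yields
\[
\max_{1 \leq k \leq n} \norm{U^* A_k}_{\mathrm{F}} \leq \sqrt{\mu r/n_1}, \qquad \max_{1 \leq k \leq n} \norm{V^* A_k^*}_{\mathrm{F}} \leq \sqrt{\mu r/n_2}.
\]
This reduction is the key step that removes the Vandermonde-specific dependence in the original Chen and Chi proof of their Lemma~7.

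For the pointwise bound \eqref{eq:lem:ref7cc:a}, I would use the explicit form of $A_k$. Writing $A_k = \norm{A_k}_0^{-1/2} \sum_{(i,j) \in S_k} \eb_i \eb_j^*$ where $S_k$ is the support set, and noting that $\norm{A_k} = \norm{A_k}_0^{-1/2}$ because $A_k$ is a scaled partial permutation matrix, the bound follows from entrywise Cauchy--Schwarz: the standard incoherence gives $|\langle U^*\eb_i, V^*\eb_j\rangle| \leq \mu r/\sqrt{n_1 n_2}$ for every $(i,j)$, so
\[
|\langle A_k, UV^*\rangle| \leq \norm{A_k}_0^{-1/2} \cdot |S_k| \cdot \frac{\mu r}{\sqrt{n_1 n_2}} = \norm{A_k}_0^{1/2}\, \frac{\mu r}{\sqrt{n_1 n_2}}.
\]
Multiplying by $\norm{A_k} = \norm{A_k}_0^{-1/2}$ and using $\sqrt{n_1 n_2} \geq \min(n_1, n_2)$ yields \eqref{eq:lem:ref7cc:a}.

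For the aggregate bounds \eqref{eq:lem:ref7cc:b} and \eqref{eq:lem:ref7cc:c}, I plan to follow the outline of Chen and Chi's argument for their Lemma~7, replacing the Vandermonde-specific manipulations by the basis incoherence bounds coming from the reduction above. The strategy is to split the summation over $k$ into ``interior'' indices where $\norm{A_k}_0$ attains its extremal value $\min(n_1,n_2)$ and ``boundary'' indices where $\norm{A_k}_0$ is smaller. For wrap-around matrices, property \eqref{eq:samesparseAk} eliminates boundary indices entirely, and the sum collapses via basis orthogonality to at most $\norm{UV^*}_{\mathrm{F}}^2/\min(n_1,n_2) = r/\min(n_1,n_2)$, which is even stronger than the stated bound. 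For Hankel matrices in $\Hc(n,d)$, the interior contribution is controlled the same way, while the boundary terms are handled via the pointwise bound \eqref{eq:lem:ref7cc:a} combined with the fact that only $O(\min(n_1,n_2))$ indices lie at the boundary. Bound \eqref{eq:lem:ref7cc:c} is derived analogously after expanding $\Pc_T(\norm{A_k}_0^{1/2}A_k) = UU^*(\norm{A_k}_0^{1/2}A_k) + (\norm{A_k}_0^{1/2}A_k)VV^* - UU^*(\norm{A_k}_0^{1/2}A_k)VV^*$ and bounding each summand via $\norm{U^* A_k}_{\mathrm{F}}$ and $\norm{V^* A_k^*}_{\mathrm{F}}$ from basis incoherence.

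The main technical obstacle will be the non-uniform sparsity of $A_k$ in the non-wrap-around case: boundary indices with small $\norm{A_k}_0$ introduce an irregular weighting $1/\norm{A_k}_0$ in the sum, which is the source of the $\log^2 n$ factor through a harmonic-style sum $\sum_{k=1}^{\min(n_1,n_2)} 1/k = O(\log n)$ appearing twice (once at each end of the Hankel diagonal range). This is precisely the step at which the original Chen and Chi argument invokes the Vandermonde decomposition; the plan is to replace it by arguments that rely only on basis incoherence and the row/column sparsity structure enforced by \eqref{eq:rowcolsp}. The wrap-around variant avoids this difficulty entirely through \eqref{eq:samesparseAk}, which explains why the corresponding sample complexity in Theorem~\ref{thm:uniqueness} enjoys a lower logarithmic exponent.
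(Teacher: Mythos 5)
Your reduction to basis incoherence via Lemma~\ref{lem:two_inco} and \eqref{eq:rowcolsp} is consistent with the paper's framework, and your derivation of \eqref{eq:lem:ref7cc:a} is correct: it is essentially the same entrywise Cauchy--Schwarz computation the paper carries out explicitly for the wrap-around analog (Lemma~\ref{lem:ref7}), using $\norm{A_k}\le\norm{A_k}_0^{-1/2}$ and $|[UV^*]_{i,j}|\leq\norm{U^*\eb_i}_2\norm{V^*\eb_j}_2\leq \mu r/\sqrt{n_1 n_2}$.

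The genuine gap is in \eqref{eq:lem:ref7cc:b} and \eqref{eq:lem:ref7cc:c}, which you never actually prove. You defer to Chen and Chi's proof of their Lemma~7 ``with the Vandermonde-specific manipulations replaced,'' but you mislocate where the Vandermonde structure enters, and the replacement you sketch does not deliver the stated bound. The paper's proof rests on a simpler observation: in \cite[Appendix~H]{chen2014robust} the Vandermonde decomposition (with unit-modulus generators) is used \emph{only} to derive the two inequalities $\max_{1\leq i\leq n_1}\norm{U^*\eb_i}_2^2\leq \mu r/n_1$ and $\max_{1\leq j\leq n_2}\norm{V^*\eb_j}_2^2\leq \mu r/n_2$, i.e., exactly the standard incoherence condition, which the lemma now takes as a hypothesis. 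Consequently the remainder of Chen and Chi's argument, including all the sparsity-profile bookkeeping that produces the $\log^2 n$ factor, carries over verbatim, and nothing in the boundary analysis needs to be reconstructed. Your sketch, by contrast, asserts that the Vandermonde dependence sits in the boundary/log-factor step and proposes substitutes that fail quantitatively. First, two additive harmonic sums $\sum_k 1/k$, one at each end of the anti-diagonal range, give $O(\log n)$, not the claimed $O(\log^2 n)$, so your own accounting does not reproduce the bound. Second, and more seriously, controlling the $O(\min(n_1,n_2))$ boundary indices by the pointwise bound \eqref{eq:lem:ref7cc:a} contributes, per index, $|\langle A_k,UV^*\rangle|^2\norm{A_k}^2\leq \mu^2r^2/(n_1n_2)$, hence in total roughly $\mu^2 r^2/\max(n_1,n_2)$ to $\norm{UV^*}_{\Ac,2}^2$. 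This is quadratic in $\mu r$ and exceeds the target $c_6\,\mu r\log^2 n/\min(n_1,n_2)$ as soon as $\mu r\gg \log^2 n\cdot \max(n_1,n_2)/\min(n_1,n_2)$; for instance with $d\approx n/2$ it already fails when $\mu r$ is polynomially large in $n$, which is well inside the regime where the sample complexity \eqref{eq:samp_comp} remains meaningful. So, executed as written, your plan stalls precisely at the step you flagged as the ``main technical obstacle''; the repair is not a new boundary argument but the paper's re-labelling of the hypothesis.
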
}

\begin{proof}[Proof of Lemma~\ref{lem:ref7cc}]
{
Although \cite[Lemma~7]{chen2014robust} did not assume that $U \in \mathbb{C}^{n_1 \times r}$ (resp. $V \in \mathbb{C}^{n_2 \times r}$) consists of the left (resp. right) singular vectors of a rank-$r$ Hankel matrix with a Vandermode decomposition with generators of unit modulus,
this condition was used in the proof by Chen and Chi \cite[Appendix~H]{chen2014robust}.
More precisely, they used the Vandermonde decomposition to get the following inequalities:
\[
\max_{1 \leq i \leq n_1} \norm{U^* \eb_i}_2^2 \leq \frac{\mu r}{n_1}
\quad \text{and} \quad
\max_{1 \leq j \leq n_2} \norm{V^* \eb_j}_2^2 \leq \frac{\mu r}{n_2}.
\]
These inequalities are exactly the standard incoherence property with parameter $\mu$.
Except these inequalities, their proof generalizes without requiring the Vandermonde decomposition.
Thus, we slightly modify \cite[Lemma~7]{chen2014robust} by including the standard incoherence as an assumption to the lemma.}
\end{proof}

The proof by Chen and Chi \cite{chen2014robust} focused on the Hankel-block-Hankel matrix where
the elements in the basis $\{A_k\}_{k=1}^n$ have varying sparsity levels.
In the case of structured matrices with the wrap-around property,
the sparsity levels of $\{A_k\}_{k=1}^n$ are the same.
Thus, this additional property can be used to tighten the sample complexity by reducing the order of $\log n$ term.
More specifically, we improve \cite[Lemma~7]{chen2014robust} with the wrap-around property in the next lemma.
(The upper bounds on the terms in $\norm{\cdot}_{\Ac,2}$ were larger by factor of $\log^2 n$ in \cite[Lemma~7]{chen2014robust}.)
\begin{lemma}[{Analog of \cite[Lemma~7]{chen2014robust} with the wrap-around property}]
\label{lem:ref7}
Let $\norm{\cdot}_{\Ac,\infty}$ and $\norm{\cdot}_{\Ac,2}$ be defined respectively in \eqref{eq:defAcinf} and \eqref{eq:defAc2}.
The standard incoherence property with parameter $\mu$ implies
\begin{align}
\norm{U V^*}_{\Ac,\infty} {} & \leq \frac{\mu r}{\min(n_1,n_2)}, \label{eq:lem:ref7:a}\\
\norm{U V^*}_{\Ac,2}^2 {} & \leq \frac{\mu r}{\min(n_1,n_2)}, \label{eq:lem:ref7:b}
\intertext{and}
\norm{\Pc_T \left(\norm{A_k}_0^{1/2} A_k\right)}_{\Ac,2}^2 {} & \leq \frac{9 \mu r}{\min(n_1,n_2)}, \quad \forall k = 1,\ldots,n. \label{eq:lem:ref7:c}
\end{align}
\end{lemma}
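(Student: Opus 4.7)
My plan is to leverage the two structural features unique to the wrap-around case: the uniform sparsity $\|A_k\|_0 = \min(n_1,n_2)$ granted by \eqref{eq:samesparseAk}, and the row/column condition \eqref{eq:rowcolsp} which, via Lemma~\ref{lem:two_inco}, upgrades the standard incoherence \eqref{eq:incoherence0} into the basis incoherence bounds $\|U^*A_k\|_F^2 \leq \mu r/n_1$ and $\|V^*A_k^*\|_F^2 \leq \mu r/n_2$. Throughout, the uniformity $\|A_k\|^2 = 1/\min(n_1,n_2)$ lets me pull the weight in and out of sums freely, which is exactly what removes the $\log^2 n$ factor present in the non-wrap-around analogue \cite[Lemma~7]{chen2014robust}.

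For \eqref{eq:lem:ref7:a}, I would expand $\langle A_k, UV^*\rangle = \sum_{(i,j)\in\mathrm{supp}(A_k)} (UV^*)_{i,j}/\sqrt{\|A_k\|_0}$ using the equal-magnitude form \eqref{eq:const_nonzero}, and then bound each entry via the standard incoherence as $|(UV^*)_{i,j}| \leq \|U^*\eb_i\|_2\,\|V^*\eb_j\|_2 \leq \mu r/\sqrt{n_1 n_2}$. Summing $\min(n_1,n_2)$ such terms and combining with $\|A_k\| \leq 1/\sqrt{\min(n_1,n_2)}$ yields $|\langle A_k, UV^*\rangle|\cdot \|A_k\| \leq \mu r /\sqrt{n_1 n_2} \leq \mu r /\min(n_1,n_2)$, uniformly in $k$.

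For \eqref{eq:lem:ref7:b}, the family $\{A_k\}_{k=1}^n$ is orthonormal because distinct basis elements have disjoint supports. Parseval therefore gives the deterministic inequality $\sum_k |\langle A_k, UV^*\rangle|^2 \leq \|UV^*\|_F^2 = r$, and factoring out the uniform weight $\|A_k\|^2 = 1/\min(n_1,n_2)$ produces $\|UV^*\|_{\Ac,2}^2 \leq r/\min(n_1,n_2) \leq \mu r/\min(n_1,n_2)$ since $\mu \geq 1$. For \eqref{eq:lem:ref7:c}, I would first bound $\|\Pc_T(\sqrt{\|A_k\|_0}A_k)\|_F^2$ using the Frobenius-orthogonal decomposition $\Pc_T(M) = UU^*M(I-VV^*) + MVV^*$, which delivers $\|\Pc_T M\|_F^2 \leq \|U^*M\|_F^2 + \|MV\|_F^2$. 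For $M = \sqrt{\|A_k\|_0}\,A_k$, each summand equals $\|A_k\|_0$ times a basis-incoherence quantity and is therefore at most $\min(n_1,n_2)\cdot\mu r/\max(n_1,n_2) \leq \mu r$, giving $\|\Pc_T M\|_F^2 \leq 2\mu r$. Parseval and the uniform weight then finish the argument with $\|\Pc_T(\sqrt{\|A_k\|_0}A_k)\|_{\Ac,2}^2 \leq 2\mu r/\min(n_1,n_2)$, which fits comfortably inside the constant $9$.

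The one nontrivial step to verify is the upgrade from the standard to the basis incoherence, which is where the row/column condition \eqref{eq:rowcolsp} does the real work. For the explicit wrap-around basis \eqref{eq:basis_cir}, each $A_k$ has at most one nonzero entry per row and per column, so \eqref{eq:rowcolsp} is immediate by inspection and Lemma~\ref{lem:two_inco} applies with constant $1$. Once this is secured, the remaining estimates are deterministic bookkeeping; the removal of the $\log^2 n$ factor compared to \cite[Lemma~7]{chen2014robust} is traceable precisely to the fact that Parseval is applied here with a single uniform weight $1/\min(n_1,n_2)$, rather than having to control a weighted sum where $\|A_k\|_0$ ranges from $1$ (corners) to $d$ (middle) in the non-wrap-around Hankel basis \eqref{eq:basis}.
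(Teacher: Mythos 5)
Your proof is correct, and while part \eqref{eq:lem:ref7:a} coincides with the paper's argument (bounding the averaged entries of $UV^*$ over $\mathrm{supp}(A_k)$ by $\max_{i,j}|[UV^*]_{i,j}|\leq \mu r/\sqrt{n_1 n_2}$), your treatment of the two $\norm{\cdot}_{\Ac,2}$ bounds takes a genuinely different route. The paper proves and invokes Lemma~\ref{lem:Ac2bnd}, which controls $\norm{M}_{\Ac,2}^2$ by the worst row/column energy of $M$, and for \eqref{eq:lem:ref7:c} expands $\Pc_T$ into the three terms $UU^*M$, $MVV^*$, $UU^*MVV^*$ with a triangle-inequality factor of $3$ per term (hence the constant $9$), additionally using that $\norm{A_k}_0^{1/2}A_k$ is a submatrix of a permutation matrix so that $\eb_i^*\norm{A_k}_0^{1/2}A_k=\eb_j^*$ for some $j$. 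You instead exploit orthonormality of $\{A_k\}$ (disjoint supports, unit Frobenius norm) together with the uniform weight $\norm{A_k}^2=1/\min(n_1,n_2)$ to apply Bessel's inequality directly, and for \eqref{eq:lem:ref7:c} you use the Pythagorean splitting $\Pc_T(M)=UU^*M(I-VV^*)+MVV^*$, whose two pieces are Frobenius-orthogonal; this yields the sharper constant $2$ in place of $9$, comfortably within the stated bound. Your route is cleaner and buys a better constant; the paper's route via Lemma~\ref{lem:Ac2bnd} buys generality, since it needs neither orthonormality of the basis nor exact uniformity of $\norm{A_k}$ (only $\norm{A_k}\leq\norm{A_k}_0^{-1/2}$) and so transfers verbatim to the non-wrap-around case, which is where the paper actually deploys it (Lemma~\ref{lem:ref7cc}). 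Two small points you should make explicit: in \eqref{eq:lem:ref7:b} your Bessel bound gives $r/\min(n_1,n_2)$, and passing to $\mu r/\min(n_1,n_2)$ uses $\mu\geq 1$, which holds because $\sum_{i=1}^{n_1}\norm{U^*\eb_i}_2^2=r$ forces $\max_i\norm{U^*\eb_i}_2^2\geq r/n_1$; and in \eqref{eq:lem:ref7:c} your bound ``$\min(n_1,n_2)\cdot\mu r/\max(n_1,n_2)$'' is only correct for one of the two summands --- the other equals $\norm{A_k}_0\cdot\mu r/\min(n_1,n_2)=\mu r$ --- though both are indeed at most $\mu r$, so the conclusion $\norm{\Pc_T(\norm{A_k}_0^{1/2}A_k)}_{\mathrm{F}}^2\leq 2\mu r$ stands.
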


\begin{proof}[Proof of Lemma~\ref{lem:ref7}]
See Appendix~\ref{subsec:pf:lem:ref7}.
\end{proof}

In the wrap-around case, it only remains to verify that we can drop the order of $\log n$ from 4 to 2.
In the previous work \cite[Section~VI.E]{chen2014robust},
the $\log^4 n$ term appears only through the parameter $\mu_5$, which is in the order of $\log^2 n$.
Due to Lemma~\ref{lem:ref7}, parameter $\mu_5$ reduces by factor of $\log^2 n$. Thus, the sample complexity reduces by the same factor.
This completes the proof.

\subsection{Proof of Lemma~\ref{lemma:uniqueness}}
\label{subsec:pf:lemma:uniqueness}
Our proof essentially adapts the arguments of Chen and Chi \cite[Appendix~B]{chen2014robust}.
The upper bound on the deviation of $W$ from $\mbox{\upshape sgn}[\Lc (\xb)]$ in \eqref{eq:dualcert_sgn} is sharpened in order by optimizing parameters.

Let $\hat{\xb} = \xb + \hb$ be the minimizer to (\ref{eq:nucmin}). We show that $\Lc (\hb) = 0$ in two complementary cases.
Then by the injectivity of $\Lc$, $\hb = 0$, or equivalently, $\hat{\xb} = \xb$.

\noindent\textbf{Case 1:} We first consider the case when $\Lc (\hb)$ satisfies
\begin{equation}
\label{eq:case1}
\norm{\Pc_T \Lc (\hb)}_{\mathrm{F}} \leq 3 n \norm{\Pc_{T^\perp} \Lc (\hb)}_{\mathrm{F}}.
\end{equation}

Since $T$ is the tangent space of $\Lc (\xb)$, $\Pc_{\Tc^\perp} \Lc (\xb) = 0$.
Thus $\Pc_T (\mbox{\upshape sgn}[\Lc (\xb)] + \mbox{\upshape sgn}[\Pc_{T^\perp} \Lc (\hb)]) = \Pc_T (\mbox{\upshape sgn}[\Lc (\xb)])$.
Furthermore, $\norm{\mbox{\upshape sgn}[\Lc (\xb)] + \mbox{\upshape sgn}[\Pc_{T^\perp} \Lc (\hb)]} \leq 1$.
Therefore, $\mbox{\upshape sgn}[\Lc (\xb)] + \mbox{\upshape sgn}[\Pc_{T^\perp} \Lc (\hb)]$ is a valid sub-gradient of the nuclear norm at $\Lc (\xb)$.
Then it follows that
\begin{equation}
\label{eq:pf_lemma_uniqueness:ineq1}
\begin{aligned}
\norm{\Lc (\xb) + \Lc (\hb)}_*
{} & \geq \norm{\Lc (\xb)}_* + \langle \mbox{\upshape sgn}[\Lc (\xb)] + \mbox{\upshape sgn}[\Pc_{T^\perp} \Lc (\hb)], ~ \Lc (\hb) \rangle \\
{} & = \norm{\Lc (\xb)}_*
+ \langle W, \Lc (\hb) \rangle
+ \langle \mbox{\upshape sgn}[\Pc_{T^\perp} \Lc (\hb)], ~ \Lc (\hb) \rangle
- \langle W - \mbox{\upshape sgn}[\Lc (\xb)], \Lc (\hb) \rangle.
\end{aligned}
\end{equation}

In fact, $\langle W, \Lc (\hb) \rangle  = 0$ as shown below.
The inner product of $\Lc (\hb)$ and $W$ is decomposed as
\begin{equation}
\label{eq:pf_lemma_uniqueness:ip}
\langle W, \Lc (\hb) \rangle
= \langle W, (\id - \Ac) \Lc (\hb) \rangle
+ \langle W, (\Ac - \Ac'_{\Omega}) \Lc (\hb) \rangle
+ \langle W, \Ac'_{\Omega} \Lc (\hb) \rangle.
\end{equation}
Indeed, all three terms in the right-hand-side of \eqref{eq:pf_lemma_uniqueness:ip} are 0.
This can be shown as follows. Since $\Ac$ is the orthogonal projection onto the range space of $\Lc$, the first term is 0.
The second term is 0 by the assumption on $W$ in \eqref{eq:dualcert_vanish}.
Since $\hat{\xb}$ is feasible for \eqref{eq:nucmin}, $P_\Omega (\hat{\xb}) = P_\Omega (\xb)$.
Thus $P_\Omega (\hb) = P_\Omega (\hat{\xb} - \xb) = 0$.
Since $\{A_k\}_{k=1}^n$ is an orthonormal basis, we have
\begin{equation}
\label{eq:pf_lemma_uniqueness:vanish}
\Ac_\omega \Lc (\hb) =
\sum_{k \in [n]\setminus\Omega} \langle \eb_k, \hb \rangle \langle \Ac_\omega, A_k \rangle = 0,
\quad \forall \omega \in \Omega.
\end{equation}
It follows that $\Ac'_{\Omega} \Lc (\hb) = 0$. Thus, the third term of the right-hand-side of \eqref{eq:pf_lemma_uniqueness:ip} is 0.

Since the $\mbox{\upshape sgn}(\cdot)$ operator commutes with $\Pc_{T^\perp}$, and $\Pc_{T^\perp}$ is idempotent, we get
\begin{align*}
\langle \mbox{\upshape sgn}[\Pc_{T^\perp} \Lc (\hb)], ~ \Lc (\hb) \rangle
{} & = \langle \Pc_{T^\perp} \mbox{\upshape sgn}[\Pc_{T^\perp} \Lc (\hb)], ~ \Lc (\hb) \rangle \\
{} & = \langle \mbox{\upshape sgn}[\Pc_{T^\perp} \Lc (\hb)], ~ \Pc_{T^\perp} \Lc (\hb) \rangle \\
{} & = \norm{\Pc_{T^\perp} \Lc (\hb)}_*.
\end{align*}

Then \eqref{eq:pf_lemma_uniqueness:ineq1} implies
\begin{equation}
\label{eq:pf_lemma_uniqueness:ineq2}
\norm{\Lc (\xb) + \Lc (\hb)}_* \geq \norm{\Lc (\xb)}_* + \norm{\Pc_{T^\perp} \Lc (\hb)}_* - \langle W - \mbox{\upshape sgn}[\Lc (\xb)], \Lc (\hb) \rangle.
\end{equation}

We derive an upper bound on the magnitude of the third term in the right-hand-side of \eqref{eq:pf_lemma_uniqueness:ineq2} given by
\begin{subequations}
\label{eq:pf_lemma_uniqueness:ineq3}
\begin{align}
| \langle W - \mbox{\upshape sgn}[\Lc (\xb)], \Lc (\hb) \rangle |
{} & = | \langle \Pc_T (W - \mbox{\upshape sgn}[\Lc (\xb)]), \Lc (\hb) \rangle + \langle \Pc_{T^\perp} (W - \mbox{\upshape sgn}[\Lc (\xb)]), \Lc (\hb) \rangle | \nonumber \\
{} & \leq | \langle \Pc_T (W - \mbox{\upshape sgn}[\Lc (\xb)]), \Lc (\hb) \rangle | + | \langle \Pc_{T^\perp} (W), \Lc (\hb) \rangle | \label{eq:pf_lemma_uniqueness:ineq3a} \\
{} & \leq \norm{\Pc_T (W - \mbox{\upshape sgn}[\Lc (\xb)])}_{\mathrm{F}} \norm{\Pc_T \Lc (\hb)}_{\mathrm{F}}
+ \norm{\Pc_{T^\perp} (W)} \norm{\Pc_{T^\perp} \Lc (\hb)}_* \label{eq:pf_lemma_uniqueness:ineq3b} \\
{} & \leq \frac{1}{7 n} \norm{\Pc_T \Lc (\hb)}_{\mathrm{F}} + \frac{1}{2} \norm{\Pc_{T^\perp} \Lc (\hb)}_*, \label{eq:pf_lemma_uniqueness:ineq3c}
\end{align}
\end{subequations}
where \eqref{eq:pf_lemma_uniqueness:ineq3a} holds by the triangle inequality and the fact that $\Pc_{T^\perp} \Lc (\xb) = 0$;
\eqref{eq:pf_lemma_uniqueness:ineq3b} by H\"older's inequality;
\eqref{eq:pf_lemma_uniqueness:ineq3c} by the assumptions on $W$ in \eqref{eq:dualcert_sgn} and \eqref{eq:dualcert_bnd}.

We continue by applying \eqref{eq:pf_lemma_uniqueness:ineq3} to \eqref{eq:pf_lemma_uniqueness:ineq2} and get
\begin{align*}
\norm{\Lc (\xb) + \Lc (\hb)}_*
{} & \geq \norm{\Lc (\xb)}_* - \frac{1}{7 n} \norm{\Pc_T \Lc (\hb)}_{\mathrm{F}} + \frac{1}{2} \norm{\Pc_{T^\perp} \Lc (\hb)}_* \\
{} & \geq \norm{\Lc (\xb)}_* - \frac{3}{7} \norm{\Pc_{T^\perp} \Lc (\hb)}_{\mathrm{F}} + \frac{1}{2} \norm{\Pc_{T^\perp} \Lc (\hb)}_{\mathrm{F}} \\
{} & = \norm{\Lc (\xb)}_* + \frac{1}{14} \norm{\Pc_{T^\perp} \Lc (\hb)}_{\mathrm{F}},
\end{align*}
where the second step follows from \eqref{eq:case1}.

Then, $\norm{\Lc (\hat{\xb})}_* \geq \norm{\Lc (\xb)}_* \geq \norm{\Lc (\hat{\xb})}_*$, which implies $\Pc_{T^\perp} \Lc (\hb) = 0$.
By (\ref{eq:case1}), we also have $\Pc_T \Lc (\hb) = 0$.
Therefore, it follows that $\Lc (\hb) = 0$.

\noindent\textbf{Case 2:} Next, we consider the complementary case when $\Lc (\hb)$ satisfies
\begin{equation}
\label{eq:case2}
\norm{\Pc_T \Lc (\hb)}_{\mathrm{F}} {\geq} 3 n \norm{\Pc_{T^\perp} \Lc (\hb)}_{\mathrm{F}}.
\end{equation}

Note that \eqref{eq:pf_lemma_uniqueness:vanish} implies $\Ac_\Omega \Lc (\hb) = 0$.
Then together with $(\id - \Ac) \Lc = 0$, we get
\begin{align*}
\left(\frac{n}{m} \Ac_\Omega + \id - \Ac\right) \Lc (\hb) = 0,
\end{align*}
which implies
\begin{equation}
\label{eq:pf_lemma_uniqueness:ineq4}
\begin{aligned}
0 {} & = \left\langle \Pc_T \Lc (\hb), \left(\frac{n}{m} \Ac_\Omega + \id - \Ac\right) \Lc (\hb) \right\rangle \\
{} & = \left\langle \Pc_T \Lc (\hb), \left(\frac{n}{m} \Ac_\Omega + \id - \Ac\right) \Pc_T \Lc (\hb) \right\rangle \\
{} & \quad + \left\langle \Pc_T \Lc (\hb), \left(\frac{n}{m} \Ac_\Omega + \id - \Ac\right) \Pc_{T^\perp} \Lc (\hb) \right\rangle.
\end{aligned}
\end{equation}

The magnitude of the first term in the right-hand-side of \eqref{eq:pf_lemma_uniqueness:ineq4} is lower-bounded by
\begin{equation}
\label{eq:pf_lemma_uniqueness:lb}
\begin{aligned}
{} & \left| \left\langle \Pc_T \Lc (\hb), \left(\frac{n}{m} \Ac_\Omega + \id - \Ac\right) \Pc_T \Lc (\hb) \right\rangle \right| \\
{} & = \left| \langle \Pc_T \Lc (\hb), \Pc_T \Lc (\hb) \rangle \right|
- \left| \left\langle \Pc_T \Lc (\hb), \left(\Ac - \frac{n}{m} \Ac_\Omega\right) \Pc_T \Lc (\hb) \right\rangle \right| \\
{} & \geq \norm{\Pc_T \Lc (\hb)}_{\mathrm{F}}^2 - \left\|\Pc_T \Ac \Pc_T - \frac{n}{m} \Pc_T \Ac_\Omega \Pc_T\right\| \norm{\Pc_T \Lc (\hb)}_{\mathrm{F}}^2 \\
{} & \geq \frac{1}{2} \norm{\Pc_T \Lc (\hb)}_{\mathrm{F}}^2,
\end{aligned}
\end{equation}
where the last step follows from the assumption in \eqref{eq:local_isometry_rank_deficient}.

Next, we derive an upper bound on the second term in the right-hand-side of \eqref{eq:pf_lemma_uniqueness:ineq4}.
Since $\Ac_{\omega_j}$ is an orthogonal projection for $j \in [m]$,
the operator norm of $\frac{n}{m} \Ac_\Omega + \id - \Ac$ is upper-bounded by
\begin{equation}
\label{eq:pf_lemma_uniqueness:ineq5}
\begin{aligned}
\left\| \frac{n}{m} \Ac_\Omega + \id - \Ac \right\|
{} & \leq \frac{n}{m} \left( \norm{\Ac_{\omega_1} + \id - \Ac} + \sum_{j=2}^m \norm{\Ac_{\omega_j}}_{\mathrm{F}} \right) \\
{} & \leq \frac{n}{m} \left( \max(\norm{\Ac_{\omega_1}}, \norm{\id - \Ac}) + \sum_{j=2}^m \norm{\Ac_{\omega_j}}_{\mathrm{F}} \right) \\
{} & \leq n,
\end{aligned}
\end{equation}
where the second step follows since $\Ac_{\omega_1} (\id - \Ac) = 0$.

The second term in the right-hand-side of \eqref{eq:pf_lemma_uniqueness:ineq4} is then upper-bounded by
\begin{equation}
\label{eq:pf_lemma_uniqueness:ub}
\begin{aligned}
{} & \left| \left\langle \Pc_T \Lc (\hb), \left(\frac{n}{m} \Ac_\Omega + \id - \Ac\right) \Pc_{T^\perp} \Lc (\hb) \right\rangle \right| \\
{} & \leq \left\| \frac{n}{m} \Ac_\Omega + \id - \Ac \right\| \norm{\Pc_T \Lc (\hb)}_{\mathrm{F}} \norm{\Pc_{T^\perp} \Lc (\hb)}_{\mathrm{F}} \\
{} & \leq n \norm{\Pc_T \Lc (\hb)}_{\mathrm{F}} \norm{\Pc_{T^\perp} \Lc (\hb)}_{\mathrm{F}},
\end{aligned}
\end{equation}
where the last step follows from \eqref{eq:pf_lemma_uniqueness:ineq5}.

Applying \eqref{eq:pf_lemma_uniqueness:lb} and \eqref{eq:pf_lemma_uniqueness:ub} to \eqref{eq:pf_lemma_uniqueness:ineq4} provides
\begin{align*}
0 {} & = \left| \left\langle \Pc_T \Lc (\hb), \left(\frac{n}{m} \Ac_\Omega + \id - \Ac\right) \Pc_T \Lc (\hb) \right\rangle \right| \\
{} & \quad - \left| \left\langle \Pc_T \Lc (\hb), \left(\frac{n}{m} \Ac_\Omega + \id - \Ac\right) \Pc_{T^\perp} \Lc (\hb) \right\rangle \right| \\
{} & \geq \frac{1}{2} \norm{\Pc_T \Lc (\hb)}_{\mathrm{F}}^2 - n \norm{\Pc_T \Lc (\hb)}_{\mathrm{F}} \norm{\Pc_{T^\perp} \Lc (\hb)}_{\mathrm{F}} \\
{} & \geq \frac{1}{2} \norm{\Pc_T \Lc (\hb)}_{\mathrm{F}}^2 - \frac{1}{3} \norm{\Pc_T \Lc (\hb)}_{\mathrm{F}}^2 \\
{} & = \frac{1}{6} \norm{\Pc_T \Lc (\hb)}_{\mathrm{F}}^2 \geq 0,
\end{align*}
where the second inequality holds by \eqref{eq:case2}.

Then, it follows that $\Pc_T \Lc (\hb) = 0$. By (\ref{eq:case2}), we also have $\Pc_{T^\perp} \Lc (\hb) = 0$.
Therefore, $\Lc (\hb) = 0$, which completes the proof.

\subsection{Proof of Lemma~\ref{lem:ref7}}
\label{subsec:pf:lem:ref7}
The proof is obtained by slightly modifying that of \cite[Lemma~7]{chen2014robust}.

The first upper bound in \eqref{eq:lem:ref7:a} is derived as follows:
\begin{align*}
\norm{U V^*}_{\Ac,\infty}
{} & = \max_{1 \leq k \leq n} \left| \langle A_k, U V^* \rangle \right| \norm{A_k} \\
{} & = \max_{1 \leq k \leq n} \frac{\left| \sum_{(i,j) \in \text{supp}(A_k)} [U V^*]_{i,j} \right|}{\norm{A_k}_0} \\
{} & \leq \max_{1 \leq k \leq n} \max_{(i,j) \in \text{supp}(A_k)} |[U V^*]_{i,j}| \\
{} & = \max_{1 \leq i \leq n_1} \max_{1 \leq j \leq n_2} |[U V^*]_{i,j}| \\
{} & = \max_{1 \leq i \leq n_1} \max_{1 \leq j \leq n_2} |\eb_i^* U V^* \eb_j| \\
{} & = \max_{1 \leq i \leq n_1} \norm{U^* \eb_i}_2 \max_{1 \leq j \leq n_2} \norm{V^* \eb_j}_2 \\
{} & \leq \frac{\mu r}{\sqrt{n_1 n_2}} \leq \frac{\mu r}{\min(n_1,n_2)}.
\end{align*}
This proves \eqref{eq:lem:ref7:a}.

Next, to prove \cref{eq:lem:ref7:b,eq:lem:ref7:c}, we use the following lemma.

\begin{lemma}
\label{lem:Ac2bnd}
Let $M \in \mathbb{C}^{n_1 \times n_2}$. Then,
\beq
\label{eq:pf:lem:ref7:ineq}
\norm{M}_{\Ac,2}^2
\leq
\max \left( \max_{1 \leq i \leq n_1} \norm{\eb_i^* M}_2^2, ~ \max_{1 \leq j \leq n_2} \norm{M \eb_j}_2^2 \right)
\eeq
\end{lemma}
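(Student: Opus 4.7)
The plan is to bound $\norm{M}_{\Ac,2}^2$ by reducing it to a weighted entry‑wise sum of $|[M]_{i,j}|^2$ and then accumulating the sum either along rows or along columns of $M$. First I will use the explicit description of the basis $\{A_k\}$ in \eqref{eq:const_nonzero}, which says every nonzero entry of $A_k$ equals $\norm{A_k}_0^{-1/2}$, to write
\[
\langle A_k,M\rangle \;=\; \frac{1}{\sqrt{\norm{A_k}_0}}\sum_{(i,j)\in S_k}[M]_{i,j},\qquad S_k := \mathrm{supp}(A_k).
\]
Cauchy--Schwarz on the inner sum, together with the fact that $A_k$ is a scaled partial permutation and hence $\norm{A_k}^2 \leq 1/\norm{A_k}_0$, then yields
\[
|\langle A_k,M\rangle|^2\,\norm{A_k}^2 \;\leq\; \frac{1}{\norm{A_k}_0}\,\norm{P_{S_k}M}_{\mathrm{F}}^2 \;=\; \frac{1}{\norm{A_k}_0}\sum_{(i,j)\in S_k}|[M]_{i,j}|^2.
\]

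Next I sum over $k$ and invoke the disjointness of the supports $\{S_k\}_{k=1}^n$ (so that $\{S_k\}$ partitions the index set $[n_1]\times[n_2]$) to obtain
\[
\norm{M}_{\Ac,2}^2 \;\leq\; \sum_{k=1}^n\frac{1}{\norm{A_k}_0}\sum_{(i,j)\in S_k}|[M]_{i,j}|^2 \;=\; \sum_{(i,j)}\frac{|[M]_{i,j}|^2}{\norm{A_{k(i,j)}}_0},
\]
where $k(i,j)$ is the unique index with $(i,j)\in S_{k(i,j)}$. The final step is to accumulate this weighted sum either along the rows or along the columns of $M$. Because each $A_k$ has at most one nonzero per row and per column (property \eqref{eq:rowcolsp}), the weights $1/\norm{A_{k(i,j)}}_0$ along any fixed row act as a partition of unity and give a contribution at most $\norm{\eb_i^*M}_2^2$; symmetrically, grouping by columns gives a contribution at most $\norm{M\eb_j}_2^2$. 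Taking whichever grouping is tighter bounds $\norm{M}_{\Ac,2}^2$ by the right‑hand side of \eqref{eq:pf:lem:ref7:ineq}.

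In the wrap‑around setting this last step is essentially immediate, because \eqref{eq:samesparseAk} gives $\norm{A_k}_0 = \min(n_1,n_2)$ for every $k$, so the weighted entry‑wise sum collapses to $\norm{M}_{\mathrm{F}}^2/\min(n_1,n_2)$, which is at most $\max_j\norm{M\eb_j}_2^2$ when $n_2 = \min(n_1,n_2)$ and at most $\max_i\norm{\eb_i^*M}_2^2$ otherwise. The main obstacle lies in the ordinary Hankel basis for $\Hc(n,d)$, where $\norm{A_k}_0$ varies with $k$ along the short anti‑diagonals near the two corners of the matrix: on a single row or column the weights $1/\norm{A_{k(i,j)}}_0$ are not identical, so I expect to split the sum into the full‑length ``middle'' anti‑diagonals (of length $\min(n_1,n_2)$) and the two triangular corners of short anti‑diagonals, estimate each piece by either the row or column norm via a weighted Cauchy--Schwarz inequality tailored to the anti‑diagonal weights, and then combine the two pieces to recover the maximum stated in the lemma.
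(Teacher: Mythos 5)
Your reduction to the weighted entrywise sum $\sum_{(i,j)} |[M]_{i,j}|^2/\norm{A_{k(i,j)}}_0$ is exactly the paper's first half (Cauchy--Schwarz on each support, $\norm{A_k}^2 \leq 1/\norm{A_k}_0$, and disjointness of the supports), and your wrap-around paragraph finishes exactly as the paper does: by \eqref{eq:samesparseAk} every weight equals $1/\min(n_1,n_2)$, so the sum collapses to $\norm{M}_{\mathrm{F}}^2/\min(n_1,n_2)$, which is an average of row (or column) squared norms and hence bounded by the stated maximum. Up to that point you and the paper coincide. However, your intermediate ``partition of unity'' step is not correct as stated: from $\norm{A_k}_0 \geq 1$ alone, grouping by rows gives $\sum_{i} \norm{\eb_i^* M}_2^2 = \norm{M}_{\mathrm{F}}^2$, not $\max_i \norm{\eb_i^* M}_2^2$; passing from a sum over rows to a maximum requires the weights to be uniformly $1/\min(n_1,n_2)$, which is precisely the wrap-around property \eqref{eq:samesparseAk} and nothing weaker.

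More importantly, the ``main obstacle'' you defer --- establishing \eqref{eq:pf:lem:ref7:ineq} for the ordinary Hankel basis of $\Hc(n,d)$ by splitting off the short corner anti-diagonals --- cannot be overcome, because the inequality is false for that basis. Take $M = \eb_1\eb_1^* + \eb_{n_1}\eb_{n_2}^*$ with $n_1,n_2 \geq 2$. The two basis elements supported on the corner entries (corresponding to $x[0]$ and $x[n-1]$, each of which appears exactly once in the Hankel matrix) have $\norm{A_k}_0 = 1$ and $\norm{A_k} = 1$, so $\norm{M}_{\Ac,2}^2 = 2$, while every row and column of $M$ has squared norm $1$, making the right-hand side equal to $1$. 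This is why the paper proves and invokes Lemma~\ref{lem:Ac2bnd} only inside Lemma~\ref{lem:ref7}, i.e., under \eqref{eq:samesparseAk}; the variable-sparsity Hankel case is instead handled by Lemma~\ref{lem:ref7cc} (Chen and Chi's Lemma~7) at the cost of an extra $\log^2 n$ factor, and this distinction is exactly the gap between $\alpha = 2$ and $\alpha = 4$ in Theorem~\ref{thm:uniqueness}. The fix is to read \eqref{eq:samesparseAk} as a standing hypothesis of the lemma and stop after your wrap-around paragraph; the planned corner/middle splitting is a dead end.
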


\begin{proof}[Proof of Lemma~\ref{lem:Ac2bnd}]
See Appendix~\ref{subsec:pf:lem:Ac2bnd}.
\end{proof}

Then, \eqref{eq:lem:ref7:b} is proved as follows: Since $U$ and $V$ are unitary matrices, we have
\[
\norm{\eb_i^* U V^*}_{\mathrm{F}} = \norm{U^* \eb_i}_2
\quad \text{and} \quad
\norm{U V^* \eb_j}_{\mathrm{F}} = \norm{V^* \eb_j}_2
\]
for all $1 \leq i \leq n_1$ and for all $1 \leq j \leq n_2$. Thus,
\beq
\label{eq:pf:lem:ref7:bnd1}
\max\left(
\max_{1 \leq i \leq n_1} \norm{\eb_i^* U V^*}_{\mathrm{F}}^2, ~
\max_{1 \leq j \leq n_2} \norm{U V^* \eb_j}_{\mathrm{F}}^2
\right)
\leq \frac{\mu r}{\min(n_1,n_2)}.
\eeq
Then \eqref{eq:lem:ref7:b} follows by applying \eqref{eq:pf:lem:ref7:bnd1} to Lemma~\ref{lem:Ac2bnd} with $M = U V^*$.

Lastly, we prove \eqref{eq:lem:ref7:c}. By definition of $\Pc_T$,
\beq
\label{eq:pf:lem:ref7:decomp}
\begin{aligned}
\norm{\eb_i^* \left[ \Pc_T \left(\norm{A_k}_0^{1/2} A_k\right) \right]}_{\mathrm{F}}^2
{} & \leq 3 \norm{\eb_i^* U U^* \norm{A_k}_0^{1/2} A_k}_{\mathrm{F}}^2 \\
{} & \quad + 3 \norm{\eb_i^* \norm{A_k}_0^{1/2} A_k V V^*}_{\mathrm{F}}^2 \\
{} & \quad + 3 \norm{\eb_i^* U U^* \norm{A_k}_0^{1/2} A_k V V^*}_{\mathrm{F}}^2,
\end{aligned}
\eeq
for all $i \in \{1,\ldots,n_1\}$.
The first term in the right-hand-side of \eqref{eq:pf:lem:ref7:decomp} is upper-bounded by
\beq
\label{eq:pf:lem:ref7:ub1}
\norm{\eb_i^* U U^* \norm{A_k}_0^{1/2} A_k}_{\mathrm{F}}^2
\leq \norm{\eb_i^* U}_2^2 \norm{\norm{A_k}_0^{1/2} A_k}^2 \leq \frac{\mu r}{n_1},
\eeq
where the last step follows from $\norm{A_k} \leq \norm{A_k}_0^{-1/2}$.
Since $\norm{V V^*} \leq 1$, the first term dominates the third term in the right-hand-side of \eqref{eq:pf:lem:ref7:decomp}.
Note that $\norm{A_k}_0^{1/2} A_k$ is a submatrix of a permutation matrix.
Therefore, $\eb_i^* \norm{A_k}_0^{1/2} A_k = \eb_j^*$ for some $j \in \{1,\ldots,n_1\}$.
Then, the second term in the right-hand-side of \eqref{eq:pf:lem:ref7:decomp} is upper-bounded by
\beq
\label{eq:pf:lem:ref7:ub2}
\norm{\eb_i^* \norm{A_k}_0^{1/2} A_k V V^*}_{\mathrm{F}}^2
= \norm{\eb_j^* V V^*}_{\mathrm{F}}^2 \leq \frac{\mu r}{n_2}.
\eeq

Plugging \cref{eq:pf:lem:ref7:ub1,eq:pf:lem:ref7:ub2} to \eqref{eq:pf:lem:ref7:decomp} provides
\beq
\label{eq:pf:lem:ref7:bnd2a}
\max_{1 \leq i \leq n_1} \norm{\eb_i^* \left[ \Pc_T \left(\norm{A_k}_0^{1/2} A_k\right) \right]}_{\mathrm{F}}^2
\leq \frac{9 \mu r}{\min(n_1,n_2)}.
\eeq

By symmetry, we also get
\beq
\label{eq:pf:lem:ref7:bnd2b}
\max_{1 \leq j \leq n_2} \norm{\left[ \Pc_T \left(\norm{A_k}_0^{1/2} A_k\right) \right] \eb_j}_{\mathrm{F}}^2
\leq \frac{9 \mu r}{\min(n_1,n_2)}.
\eeq

Applying \cref{eq:pf:lem:ref7:bnd2a,eq:pf:lem:ref7:bnd2b} to Lemma~\ref{lem:Ac2bnd} with $M = \Pc_T \left(\norm{A_k}_0^{1/2} A_k\right)$
completes the proof.

\subsection{Proof of Lemma~\ref{lem:Ac2bnd}}
\label{subsec:pf:lem:Ac2bnd}
The inequality in \eqref{eq:pf:lem:ref7:ineq} is proved as follows:
\begin{align*}
\norm{M}_{\Ac,2}^2
{} & = \sum_{k=1}^n \left| \langle A_k, M \rangle \right|^2 \norm{A_k}^2 \\
{} & = \sum_{k=1}^n \frac{\left| \sum_{(i,j) \in \text{supp}(A_k)} [M]_{i,j} \right|^2}{\norm{A_k}_0} \norm{A_k}^2 \\
{} & \leq \sum_{k=1}^n \frac{\left( \sum_{(i,j) \in \text{supp}(A_k)} |[M]_{i,j}| \right)^2}{\norm{A_k}_0} \norm{A_k}^2 \\
{} & \leq \sum_{k=1}^n \sum_{(i,j) \in \text{supp}(A_k)} |[M]_{i,j}|^2 \norm{A_k}^2 \\
{} & \leq \frac{1}{\min\{n_1,n_2\}} \sum_{k=1}^n \sum_{(i,j) \in \text{supp}(A_k)} |[M]_{i,j}|^2 \\
{} & = \frac{1}{\min\{n_1,n_2\}} \sum_{i=1}^{n_1} \sum_{j=1}^{n_2} |[M]_{i,j}|^2 \\
{} & \leq \max \left( \frac{1}{n_1} \sum_{i=1}^{n_1} \norm{\eb_i^* M}_2^2 ,~ \frac{1}{n_2} \sum_{j=1}^{n_2} \norm{M \eb_j}_2^2 \right) \\
{} & \leq \max \left( \max_{1 \leq i \leq n_1} \norm{\eb_i^* M}_2^2, ~ \max_{1 \leq j \leq n_2} \norm{M \eb_j}_2^2 \right),
\end{align*}
where the third inequality follow from \eqref{eq:samesparseAk}.


\end{document}